\newtheorem{theorem}{Theorem}
\newtheorem{lem}[theorem]{Lemma}
\newtheorem{cor}[theorem]{Corollary}
\newtheorem{definition}[theorem]{Definition}
\newtheorem{prop}[theorem]{Proposition}
\newtheorem{claim}[theorem]{Claim}
\newif{\ifappendix}
\newcommand{\leaves}{\mathrm{leaves}}
\newcommand{\val}{\mathrm{val}}
\newcommand{\rev}{\mathrm{rev}}
\newcommand{\cost}{\mathrm{cost}}
\newcommand{\unfair}{\mathrm{unfair}}
\newcommand{\simi}{s}
\newcommand{\vweight}{m}
\newcommand{\dist}{d}
\newcommand{\fsize}{m_f}
\newcommand{\costl}{\ell}
\newcommand{\costm}{t}
\newcommand{\cy}{\mathcal{Y}}
\newcommand{\eps}{\epsilon}
\newcommand{\totr}{r_t}
\newcommand{\totb}{b_t}
\newcommand{\hc}{hierarchical clustering }
\newcommand{\weight}{m}
\newenvironment{proofof}[1]{\smallskip\noindent{\bf Proof of #1}}%
        {\hspace*{\fill}$\qed$\par}
\newcommand{\marina}[1]{\textcolor{red}{Marina: #1}}
\newcommand{\yuyan}[1]{\textcolor{blue}{Yuyan: #1}}
\newif\ifappendix
\newcommand{\refappendix}[1]{\ifappendix
 Appendix~\ref{#1}\xspace
\else
 the supplementary material\xspace
\fi}
\newcommand{\citet}{\cite}
\newcommand{\citep}{\cite}
\newcommand{\para}[1]{\smallskip\noindent{\bf #1.}\/}
\title{Fair Hierarchical Clustering}
\author{%
  Sara Ahmadian \\
  Google\\
  \texttt{sahmadian@google.com} \\
  \And
  Alessandro Epasto \\
  Google \\
  \texttt{aepasto@google.com} \\
  \AND
  Marina Knittel \\
 University of Maryland\\
  \texttt{mknittel@cs.umd.edu} \\
  \And
  Ravi Kumar \\
  Google \\
  \texttt{tintin@google.com} \\
  \And
  Mohammad Mahdian  \\
  Google \\
  \texttt{mahdian@google.com} \\
  \And
  Benjamin Moseley \\
  Carnegie Mellon University \\
  \texttt{moseleyb@andrew.cmu.edu} \\
  \And
  Philip Pham \\
  Google \\ 
  \texttt{phillypham@google.com}\\
  \And
  Sergei Vassilvtiskii \\
  Google \\
  \texttt{sergeiv@google.com} \\
  \And
  Yuyan Wang \\
  Carnegie Mellon University \\
  \texttt{yuyanw@andrew.cmu.edu}
}
\begin{document}

\maketitle

\begin{abstract}
As machine learning has become more prevalent, researchers have begun to recognize the necessity of ensuring machine learning systems are fair. Recently, there has been an interest in defining a notion of fairness that mitigates over-representation in traditional clustering.

In this paper we extend this notion to  hierarchical clustering, where the goal is to recursively partition the data to optimize a specific objective.  For various natural objectives, we obtain simple, efficient algorithms to find a provably good fair hierarchical clustering.  Empirically, we show that our algorithms can find a fair hierarchical clustering, with only a negligible loss in the objective.  

\end{abstract}

\section{Introduction}
Algorithms and machine learned models are increasingly used to assist in decision making on a wide range of issues, from mortgage approval to court sentencing recommendations~\cite{kleinberg2017human}. It is clearly undesirable, and in many cases illegal, for models to be biased to groups, for instance to discriminate on the basis of race or religion. Ensuring that there is no bias is not as easy as removing these protected categories from the data. Even without them being explicitly listed, the correlation between sensitive features and the rest of the training data may still cause the algorithm to be biased. This has led to an emergent literature on computing provably fair outcomes (see the book~\citet{hardtbook}).

The prominence of clustering in data analysis, combined with its use for data segmentation, feature engineering, and visualization makes it  critical that efficient fair clustering methods are developed. There has been a flurry of recent results in the ML research community, proposing algorithms for fair {\em flat} clustering, i.e., partitioning a dataset into a set of disjoint clusters, as captured by \textsc{k-center}, \textsc{k-median}, \textsc{k-means}, correlation clustering objectives~\citep{ahmadian,ahmadian2020fair,backurs, bera2019fair,bercea2018cost, chen2019proportionally,chiplunkar2020fair,huang2019coresets,jones2020fair,kleindessner2, spectralfair}. However, the same issues affect {\em hierarchical clustering}, which is the problem we study. 

The input to the \hc problem is a set of data points, with pairwise similarity or dissimilarity scores. A \hc is a tree, whose leaves correspond to the individual datapoints. Each internal node represents a cluster containing all the points in the leaves of its subtree. Naturally, the cluster at an internal node is the union of the clusters given by its children. Hierarchical clustering is widely used in data analysis~\cite{dubes1980clustering}, social  networks~\cite{mann2008use,rajaraman2011mining}, image/text organization~\cite{karypis2000comparison}. 

Hierarchical clustering is frequently used for flat clustering when the number of clusters is unknown ahead of time. A hierarchical clustering yields a set of clusterings at different granularities that are consistent with each other. Therefore, in all clustering problems where fairness is desired but the number of clusters is unknown, fair hierarchical clustering is useful.  As concrete examples, consider a collection of news articles organized by a topic hierarchy, where we wish to ensure that no single source or view point is over-represented in a cluster; or a hierarchical division of a geographic area, where the sensitive attribute is gender or race, and we wish to ensure balance in every level of the hierarchy.  There are many such  problems that benefit from fair hierarchical clustering, motivating the study of the problem area.

\para{Our contributions} We initiate an algorithmic study of fair hierarchical clustering.
We build on Dasgupta's seminal formal treatment of  hierarchical clustering~\citet{dasgupta} and prove our results for the revenue~\cite{moseleywang}, value~\cite{cohenaddad}, and cost~\cite{dasgupta} objectives in his framework.  

To achieve fairness, we show how to extend the  fairlets machinery, introduced by~\citet{chierichetti} and extended by~\citet{ahmadian}, to this problem.  We then investigate the  complexity of finding a good fairlet decomposition, giving both strong computational lower bounds and polynomial time approximation algorithms. 


Finally, we conclude with an empirical evaluation of our approach. We show that ignoring protected attributes when performing \hc can lead to unfair clusters. On the other hand, adopting the fairlet framework in conjunction with the approximation algorithms we propose yields {\em fair} clusters with a {\em negligible} objective degradation.

\para{Related work} 
Hierarchical clustering has received increased attention over the past few years. Dasgupta~\citet{dasgupta} developed a cost function objective for data sets with similarity scores, where similar points are encouraged to be clustered together lower in the tree. Cohen-Addad et al.~\citet{cohenaddad} generalized these results into a class of optimization functions that possess other desirable properties and introduced their own value objective in the dissimilarity score context. 
In addition to validating their objective on inputs with known ground truth, they gave a theoretical justification for the average-linkage algorithm, one of the most popular algorithms used in practice, as a constant-factor approximation for value. 
Contemporaneously, Moseley and Wang \citep{moseleywang} designed a revenue objective function based on  the work of Dasgupta for point sets with similarity scores and showed the average-linkage algorithm is a constant approximation for this objective as well. This work was further improved by Charikar \citet{charikar19} who gave a tighter analysis of average-linkage  for Euclidean data for this objective and \cite{AhmadianBisect,YossiNogaHC} who improved the approximation ratio in the general case. 

In parallel to the new developments in algorithms for hierarchical clustering, there has been tremendous development in the area of fair machine learning. We refer the reader to a recent textbook ~\citep{hardtbook} for a rich overview, and focus here on progress for fair clustering.  Chierichetti et al.~\citet{chierichetti} first defined fairness for $k$-median and $k$-center clustering, and introduced the notion of {\em fairlets} to design efficient algorithms. Extensive research has focused on two topics:  adapting the definition of fairness to broader contexts, and designing efficient algorithms for finding good fairlet decompositions. For the first topic, the fairness definition was extended to multiple values for the protected feature~\citet{ahmadian, bercea2018cost,rosner}.  For the second topic, Backurs et al.~\citet{backurs} proposed a near-linear constant approximation algorithm for finding fairlets for $k$-median, Kleindessner et al.~\citet{kleindessner2} designed a linear time constant approximation algorithm for $k$-center, Bercea et al.~\citet{bercea2018cost} developed methods for fair $k$-means, while Ahmadian et al.~\cite{ahmadian2020fair} defined approximation algorithms for fair correlation clustering. Concurrently with our work, Chhabra et al.~\cite{chhabra2020fair} introduced a possible approach to ensuring fairness in hierarchical clustering. However, their fairness definition differs from ours (in particular, they do not ensure that all levels of the tree are fair), and the methods they introduce are heuristic, without formal fairness or quality guarantees. 


Beyond clustering, the same balance notion that we use has been utilized to capture fairness in other contexts, for instance: fair voting~\citep{voting}, fair optimization~\citep{matroids}, as well as other problems~\citep{Ranking}. 
\vspace{-0.1in}
\section{Formulation}

\subsection{Objectives for hierarchical clustering}

Let $G = (V, s)$ be an input instance, where $V$ is a set of $n$ data points, and $s:V^2 \rightarrow \mathbb{R}^{\geq 0}$ is a similarity function over vertex pairs. For two sets, $A, B \subseteq V$, we let $s(A, B) = \sum_{a \in A, b\in B} s(a, b)$ and $S(A) = \sum_{\{i,j\} \in A^2}s(i,j)$. For problems where the input is  $G = (V, d)$, with $d$ a distance function, we define $d(A,B)$ and $d(A)$ similarly. 
We also consider the \emph{vertex-weighted} versions of the problem, i.e. $G = (V, s, \vweight)$ (or $G = (V, d, \vweight)$), where $\vweight : V \rightarrow \mathbb{Z}^{+}$ is a weight function on the vertices.  The vertex-unweighted version can be interpreted as setting $\vweight(i) = 1, \forall i \in V$.  For $U \subseteq V$, we use the notation $\vweight(U) = \sum_{i \in U} \vweight(i)$.

A \emph{hierarchical clustering} of $G$ is a tree whose leaves correspond to $V$ and whose internal vertices represent the merging of vertices (or clusters) into larger clusters until all data merges at the root.  The goal of hierarchical clustering is to build a tree to optimize some objective.  

To define these objectives formally, we need some notation.  Let $T$ be a hierarchical clustering tree of $G$.  For two leaves $i$ and $j$, we say $i\lor j$ is their least common ancestor.
For an internal vertex $u$ in $T$, let $T[u]$ be the subtree in $T$ rooted at $u$.  Let $\leaves(T[u])$ be the leaves of $T[u]$.

We consider three different objectives---{\em revenue}, {\em value}, and {\em cost}---based on the seminal framework of~\citet{dasgupta}, and generalize them to the vertex-weighted case.

\textbf{Revenue.}  Moseley and Wang~\citet{moseleywang} introduced the revenue objective for hierarchical clustering.  Here the input instance is of the form $G = (V, \simi, m)$, where $\simi: V^2 \rightarrow \mathbb{R}^{\geq 0}$ is a \emph{similarity} function.
%
\begin{definition}[Revenue]
\label{def:rev}
The \emph{revenue} ($\rev$) of a tree $T$ for an instance $G = (V, \simi, \vweight)$, where $\simi(\cdot, \cdot)$ denotes similarity between data points, is:
 $\rev_G(T) = \sum_{i,j\in V} \simi(i,j) \cdot \big(\vweight(V) - \vweight(\leaves(T[i\lor j])) \big).$
\end{definition}
Note that in this definition, each weight is scaled by (the vertex-weight of) the non-leaves.  The goal is to find a tree of maximum revenue.  It is known that average-linkage is a $\nicefrac{1}{3}$-approximation for vertex-unweighted revenue~\cite{moseleywang}; the state-of-the-art is a $0.585$-approximation~\cite{YossiNogaHC}.

As part of the analysis, there is an upper
bound for the revenue objective~\citet{cohenaddad,moseleywang}, which is easily extended to the vertex-weighted setting:
\begin{equation}
    \label{eq:maxrevenue}
    \rev_G(T) \leq \Big( \vweight(V) - \min_{u, v \in V, u \neq v} \vweight(\{u, v\}) \Big) \cdot \simi(V).
\end{equation}
Note that in the vertex-unweighted case, the upper bound is just $(|V| - 2) s(V)$.

\textbf{Value.}
A different objective was proposed by Cohen-Addad et al.~\citet{cohenaddad}, using distances instead of similarities.   Let $G=(V,\dist,m)$, where $\dist:V^2\to\mathbb{R}^{\geq0}$ is a distance (or dissimilarity) function.
\begin{definition}[Value]
\label{def:val}
The \emph{value} ($\val$) of a tree $T$ for an instance $G = (V, \dist, \vweight)$ where $\dist(\cdot, \cdot)$ denotes distance is: $\val_G(T) = \sum_{i,j\in V}\dist(i,j) \cdot \vweight(\leaves(T[i\lor j])).$
\end{definition}
As in revenue, we aim to find a hierarchical clustering to maximize value.
Cohen-Addad et al.~\citet{cohenaddad} showed that both average-linkage and a locally $\epsilon$-densest cut algorithm achieve a $\nicefrac{2}{3}$-approximation for vertex-unweighted value.   They also provided an upper bound for value, much like that in \eqref{eq:maxrevenue}, which in the vertex-weighted context, is:
\begin{equation}
    \label{eq:maxvalue}
    \val_G(T) \leq \vweight(V) \cdot \simi(V).
\end{equation}
\textbf{Cost.}
The original objective introduced by Dasgupta \citet{dasgupta} for analyzing hierarchical clustering algorithms introduces the notion of cost. 
\begin{definition}[Cost]
\label{def:cost}
The $\cost$ of a tree $T$ for an instance $G = (V, \simi)$ where $\simi(\cdot, \cdot)$ denotes similarity is: $\cost_G(T) = \sum_{i,j\in V} \simi(i,j) \cdot |\leaves(T[i\lor j])|.$

\end{definition}

The objective is to find a tree of minimum cost. From a complexity point of view, cost is a harder objective to optimize.
Charikar and Chatziafratis~\citet{charikar17} showed that cost is not constant-factor approximable under the Small Set Expansion hypothesis, and the current best approximations are $O\left(\sqrt{\log n}\right)$ and require solving SDPs.

{\bf Convention.} Throughout the paper we adopt the following convention: $\simi(\cdot, \cdot)$ will always denote similarities and $\dist(\cdot, \cdot)$ will always denote distances.  Thus, the inputs for the cost and revenue objectives will be instances of the form $(V, \simi, \vweight)$ and inputs for the value objective will be instances of the form $(V, \dist, \vweight)$.  All the missing proofs can be found in the Supplementary Material. 
\subsection{Notions of fairness}

Many definitions have been proposed for fairness in clustering.  We consider the setting in which each data point in $V$ has a \emph{color}; the color corresponds to the protected attribute.  

\textit{Disparate impact.} This notion is used to capture the fact that decisions (i.e., clusterings) should not be overly favorable to one group versus another. 
This notion was formalized by Chierichetti et al.~\citet{chierichetti} for clustering  when the protected attribute can take on one of two values, i.e., points have one of two colors. 
In their setup, the \emph{balance} of a cluster is the ratio of the minimum to the maximum number of points of any color in the cluster.  Given a balance requirement $t$, a clustering is fair if and only if each cluster has a balance of at least $t$.

\textit{Bounded representation.} A generalization of disparate impact, bounded representation focuses on mitigating the imbalance of the representation of protected classes (i.e.,  colors) in clusters and was defined by Ahmadian et al.~\citet{ahmadian}.  Given an over-representation parameter $\alpha$, a cluster is fair if the \emph{fractional representation} of each color in the cluster is at most $\alpha$, and a clustering is fair if each cluster has this property.
An interesting special case of this notion is when there are $c$ total colors and $\alpha=1/c$. In this case, we require that every color is equally represented in every cluster. We will refer to this as \emph{equal representation}. These notions enjoy the following useful property:
\begin{definition}[Union-closed]
A fairness constraint is \emph{union-closed} if for any pair of fair clusters $A$ and $B$, $A\cup B$ is also fair.
\end{definition}
This property is useful in hierarchical clustering: given a tree $T$ and internal node $u$, if each child cluster of $u$ is fair, then $u$ must also be a fair cluster.  

\begin{definition}[Fair hierarchical clustering]
For any fairness constraint, a \emph{hierarchical clustering is fair} if all of its clusters (besides the leaves) are fair.
\end{definition}
Thus, under any union-closed fairness constraint, this definition is equivalent to restricting the bottom-most clustering (besides the leaves) to be fair. Then given an objective (e.g., revenue), the goal is to find a fair hierarchical clustering that optimizes the objective.  We focus on the bounded representation fairness notion with $c$ colors and an over-representation cap $\alpha$. However, the main ideas for the revenue and value objectives work under any notion of fairness that is union-closed.

\section{Fairlet decomposition}

\begin{definition}[Fairlet~\citet{chierichetti}]
A \emph{fairlet} $Y$ is a fair set of points such that there is no partition of $Y$ into $Y_1$ and $Y_2$ with both $Y_1$ and $Y_2$ being fair. 
\end{definition}
In the bounded representation fairness setting, a set of points is fair if at most an $\alpha$ fraction of the points have the same color.  We call this an \emph{$\alpha$-capped fairlet}.  For $\alpha=1/t$ with $t$ an integer, the fairlet size will always be at most $2t-1$. We will refer to the maximum size of a fairlet by $m_f$. 

Recall that given a union-closed fairness constraint, if the bottom clustering in the tree is a layer of fairlets (which we call a \textit{fairlet decomposition} of the original dataset) the hierarchical clustering tree is also fair.  This observation gives an immediate algorithm for finding fair hierarchical clustering trees in a two-phase manner.  (i) Find a fairlet decomposition, i.e., partition  the input set $V$ into clusters $Y_1, Y_2, \ldots$ that are all fairlets.  (ii) Build a tree on top of all the fairlets.  Our goal is to complete both phases in such a way that we optimize the given objective (i.e., revenue or value). 

In Section~\ref{sec:rev}, we will see that to optimize for the revenue objective, all we need is a fairlet decomposition with bounded fairlet size. However, the fairlet decomposition required for the value objective is more nuanced.  We describe this next.

\paragraph{Fairlet decomposition for the value objective}

For the value objective, we need the total distance between pairs of points inside each fairlet to be small.  Formally, suppose $V$ is partitioned into fairlets $\mathcal{Y} = \{Y_1, Y_2, \ldots\}$ such that $Y_i$ is an $\alpha$-capped fairlet.  The cost of this decomposition is defined as:
\begin{equation}
\label{obj:fairlet_obj}
 \phi(\mathcal{Y}) = \sum_{Y \in \mathcal{Y}}\sum_{\{u,v\} \subseteq Y}\dist(u,v).
\end{equation}
Unfortunately, the problem of finding a fairlet decomposition to minimize $\phi(\cdot)$ does not admit any constant-factor approximation unless P = NP. 
\begin{theorem} \label{thm:np_hard_constant_approx}
Let $z \geq 3$ be an integer. Then 
there is no bounded approximation algorithm for finding $(\frac{z}{z+1})$-capped fairlets optimizing $\phi(\mathcal{Y})$, which runs in polynomial time, unless P = NP.
\end{theorem}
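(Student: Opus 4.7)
The plan is a gap-preserving reduction from Exact Cover by 3-Sets (X3C), which is NP-hard, to the fairlet decomposition problem with $\alpha=z/(z+1)$. The construction will produce a two-colored point set with $\{0,1\}$-valued distances so that the optimal $\phi$ equals $0$ on YES-instances and is at least $1$ on NO-instances; this gap rules out any bounded-ratio polynomial-time approximation, since such an algorithm would decide X3C.

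Given an X3C instance $(U,\mathcal{C})$ with $|U|=3q$ and $|\mathcal{C}|=p$, introduce a red point $v_u$ for each $u\in U$, a blue point $v_S$ for each $S\in\mathcal{C}$, and $z(p-q)$ red filler points $a_1,\ldots,a_{z(p-q)}$. Set the distance to be $0$ between $v_S$ and $v_u$ exactly when $u\in S$, $0$ between every $v_S$ and every filler, $0$ within $\{v_u\}_{u\in U}$, $0$ within the fillers, and $1$ on every other pair (in particular between any universe red and any filler, and between any two distinct blues). A short case check shows that for two colors under $\alpha=z/(z+1)$ the only fairlets are the \emph{one-sided} shapes $(k,1)$ and $(1,k)$ with $1\le k\le z$: any $(a,b)$ with $a\ge b\ge 2$ and $a\le zb$ admits a proper fair split by peeling off $\lceil a/b\rceil$ majority points together with one minority point.

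Classifying zero-cost fairlets under the above distances, each must contain exactly one blue $v_S$ (multiple blues incur unit cost, and a red-only set is unfair), and its reds are either all drawn from $\{v_u:u\in S\}$ (\emph{type A}) or all fillers (\emph{type B}), since mixing the two red populations incurs unit cost. If an exact cover $\mathcal{C}^*$ of size $q$ exists, pick the type-A fairlet $\{v_S\}\cup\{v_u:u\in S\}$ for each $S\in\mathcal{C}^*$ and a type-B fairlet pairing each remaining $v_S$ with $z$ fillers; the $z(p-q)$ fillers are used up exactly, giving $\phi=0$. Conversely, if $\phi=0$ with $t_a$ type-A and $t_b$ type-B fairlets (so $t_a+t_b=p$), the $3q$ universe reds must be distributed across type-A fairlets with at most $3$ each, forcing $t_a\ge q$, while the $z(p-q)$ fillers distributed across type-B fairlets of capacity $z$ force $t_b\ge p-q$. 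Combined with $t_a+t_b=p$, both bounds are tight, so $t_a=q$ and each type-A fairlet uses all $3$ elements of its set $S$, yielding an exact cover.

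The main obstacle is calibrating the filler count to precisely $z(p-q)$ so that the counting argument pins down $t_a=q$ simultaneously on both sides of the partition; with any other padding the saturation of type-A fairlets could fail, breaking the equivalence. Given this calibration, the $0$-vs-$1$ gap is immediate, and a bounded-ratio polynomial-time approximation would yield a polynomial-time algorithm for X3C, so no such algorithm exists unless P $=$ NP.
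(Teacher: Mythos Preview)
Your proof is correct and takes a different route from the paper's. The paper reduces from Partition into $k$-Cliques with $k=z-1$ (shown NP-hard via Partition into Triangles): the $(z-1)n$ graph vertices are colored red, $n$ blue singletons are added, and distances are set to $0$ on graph edges and on all red--blue pairs, $1$ elsewhere; the intended equivalence is that $\phi=0$ iff $G$ has a $(z-1)$-clique partition. You instead reduce from X3C using two red populations---universe points and $z(p-q)$ fillers---so that the double-sided counting ($t_a\ge q$ from the type-A capacity bound of $3$, $t_b\ge p-q$ from the type-B capacity bound of $z$, together with $t_a+t_b=p$) forces saturation and recovers an exact cover. The paper's construction is leaner, but its writeup simply asserts that zero-cost fairlets must contain \emph{exactly} $z-1$ reds; a priori the $\alpha$-cap only bounds the red count by $z$, so clique sizes could in principle vary within $\{1,\dots,z\}$ while still averaging $z-1$. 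Your saturation argument is what handles precisely this kind of slack, and in that respect your write-up is the more careful of the two. One small omission on your side: the filler count $z(p-q)$ tacitly assumes $p\ge q$; you should state this explicitly (X3C instances with $p<q$ are trivially NO and can be rejected before the reduction).
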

The proof proceeds by a reduction from the Triangle Partition problem, which asks if a graph $G = (V, E)$ on $3n$ vertices can be partitioned into three element sets, with each set forming a triangle in $G$. 
Fortunately, for the purpose of optimizing the value objective, it is not necessary to find an approximate decomposition. 

\section{Optimizing revenue with fairness}\label{sec:rev}

This section  considers the revenue objective.  We will obtain an approximation algorithm for this objective in three steps: (i) obtain a fairlet decomposition such that the maximum fairlet size in the decomposition is small, (ii) show that any  $\beta$-approximation algorithm to (\ref{eq:maxrevenue}) plus this fairlet decomposition can be used to obtain a (roughly) $\beta$-approximation for fair hierarchical clustering under the revenue objective, and (iii) use average-linkage, which is known to be a $\nicefrac{1}{3}$-approximation to (\ref{eq:maxrevenue}).  (We note that the recent work~\cite{AhmadianBisect,YossiNogaHC} on improved approximation algorithms compare to a bound on the optimal solution that differs from (\ref{eq:maxrevenue}) and therefore do not fit into our framework.)

First, we address step (ii).  Due to space, this proof can be found in~\refappendix{sec:thmproofsec}.
\begin{theorem} \label{thm:fair_moseley_wang}
Given an algorithm that obtains a $\beta$-approximation to (\ref{eq:maxrevenue}) where $\beta \leq 1$, and a fairlet decomposition with maximum fairlet size $m_f$, there is a $\beta \left(1-\frac{2\fsize}{n}\right)$-approximation for fair hierarchical clustering under the revenue objective.
\end{theorem}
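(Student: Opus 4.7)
The plan is to use the given fairlet decomposition $\mathcal{Y}=\{Y_1,Y_2,\ldots\}$ as the bottom layer of the tree, contract each fairlet into a single weighted vertex, and then invoke the $\beta$-approximation on the resulting weighted instance. Concretely, I would form $G' = (V', \simi', \vweight')$ with $V' = \mathcal{Y}$, $\vweight'(Y) = |Y|$, and $\simi'(Y_a, Y_b) = \sum_{i \in Y_a, j \in Y_b} \simi(i,j)$; let $T'$ be the tree produced by running the assumed algorithm on $G'$, and construct the final tree $T$ on $V$ by attaching an arbitrary binary sub-tree on the points of each $Y_k$ below the corresponding leaf of $T'$. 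Because the lowest non-leaf clusters of $T$ are precisely the fairlets, union-closedness of the bounded-representation constraint immediately gives that $T$ is fair.

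Next I would split $\rev_G(T)$ according to whether a pair lies in a common fairlet. For $i \in Y_a,\, j \in Y_b$ with $a \neq b$, the least common ancestor in $T$ coincides with the lca of $Y_a, Y_b$ in $T'$, so $\vweight(\leaves(T[i \lor j])) = \vweight'(\leaves(T'[Y_a \lor Y_b]))$; summing these contributions recovers exactly $\rev_{G'}(T')$. For $i, j$ lying inside a common fairlet $Y$, the lca lies strictly inside the sub-tree attached to $Y$, so $|\leaves(T[i \lor j])| \leq \fsize$ and the pair contributes at least $(n - \fsize)\, \simi(i,j)$. Combining gives
\[ \rev_G(T) \;\geq\; \rev_{G'}(T') \;+\; (n - \fsize)\sum_{Y \in \mathcal{Y}} S(Y). \]

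To apply the hypothesis, observe that $\vweight'(V') = n$ and the minimum pair-weight in $G'$ is at most $2\fsize$, so the upper bound~(\ref{eq:maxrevenue}) applied to $G'$ is at least $(n - 2\fsize)\, \simi'(V')$, giving $\rev_{G'}(T') \geq \beta (n - 2\fsize)\, \simi'(V')$. Writing $\simi(V) = \simi'(V') + \sum_{Y} S(Y)$ and using $\beta \leq 1$ together with $n - \fsize \geq n - 2\fsize$ to absorb the intra-fairlet term at the weaker rate, I obtain $\rev_G(T) \geq \beta (n - 2\fsize)\, \simi(V)$. Since every tree, and in particular the optimal fair tree $T^*$, is bounded via~(\ref{eq:maxrevenue}) on $G$ by $(n - 2)\, \simi(V) \leq n\, \simi(V)$, dividing yields the claimed ratio $\beta(1 - 2\fsize / n)$.

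The main obstacle is the decomposition step: one must carefully verify that the contracted instance $G'$ really does encode inter-fairlet revenue correctly, i.e.\ that the lca identity $\vweight(\leaves(T[i\lor j])) = \vweight'(\leaves(T'[Y_a\lor Y_b]))$ holds for all cross-fairlet pairs, and then show that the intra-fairlet terms do not hurt the ratio. The latter works cleanly because the scale $(n-\fsize)$ on the intra-fairlet similarities is strictly larger than the scale $(n-2\fsize)$ implicit in the approximation bound, giving exactly the slack needed to combine the two sums into a single factor $\beta(n-2\fsize)\,\simi(V)$. Everything else reduces to two invocations of~(\ref{eq:maxrevenue}), one for $G'$ (to invoke the hypothesis) and one for $G$ (to bound OPT).
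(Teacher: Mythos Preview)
Your proposal is correct and follows essentially the same approach as the paper: contract each fairlet to a weighted point, run the $\beta$-approximation on the contracted instance, hang arbitrary sub-trees beneath the fairlet nodes, then split $\rev_G(T)$ into inter- and intra-fairlet contributions and lower-bound each by $\beta(n-2\fsize)$ times the corresponding similarity mass. The only cosmetic difference is that the paper bounds the inter-fairlet term by $\beta\sum_{Y,Y'}\simi(Y,Y')(n-\vweight(Y)-\vweight(Y'))$ pairwise rather than via a single global bound $(n-2\fsize)\,\simi'(V')$ on~(\ref{eq:maxrevenue}) for $G'$, but both routes collapse to the same inequality $\rev_G(T)\ge\beta(n-2\fsize)\,\simi(V)$.
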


Prior work showed that average-linkage is a $\nicefrac{1}{3}$-approximation to (\ref{eq:maxrevenue}) in the vertex-unweighted case; this proof can be easily modified to show that it is still $\nicefrac{1}{3}$-approximation even with vertex weights. This accounts for step (iii) in our process.

Combined with the fairlet decomposition methods for the two-color case~\citep{chierichetti} and for multi-color case (Supplementary Material) to address step (i),
we have the following.

\begin{cor} \label{cor:revenue}
There is polynomial time algorithm that constructs a fair tree that is a $\frac{1}{3}\left(1-\frac{2\fsize}{n}\right)$-approximation for revenue objective, where $\fsize$ is the maximum size of fairlets.
\end{cor}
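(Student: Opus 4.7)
The plan is to assemble the corollary directly from the three-step recipe outlined just before its statement, without any new technical machinery. The argument is essentially a plug-and-chug of existing pieces, so the only care needed is making sure the pieces are compatible (polynomial time, correct approximation constants, correct bound on fairlet size).

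First I would instantiate the two-phase framework. In the first phase, run a fairlet decomposition algorithm to partition $V$ into fairlets $Y_1, Y_2, \ldots$, each of size at most $\fsize$. For the two-color case this is the decomposition of Chierichetti et al.\ (and in the $\alpha = 1/t$ case the resulting fairlets have size at most $2t-1$); for the multi-color bounded-representation setting this is the decomposition described in the supplementary material. Both procedures run in polynomial time and produce fairlets of bounded maximum size $\fsize$, which is exactly what step (i) demands.

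In the second phase, contract each fairlet $Y_j$ to a single vertex of weight $\vweight(Y_j) = |Y_j|$ and run average-linkage on this vertex-weighted instance to obtain a tree $T'$; then attach an arbitrary internal subtree for each $Y_j$ beneath its corresponding leaf of $T'$. Since the bottom-most non-leaf clusters are exactly the fairlets and fairness is union-closed, the resulting tree is fair. For step (iii), I would invoke the (easy vertex-weighted extension of the) Moseley–Wang analysis showing that average-linkage is a $\nicefrac{1}{3}$-approximation to the upper bound (\ref{eq:maxrevenue}), so we have a $\beta$-approximation with $\beta = \nicefrac{1}{3}$ available at the top level.

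Finally, I would feed $\beta = \nicefrac{1}{3}$ and the decomposition with maximum fairlet size $\fsize$ into Theorem~\ref{thm:fair_moseley_wang}, which immediately yields a fair hierarchical clustering whose revenue is at least $\frac{1}{3}\left(1 - \frac{2\fsize}{n}\right)$ times the optimum. Both phases run in polynomial time, so the overall algorithm is polynomial and the approximation guarantee is as claimed. There is no real obstacle here, since the heavy lifting is done inside Theorem~\ref{thm:fair_moseley_wang} and inside the cited fairlet constructions; the only sanity check I would make explicit is that the $\nicefrac{1}{3}$ guarantee for average-linkage carries over to the vertex-weighted instance produced by contracting fairlets, which is a straightforward reweighting of the original argument.
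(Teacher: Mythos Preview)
Your proposal is correct and matches the paper's approach exactly: combine a polynomial-time fairlet decomposition (from \cite{chierichetti} or the supplementary material) with the vertex-weighted $\nicefrac{1}{3}$-approximation of average-linkage, then invoke Theorem~\ref{thm:fair_moseley_wang} with $\beta=\nicefrac{1}{3}$. The paper likewise remarks that the weighted extension of the Moseley--Wang analysis is straightforward, so your sanity check is the only point requiring any verification and it is indeed routine.
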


\section{Optimizing value with fairness}\label{sec:val}

In this section we consider the value objective.  As in the revenue objective, we prove that we can reduce fair hierarchical clustering to the problem of finding a good fairlet decomposition for the proposed fairlet objective \eqref{obj:fairlet_obj}, and then use any approximation algorithm for weighted hierarchical clustering with the decomposition as the input.

\begin{theorem}
\label{thm:reduction}
Given an algorithm that gives a $\beta$-approximation to \eqref{eq:maxvalue} where $\beta\leq 1$, and a fairlet decomposition $\mathcal{Y}$ such that $\phi(\mathcal{Y}) \leq \epsilon \cdot \dist(V)$, there is a $\beta(1 - \epsilon)$ approximation for \eqref{eq:maxvalue}.
\end{theorem}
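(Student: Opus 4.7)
The plan is to reduce the fair problem to weighted hierarchical clustering on a contracted instance. Given the fairlet decomposition $\mathcal{Y} = \{Y_1, \ldots, Y_k\}$, I would define a weighted instance $G' = (\mathcal{Y}, d', m')$ by collapsing each fairlet $Y_i$ to a single super-point of weight $m'(Y_i) = \vweight(Y_i)$, with $d'(Y_i, Y_j) = \dist(Y_i, Y_j) = \sum_{u \in Y_i,\, v \in Y_j} \dist(u,v)$. Running the assumed $\beta$-approximation on $G'$ produces a tree $T'$ whose leaves are the fairlets. Then I would construct the final tree $T$ for $G$ by attaching, under the leaf of $T'$ corresponding to $Y_i$, any fair subtree on the points of $Y_i$ (any hierarchy works since a single fairlet is fair on its own, so union-closedness guarantees $T$ is fair at every level above the fairlet layer).

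Next I would relate $\val_G(T)$ to $\val_{G'}(T')$. Split the pairs $\{u,v\} \subseteq V$ into those within a common fairlet and those across fairlets. For a cross-fairlet pair $u \in Y_i$, $v \in Y_j$ with $i \ne j$, the subtree $T[u \lor v]$ coincides (as a set of original leaves) with the expansion of $T'[Y_i \lor Y_j]$, so $\vweight(\leaves(T[u \lor v])) = m'(\leaves(T'[Y_i \lor Y_j]))$. Summing over all cross-fairlet pairs yields exactly $\val_{G'}(T')$. The within-fairlet pairs contribute a non-negative amount since all distances and weights are non-negative. Therefore
\begin{equation*}
\val_G(T) \;\geq\; \val_{G'}(T').
\end{equation*}

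Now I would lower bound $\val_{G'}(T')$ using the $\beta$-approximation guarantee against the upper bound \eqref{eq:maxvalue} on $G'$. Note that $m'(\mathcal{Y}) = \vweight(V)$ and that the total pairwise distance on $G'$ is
\begin{equation*}
\dist'(\mathcal{Y}) \;=\; \sum_{i<j} \dist(Y_i, Y_j) \;=\; \dist(V) - \sum_{Y \in \mathcal{Y}} \sum_{\{u,v\} \subseteq Y} \dist(u,v) \;=\; \dist(V) - \phi(\mathcal{Y}),
\end{equation*}
so by hypothesis $\dist'(\mathcal{Y}) \geq (1-\epsilon)\dist(V)$. Hence
\begin{equation*}
\val_{G'}(T') \;\geq\; \beta \cdot m'(\mathcal{Y}) \cdot \dist'(\mathcal{Y}) \;\geq\; \beta(1-\epsilon) \cdot \vweight(V) \cdot \dist(V).
\end{equation*}
Since $\vweight(V) \cdot \dist(V)$ is an upper bound on the value of \emph{any} fair tree (by \eqref{eq:maxvalue} applied to the original instance), this delivers the claimed $\beta(1-\epsilon)$-approximation.

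The proof is mostly bookkeeping; the one place to be careful is the cross-fairlet identification of leaf sets, namely that when an internal node $w$ of $T'$ is expanded by plugging the fairlets in as subtrees, $\leaves(T[w])$ (viewed as a set of original points) has total $\vweight$ equal to $m'(\leaves(T'[w]))$. Once this is stated cleanly, the separation into within- and cross-fairlet pairs and the substitution $\dist'(\mathcal{Y}) = \dist(V) - \phi(\mathcal{Y})$ finish the argument without further subtleties.
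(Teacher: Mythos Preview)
Your proposal is correct and follows essentially the same route as the paper: collapse each fairlet to a weighted super-point, observe that the total pairwise distance on the contracted instance is $\dist(V)-\phi(\mathcal{Y})\ge(1-\epsilon)\dist(V)$ while the total weight is $\vweight(V)$, and apply the $\beta$-approximation against the upper bound \eqref{eq:maxvalue} on that instance. Your write-up is in fact more explicit than the paper's (you spell out the tree construction and the cross-fairlet leaf-set identification that yields $\val_G(T)\ge\val_{G'}(T')$), whereas the paper states the contracted-instance bound and asserts the conclusion directly; one small caveat is your parenthetical ``any hierarchy works'' inside a fairlet---strictly, a fairlet admits no nontrivial fair refinement, so to keep every internal node fair you should attach the points of $Y_i$ as direct children of the fairlet node (a star), though this does not affect the approximation argument since within-fairlet contributions are nonnegative.
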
 
We complement this result with an algorithm that finds a good fairlet decomposition in polynomial time under the bounded representation fairness constraint with cap $\alpha$.

Let $R_1,\ldots,R_c$ be the $c$ colors and let $\cy = \{Y_1, Y_2 \ldots\}$ be the fairlet decomposition. 
Let $n_i$ be the number of points colored $R_i$ in $V$.  Let $r_{i, k}$ denote the number of points colored $R_i$ in the $k$th fairlet. 
\begin{theorem}
\label{thm:localsearch}
There exists a local search algorithm that finds a fairlet decomposition $\mathcal{Y}$ with $\phi(\mathcal{Y}) \leq (1 +\epsilon) \max_{i,k} \frac{r_{i,k}}{n_i} \dist(V)$ in time $\tilde{O}(n^3 / \epsilon)$. 
\end{theorem}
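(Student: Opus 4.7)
The plan is to run local search over fair decompositions whose elementary move is a \emph{same-color swap}: for $u \in Y_k$ and $v \in Y_l$ with $k \neq l$ and $u, v$ sharing a color, exchange their fairlet memberships. Such a swap leaves each $r_{i,k}$ (and hence $\sigma := \max_{i,k} r_{i,k}/n_i$) unchanged, so the bounded-representation constraint is preserved throughout. I would initialize with any valid decomposition and, writing $D(x,Y) := \sum_{w \in Y \setminus \{x\}} d(x,w)$, iteratively accept any swap whose marginal gain
\[
\Delta = D(v,Y_k) + D(u,Y_l) - D(u,Y_k) - D(v,Y_l) - 2\,d(u,v)
\]
is at most $-\tau$ for a threshold $\tau$ of order $\epsilon\,\phi(\mathcal{Y})/n$, terminating when no such move exists.

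For the running time, I would maintain the table of $D$-values, so evaluating a trial swap costs $O(1)$ and accepting one triggers only $O(n)$ updates (each $D(x, Y_k)$ for $x \neq u,v$ shifts by $-d(x,u)+d(x,v)$, and symmetrically for $Y_l$). A full scan of candidate pairs thus costs $O(n^2)$ per iteration. The multiplicative threshold contracts $\phi$ by a factor $1 - \Theta(\epsilon/n)$ per accepted swap, so starting from $\phi_0 \leq d(V)$ only $\tilde{O}(n/\epsilon)$ iterations are needed, yielding the claimed $\tilde{O}(n^3/\epsilon)$ total.

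The main obstacle is the approximation guarantee at termination. Fixing a color $R_i$ and letting $k(u)$ denote the index of the fairlet containing $u$, I would sum the local-optimality inequality $\Delta \geq -\tau$ over all ordered same-color, different-fairlet pairs $(u,v)$ with $u,v \in R_i$. Grouping the LHS by $u$ produces $2\sum_k \sum_{u \in Y_k \cap R_i}(n_i - r_{i,k})\,D(u,Y_k) \geq 2(1-\sigma)n_i \Phi_i$, where $\Phi_i := \sum_k \sum_{u \in Y_k \cap R_i} D(u,Y_k)$. Regrouping the RHS by the fairlet $Y_l$ hosting $v$ gives $2 \sum_{u \in R_i} \sum_{l \neq k(u)} r_{i,l}\,D(u,Y_l) \leq 2\sigma n_i(d_i - \Phi_i)$, where $d_i := \sum_{u \in R_i} D(u,V)$. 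The $\Phi_i$ terms on the two sides telescope to yield $\Phi_i \leq \sigma d_i + O(\tau n_i)$, and summing over colors using $\sum_i \Phi_i = 2\phi(\mathcal{Y})$, $\sum_i d_i = 2 d(V)$, and $\sum_i n_i = n$ delivers $\phi(\mathcal{Y}) \leq \sigma d(V) + O(\tau n)$. Absorbing the residual via $\tau = \Theta(\epsilon\,\phi(\mathcal{Y})/n)$ then yields $\phi(\mathcal{Y}) \leq (1+\epsilon)\sigma d(V)$, as required.

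The most delicate step is the RHS regrouping: the weight $r_{i,l}$ arises naturally from counting how many $v$'s lie in $Y_l$, and crucially $D(x,Y)$ aggregates distances to \emph{all} points in $Y$ rather than only those of color $i$, so the resulting bound controls the full intra-fairlet distance $\phi(\mathcal{Y})$ rather than only its monochromatic component. A small point to verify carefully is that the multiplicative threshold (together with a crude initial bound $\phi_0 \leq d(V)$ and the trivial lower bound $\phi \gtrsim \sigma d(V)/(1+\epsilon)$ at termination) does simultaneously deliver the $(1+\epsilon)$ approximation and the $\tilde{O}(n/\epsilon)$ iteration count.
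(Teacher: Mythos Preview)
Your approach is essentially the paper's: same-color swap local search with a multiplicative improvement threshold, and the approximation guarantee proved by summing the local-optimality inequality over all same-color pairs of a fixed color $R_i$, then summing over colors using $\sum_i \Phi_i = 2\phi(\mathcal{Y})$ and $\sum_i d_i = 2\,d(V)$. Your bookkeeping differs only cosmetically: you sum over different-fairlet pairs and invoke $(n_i - r_{i,k}) \geq (1-\sigma)n_i$, whereas the paper sums over \emph{all} same-color pairs (the inequality holds trivially for same-fairlet pairs) to pick up the factor $n_i$ directly, then divides through.

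There is, however, one real gap in your running-time argument. The ``trivial lower bound $\phi \gtrsim \sigma\, d(V)/(1+\epsilon)$ at termination'' is not a lower bound at all --- it is precisely the \emph{upper} bound you are proving, and in general $\phi$ can be driven to $0$ (take disjoint copies of a fairlet with all internal distances zero but nonzero inter-copy distances). Without a floor on $\phi$, the geometric-decrease argument does not bound the number of accepted swaps. The paper handles this by adding an explicit early-termination clause to the \textbf{While} loop: stop as soon as $\phi(\mathcal{Y}) \leq \Delta := d_{\max}\cdot m_f/n$. Then $\phi_0 \leq m_f\, n\, d_{\max} = n^2\Delta$, so at most $\log_{1+\epsilon/n}(n^2) = \tilde{O}(n/\epsilon)$ swaps occur, and if the loop exits via the $\Delta$-condition the target bound is already met. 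You could equally well stop once $\phi \leq \sigma\, d(V)$; either way an explicit threshold is required, not an a~posteriori lower bound.
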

We can now use the fact that both average-linkage and the $\frac{\epsilon}{n}$-locally-densest cut algorithm give a $\frac{2}{3}$- and $(\frac{2}{3}-\epsilon)$-approximation respectively for vertex-weighted hierarchical clustering under the value objective.  Finally, recall that fairlets are intended to be minimal, and their size depends only on the parameter $\alpha$, and not on the size of the original input. Therefore, as long as the number of points of each color increases as input size, $n$, grows, the ratio $r_{i,k}/n_i$ goes to $0$. These results, combined with Theorem \ref{thm:reduction} and Theorem \ref{thm:localsearch}, yield Corollary \ref{cor:main}.

\begin{cor} \label{cor:main}
Given bounded size fairlets, the fairlet decomposition computed by local search combined with average-linkage constructs a fair hierarchical clustering that is a $\frac{2}{3}(1-o(1))$-approximation for the value objective. For the $\frac{\epsilon}{n}$-locally-densest cut algorithm in \citet{cohenaddad}, we get a polynomial time algorithm for fair hierarchical clustering that is a $(\frac{2}{3}-\epsilon)(1-o(1))$-approximation under the value objective for any $\epsilon > 0$.
\end{cor}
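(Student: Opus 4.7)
The plan is to simply combine the three ingredients already stated in the excerpt: the reduction in Theorem \ref{thm:reduction}, the local-search decomposition in Theorem \ref{thm:localsearch}, and the known approximation guarantees for vertex-weighted \avlk and the $\frac{\epsilon}{n}$-locally-densest cut algorithm. First I would run the local-search algorithm from Theorem \ref{thm:localsearch} (with some fixed $\epsilon' > 0$, say $\epsilon'=1$) in time $\tilde O(n^3)$ to obtain a fairlet decomposition $\mathcal{Y}=\{Y_1,Y_2,\ldots\}$ with
\[
\phi(\mathcal{Y}) \;\leq\; 2 \cdot \max_{i,k}\frac{r_{i,k}}{n_i}\cdot \dist(V).
\]

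Next I would show $\max_{i,k} r_{i,k}/n_i = o(1)$ as $n\to\infty$. The hypothesis is that fairlets have bounded size, so $r_{i,k}\leq m_f = O(1)$ for every color $i$ and every fairlet $k$. Under the bounded representation constraint with cap $\alpha$, each color class has size $n_i = \Omega(n)$ in any non-degenerate input, so
\[
\max_{i,k}\frac{r_{i,k}}{n_i} \;\leq\; \frac{m_f}{\min_i n_i} \;=\; O(1/n) \;=\; o(1).
\]
Setting $\eps := 2\max_{i,k} r_{i,k}/n_i = o(1)$, we obtain $\phi(\mathcal{Y}) \leq \eps \cdot \dist(V)$ with $\eps \to 0$.

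Then I would treat the fairlets as atomic weighted points (with weight equal to fairlet size) and apply a vertex-weighted value-approximation algorithm on top. Using average-linkage, which gives a $\beta = \tfrac{2}{3}$-approximation to the upper bound \eqref{eq:maxvalue} even in the vertex-weighted case, Theorem \ref{thm:reduction} yields a $\tfrac{2}{3}(1-\eps) = \tfrac{2}{3}(1-o(1))$-approximation; the same argument with the $\frac{\epsilon}{n}$-locally-densest cut algorithm of \citet{cohenaddad}, which achieves $\beta = \tfrac{2}{3}-\epsilon$ in polynomial time, yields a $(\tfrac{2}{3}-\epsilon)(1-o(1))$-approximation. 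Combining the two phases preserves polynomial running time.

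The only point that needs a bit of care (and which I would flag as the main technical obstacle) is verifying that the stated $\beta$-approximation for \avlk and the locally-densest cut algorithm does carry over from the vertex-unweighted setting to the vertex-weighted setting that we actually feed them — the fairlets have nonuniform integer weights. This is a standard adaptation (the analyses in \citet{cohenaddad} compare the algorithm's value to the upper bound \eqref{eq:maxvalue}, and both sides scale linearly in vertex weights), and given that the excerpt explicitly asserts these weighted guarantees, the corollary follows by the chain of inequalities above with no further calculation required.
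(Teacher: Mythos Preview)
Your proposal is correct and follows essentially the same route as the paper: combine Theorem~\ref{thm:reduction}, Theorem~\ref{thm:localsearch}, and the vertex-weighted $\tfrac{2}{3}$ (resp.\ $\tfrac{2}{3}-\epsilon$) guarantees for average-linkage and the locally-densest cut algorithm, together with the observation that bounded fairlet size and growing color classes force $\max_{i,k} r_{i,k}/n_i = o(1)$. One small imprecision: the bounded-representation constraint alone does not force $n_i = \Omega(n)$; like the paper, you should simply assume each $n_i$ grows with $n$ (the paper phrases this as ``as long as the number of points of each color increases as input size $n$ grows''), which is all that is needed for $r_{i,k}/n_i \to 0$.
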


Given at most a small fraction of every color is in any cluster, Corollary \ref{cor:main} states that we can extend the state-of-the-art results for value to the $\alpha$-capped, multi-colored constraint. Note that the preconditions will always be satisfied and the extension will hold in the two-color fairness setting or in the multi-colored equal representation fairness setting.

\paragraph{Fairlet decompositions via local search}

In this section, we give a local search algorithm to construct a fairlet decomposition, which proves Theorem \ref{thm:localsearch}. This is inspired by the $\epsilon$-densest cut algorithm of \cite{cohenaddad}. To start, recall that for a pair of sets $A$ and $B$ we denote by $\dist(A,B)$ the sum of interpoint distances, $\dist(A,B)=\sum_{u \in A, v \in B}\dist(u,v)$. A fairlet decomposition is  a partition of the input $ \{Y_1, Y_2, \ldots \}$ such that each color composes at most an $\alpha$ fraction of each $Y_i$.

Our algorithm will recursively subdivide the cluster of all data to construct a hierarchy by finding cuts. To search for a cut, we will use a \textit{swap} method.

\begin{definition}[Local optimality] \label{def:ep_opt_decomp}
Consider any fairlet decomposition $\mathcal{Y} = \{Y_1, Y_2, \ldots \}$ and $\epsilon >0$.   Define a \emph{swap} of $u \in Y_i$ and $v \in Y_j$ for $j\neq i$ as updating $Y_i$ to be $(Y_i \setminus \{u\}) \cup \{v\}$ and $Y_j$ to be $(Y_j \setminus \{v\}) \cup \{u\}$.  We say $\mathcal{Y} $ is \emph{$\epsilon$-locally-optimal} if any swap with $u,v$ of the same color reduces the objective value by less than a $(1+\epsilon)$ factor. 
\end{definition}

The algorithm constructs a $(\epsilon/n)$-locally optimal algorithm for fairlet decomposition, which  runs in $\tilde{O}(n^3/\epsilon)$ time.   Consider any given instance $(V, \dist)$.  Let $\dist_{\max}$ denote the maximum distance, $\fsize$ denote the maximum fairlet size, and $\Delta = \dist_{\max} \cdot \frac{\fsize}{n}$.  The algorithm begins with an arbitrary decomposition. Then it swaps pairs of monochromatic points until it terminates with a locally optimal solution.  By construction we have the following.
\begin{claim}
\label{claim:fairdecomp}
Algorithm \ref{alg:eps_local_opt_decomp}  finds a valid fairlet decomposition.
\end{claim}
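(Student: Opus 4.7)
The plan is to prove validity by a simple invariant argument: the initial decomposition the algorithm starts from is a valid fairlet decomposition, and every swap performed by the algorithm preserves fairness of every fairlet.

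First I would verify the base case. Since $\alpha$ is rational and the input contains at least an $\alpha$-bounded mix of colors globally, one can produce an initial valid fairlet decomposition by a trivial greedy construction — for instance, repeatedly pull out a minimal fair subset of $V$ of size at most $\fsize$, which is a fairlet by definition. I would cite (or invoke briefly) this starting partition so that before any swaps occur, every $Y_i$ satisfies the bounded-representation constraint with cap $\alpha$.

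Next I would handle the inductive step, which is the heart of the argument. Suppose at some iteration $\mathcal{Y} = \{Y_1, Y_2, \ldots\}$ is a valid fairlet decomposition, and the algorithm performs a swap of $u \in Y_i$ and $v \in Y_j$ with $u$ and $v$ of the same color, producing $Y_i' = (Y_i \setminus \{u\}) \cup \{v\}$ and $Y_j' = (Y_j \setminus \{v\}) \cup \{u\}$. Since $u$ and $v$ have the same color, the multiset of colors appearing in $Y_i'$ is identical to that of $Y_i$, and likewise for $Y_j'$ and $Y_j$. In particular, $|Y_i'| = |Y_i|$ and $|Y_j'| = |Y_j|$, and the number of points of each color in $Y_i'$ (resp.\ $Y_j'$) equals that in $Y_i$ (resp.\ $Y_j$). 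Hence the fractional representation of every color in $Y_i'$ and $Y_j'$ is unchanged from $Y_i$ and $Y_j$, and both remain $\alpha$-capped. All other fairlets are untouched, so the entire collection remains a valid fairlet decomposition.

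Finally, since the algorithm only terminates and only modifies $\mathcal{Y}$ through monochromatic swaps of this form, by induction on the number of iterations the output partition is a valid fairlet decomposition of $V$. The main (mild) obstacle is making the base case precise — namely, exhibiting a concrete valid starting partition — but under the bounded-representation assumption this follows directly from the definition of fairlets as minimal fair sets, each of size at most $\fsize$.
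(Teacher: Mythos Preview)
Your proposal is correct and is essentially the same as the paper's argument: the paper simply states that the claim holds ``by construction,'' and your invariant argument (a valid initial $\alpha$-capped decomposition, preserved under monochromatic swaps because the color multisets of the two affected fairlets are unchanged) is exactly the content behind that phrase, just spelled out explicitly.
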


We prove two things:  Algorithm~\ref{alg:eps_local_opt_decomp} optimizes the objective (\ref{obj:fairlet_obj}), and  has a small running time. The following lemma gives an upper bound on $\mathcal{Y}$'s performance for \eqref{obj:fairlet_obj} found by Algorithm~\ref{alg:eps_local_opt_decomp}.

\begin{algorithm}[t!]
 \renewcommand{\algorithmicrequire}{\textbf{Input: }}
	\renewcommand{\algorithmicensure}{\textbf{Output: }}
    \caption{Algorithm for $(\epsilon/n)$-locally-optimal fairlet decomposition.}
    \label{alg:eps_local_opt_decomp}
    \begin{algorithmic}[1]
    	\REQUIRE A set $V$ with distance function $\dist \geq 0$, parameter $\alpha$, small constant $\epsilon \in [0,1]$.
        \ENSURE An $\alpha$-capped fairlet decomposition $\mathcal{Y}$.
        \STATE Find $\dist_{\max}$,  $\Delta \leftarrow \frac{m_f}{n}\dist_{\max}$.
        \STATE Arbitrarily find an $\alpha$-capped fairlet decomposition $\{Y_1, Y_2, \ldots \}$ such that each partition has at most an $\alpha$ fraction of any color.
                \WHILE{$\exists u \in Y_i, v \in Y_j, i \neq j$ of the same color, such that for the decomposition $\mathcal{Y}'$ after swapping $u,v$, $\frac{\sum_{Y_k \in \mathcal{Y}}\dist(Y_k)}{\sum_{Y_k \in \mathcal{Y}'}\dist(Y_k)} \geq (1+\epsilon/n)$ {\bf{and}} $\sum_{Y_k \in \mathcal{Y}}\dist(Y_k) > \Delta$}
        \STATE Swap $u$ and $v$  by setting $Y_i \gets (Y_i \setminus \{u\}) \cup \{v\}$ and $Y_j \gets (Y_j \setminus \{v\}) \cup \{u\}$.
        \ENDWHILE
    \end{algorithmic}
\end{algorithm}

\begin{lem} 
\label{lem:algo_is_correct}
The fairlet decomposition $\mathcal{Y}$ computed by Algorithm  \ref{alg:eps_local_opt_decomp}  has an objective value for (\ref{obj:fairlet_obj}) of at most $(1+\epsilon)\max_{i,k}\frac{r_{i,k}}{n_i}\dist(V)$.
\end{lem}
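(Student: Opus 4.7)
The argument exploits local optimality of $\cy$ at termination. If the algorithm exits with $\sum_k \dist(Y_k) \leq \Delta = (\fsize/n) \dist_{\max}$, the bound follows from a routine comparison: since $\max_{i,k} r_{i,k}/n_i \geq 1/n$ and $\dist(V) \geq \dist_{\max}$, the target $(1+\eps)\lambda \dist(V)$ already dominates $\Delta$, so I would dispose of this case at the start. The substantive case is when the loop terminates because no same-color cross-cluster swap reduces $\phi$ by a factor $\geq 1+\eps/n$.

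\textbf{Main argument.} For each ordered pair $(u,v)$ of color $i$ with $j(u)=k \neq \ell=j(v)$, a direct computation gives
\[
\phi(\cy) - \phi(\cy'_{u,v}) = \dist(u, Y_k) + \dist(v, Y_\ell) - \dist(v, Y_k) - \dist(u, Y_\ell) + 2\,\dist(u,v),
\]
where $\cy'_{u,v}$ is the decomposition after the swap. Combining local optimality with the trivial bound $\phi(\cy'_{u,v}) \leq \dist(V)$ (the intra-cluster distance sum cannot exceed the total distance) yields, for every such pair,
\[
\dist(u, Y_k) + \dist(v, Y_\ell) - \dist(v, Y_k) - \dist(u, Y_\ell) + 2\,\dist(u,v) < \frac{\eps}{n}\,\dist(V).
\]
The plan is to sum this inequality over all ordered same-color cross-cluster pairs weighted by $1/n_i$, where $i$ is the common color. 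Writing $S_i$ for the set of color-$i$ vertices and $S_{i,k}=S_i \cap Y_k$, and using the key identities $\sum_u \dist(u, Y_{j(u)}) = 2\phi(\cy)$ and $\sum_i \dist(S_i, V) = 2\dist(V)$, the intra-cluster terms on the LHS collapse after symmetrization to a constant multiple of $\phi(\cy)$. The cross-cluster terms collapse to $\sum_{i,k} (r_{i,k}/n_i)\dist(S_i, Y_k)$, which is at most $\lambda \sum_i \dist(S_i, V) = 2\lambda\,\dist(V)$ by the uniform bound $r_{i,k}/n_i \leq \lambda$. The non-negative $2\,\dist(u,v)$ terms are discarded for cleanness. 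The number of ordered pairs of color $i$ is at most $n_i^2$, so the aggregated error is at most $\sum_i (n_i^2/n_i)(\eps/n)\dist(V) = \eps\,\dist(V)$. Rearranging the resulting inequality produces $\phi(\cy) \leq (1+\eps)\lambda\,\dist(V)$.

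\textbf{Main obstacle.} The delicate part is the algebraic bookkeeping of the weighted double sum: one must verify that (i) the diagonal terms on the LHS reconstruct $\phi(\cy)$ exactly up to a constant prefactor, (ii) the off-diagonal terms on the RHS pair symmetrically so that $r_{i,k}/n_i \leq \lambda$ can be applied uniformly to bound them by $\lambda\,\dist(V)$, and (iii) the aggregated error tightens to a multiplicative $(1+\eps)\lambda$ factor rather than a weaker additive $\eps\,\dist(V)$ slack. The essential ingredients throughout are $\sum_u \dist(u, Y_{j(u)}) = 2\phi(\cy)$, $\sum_i \dist(S_i, V) = 2\dist(V)$, $r_{i,k}/n_i \leq \lambda$, and $\phi(\cy'_{u,v}) \leq \dist(V)$.
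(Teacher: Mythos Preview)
Your overall strategy matches the paper's: exploit the per-swap local-optimality inequality, sum it over all same-color pairs with weight $1/n_i$, and collapse the two sides using $\sum_u \dist(u,Y(u))=2\phi(\mathcal{Y})$ and $r_{i,k}/n_i\le\lambda$. However, one step fails, and you yourself flag it as obstacle~(iii) without actually resolving it.

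The problem is the error bound you extract from local optimality. From $\phi(\mathcal{Y})\le(1+\eps/n)\,\phi(\mathcal{Y}'_{u,v})$ you bound $\phi(\mathcal{Y})-\phi(\mathcal{Y}'_{u,v})\le\tfrac{\eps}{n}\,\phi(\mathcal{Y}'_{u,v})$ and then replace $\phi(\mathcal{Y}'_{u,v})$ by $\dist(V)$. After aggregating with weights $1/n_i$, this gives
\[
2\,\phi(\mathcal{Y}) \;\le\; 2\lambda\,\dist(V)\;+\;\eps\,\dist(V),
\]
hence only $\phi(\mathcal{Y})\le(\lambda+\eps/2)\,\dist(V)$. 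In the regime of interest $\lambda=\max_{i,k}r_{i,k}/n_i=o(1)$, so the additive $\eps\,\dist(V)$ swamps the target $(1+\eps)\lambda\,\dist(V)$; the final ``rearranging'' you describe does not produce the claimed bound.

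The fix is to keep the error term in terms of $\phi(\mathcal{Y})$ rather than $\dist(V)$. From $\phi(\mathcal{Y})\le(1+\eps/n)\,\phi(\mathcal{Y}'_{u,v})$ one gets directly
\[
\phi(\mathcal{Y})-\phi(\mathcal{Y}'_{u,v}) \;\le\; \frac{\eps/n}{1+\eps/n}\,\phi(\mathcal{Y}) \;\le\; \frac{\eps}{n}\,\phi(\mathcal{Y}).
\]
After aggregation the error becomes $\eps\,\phi(\mathcal{Y})$, giving $(2-\eps)\,\phi(\mathcal{Y})\le 2\lambda\,\dist(V)$ and hence $\phi(\mathcal{Y})\le\tfrac{1}{1-\eps/2}\lambda\,\dist(V)\le(1+\eps)\lambda\,\dist(V)$. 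This is exactly how the paper obtains the multiplicative factor. A minor secondary point: the paper sums over \emph{all} same-color pairs, including pairs in the same fairlet (for which the swap inequality holds trivially); this makes the left-hand side collapse cleanly to $\sum_k\dist_{R_i}(Y_k)$ without the $(n_i-r_{i,k})$ correction that your cross-cluster-only summation would introduce.
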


Finally we bound the running time.  The algorithm has much better performance in practice than its worst-case analysis would indicate. We will show this later in Section~\ref{sec:exp}.

\begin{lem}
\label{lem:algo_is_fast1}
The running time for Algorithm~\ref{alg:eps_local_opt_decomp} is $\tilde{O}(n^3/\epsilon)$.
\end{lem}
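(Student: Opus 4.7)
The plan is to bound the running time by separately analyzing (i) the number of iterations of the while loop and (ii) the work performed per iteration, and then multiplying.

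For the iteration count, I would first bound the initial value of $\sum_{Y_k \in \mathcal{Y}} \dist(Y_k)$. Since this quantity is a sum of a subset of pairwise distances and each distance is at most $\dist_{\max}$, the initial objective is at most $\binom{n}{2}\dist_{\max} = O(n^2 \dist_{\max})$. By the loop condition, every iteration that executes strictly decreases the objective by a factor of at least $(1+\epsilon/n)$, and the loop terminates as soon as the objective drops to at most $\Delta = (m_f/n)\dist_{\max}$. Therefore the number of iterations $T$ satisfies $(1+\epsilon/n)^T \leq (n^2 \dist_{\max})/\Delta = n^3/m_f$. Using $\ln(1+\epsilon/n) \geq \epsilon/(2n)$ for $\epsilon/n$ small, this yields $T = O\!\bigl((n/\epsilon)\log(n^3/m_f)\bigr) = \tilde{O}(n/\epsilon)$.

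For the per-iteration work, note that when swapping $u \in Y_i$ with $v \in Y_j$, the objective only changes in the contributions from $\dist(Y_i)$ and $\dist(Y_j)$, and more specifically only from pairs that involve $u$ or $v$. The resulting difference can be computed as $\sum_{y \in Y_i \setminus \{u\}}(\dist(v,y)-\dist(u,y)) + \sum_{y \in Y_j \setminus \{v\}}(\dist(u,y)-\dist(v,y))$, which takes $O(m_f)$ time. Scanning all monochromatic cross-fairlet pairs $(u,v)$ costs $O(n^2)$ per iteration, and evaluating each candidate swap's improvement costs an additional $O(m_f)$, for a total of $O(n^2 m_f) = \tilde{O}(n^2)$ per iteration since $m_f$ depends only on $\alpha$ (and is in any case $O(\mathrm{polylog}\, n)$ for our purposes).

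Combining the two bounds gives total running time $\tilde{O}(n/\epsilon) \cdot \tilde{O}(n^2) = \tilde{O}(n^3/\epsilon)$, as claimed. The main subtlety to get right is the iteration bound: one needs the $\Delta$ stopping threshold in the while condition, since without it the objective could in principle decay from an already tiny value and force a super-logarithmic number of phases. With the $\Delta$ floor in place, the ratio between initial and terminal objectives is polynomial in $n$, so taking logs gives only an $O(\log n)$ factor, which is what makes the overall $n$-dependence cubic up to log factors rather than higher.
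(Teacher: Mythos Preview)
Your proposal is correct and follows the same two-part structure as the paper: bound the number of while-loop iterations via the multiplicative decrease against the $\Delta$ floor, then bound the work per iteration. Your iteration-count argument is essentially identical to the paper's, except that you use the looser initial bound $\binom{n}{2}\dist_{\max}$ where the paper uses $m_f\, n\, \dist_{\max} = n^2\Delta$; this only changes the constant inside the logarithm and both give $\tilde{O}(n/\epsilon)$ iterations.

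The one substantive difference is in the per-iteration analysis. You evaluate each candidate swap directly in $O(m_f)$ time and then absorb $m_f$ as a constant depending only on $\alpha$, giving $O(n^2 m_f)$ per iteration. The paper instead precomputes and maintains a table of values $\dist(u,Y_i)$ for every point $u$ and every fairlet $Y_i$ (an $O(n^2)$ one-time preprocessing); with this table, evaluating the objective change for any swap is $O(1)$, and after actually performing a swap the table is updated in $O(n)$ time. This yields a clean $O(n^2)$ per iteration with no dependence on $m_f$. Your route is a bit more elementary but relies on $m_f$ being bounded, whereas the paper's precomputation works regardless of fairlet size; in the setting of the paper (where $m_f$ depends only on $\alpha$) both are fine.
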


Together, Lemma~\ref{lem:algo_is_correct},  Lemma~\ref{lem:algo_is_fast1}, and Claim~\ref{claim:fairdecomp} prove Theorem~\ref{thm:localsearch}. This establishes that there is a  local search algorithm that can construct a good fairlet decomposition.

\section{Optimizing cost with fairness}\label{sec:cost}

This section  considers the cost objective of~\citet{dasgupta}.  Even without our fairness constraint, the difficulty of approximating cost is clear in its approximation hardness and the fact that all known solutions require an LP or SDP solver.   We obtain the result in Theorem~\ref{thm:cost}; extending this result to other fairness constraints, improving its bound, or even making the algorithm practical, are open questions.

\begin{theorem}\label{thm:cost}
    Consider the two-color case. Given a $\beta$-approximation for cost and a $\gamma_\costm$-approximation for minimum weighted bisection
    \footnote{
    The minimum weighted bisection problem is to find a partition of nodes into two equal-sized subsets so that the sum of the weights of the edges crossing the partition is minimized.
    }
    on input of size $\costm$, then for parameters $\costm$ and $\costl$ such that $n\geq \costm\costl$ and $n> \costl + 108\costm^2/\costl^2$, there is a fair  $O\left(\frac n\costm + \costm\costl  +\frac{n\costl\gamma_\costm}{\costm}+ \frac{n\costm\gamma_\costm}{\costl^2}\right)\beta$-approximation for $\cost(T^*_{\unfair})$.
\end{theorem}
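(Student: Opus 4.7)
The plan is to start from a near-optimal unfair tree $T_{\unfair}$ and surgically repair it into a fair tree $T_{\fair}$ at two nested scales, $\costm$ (fairlet size) and $\costl$ (macro-cluster size), using the weighted-bisection oracle whenever we need to swap points to restore color balance. The overall structure is: unfair tree $\;\to\;$ cut into macro-clusters $\;\to\;$ rebalance to fairlets via bisection $\;\to\;$ reassemble.

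\smallskip
\noindent\emph{Step 1 --- anchor to an unfair solution.} Invoke the $\beta$-approximation for cost to obtain $T_{\unfair}$ with $\cost(T_{\unfair})\le\beta\,\cost(T^*_{\unfair})$. Every subsequent bound is taken relative to $\cost(T_{\unfair})$, which is where the overall factor of $\beta$ comes from.

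\smallskip
\noindent\emph{Step 2 --- expose a macro-cluster layer.} Descend $T_{\unfair}$ and cut at the first level where every subtree has size in $[\costl,2\costl)$, obtaining roughly $n/\costl$ macro-clusters $C_1,\dots,C_k$. Pairs whose least common ancestor in $T_{\fair}$ will remain strictly above this layer are already paid for by $T_{\unfair}$, so they only contribute the $\beta\,\cost(T^*_{\unfair})$ baseline.

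\smallskip
\noindent\emph{Step 3 --- repair fairness using bisection.} Inside each macro-cluster, recursively apply the $\gamma_{\costm}$-approximate minimum weighted bisection to carve it into pieces of size $\costm$; these pieces will be the fairlets. In the two-color case, fairness is exactly a color-count condition, so minimum bisection is the right primitive: it keeps the cut weight within $\gamma_{\costm}$ of the best separator. If a macro-cluster lacks the color counts needed for a fair internal decomposition, swap monochromatic points with a neighboring macro-cluster. The preconditions $n\ge\costm\costl$ and $n>\costl+108\costm^2/\costl^2$ ensure both that enough macro-clusters exist to supply the color deficits and that $O(\costm^2/\costl)$ swaps per macro-cluster suffice.

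\smallskip
\noindent\emph{Step 4 --- assemble $T_{\fair}$.} Place an arbitrary binary tree on the $\costm$ leaves of each fairlet; above each macro-cluster, attach the bisection tree from Step 3; above the macro-cluster layer, copy the top of $T_{\unfair}$ on the macro-cluster representatives. Union-closedness of the bounded-representation constraint then propagates fairness from fairlets all the way to the root.

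\smallskip
\noindent\emph{Cost accounting.} Sort each similarity-weighted pair $(u,v)$ according to where $u\vee v$ sits in $T_{\fair}$ and multiply its weight by $|\leaves(T_{\fair}[u\vee v])|$. Pairs inside a fairlet contribute a factor $\costm$, yielding the $n/\costm$ term; pairs inside one macro-cluster but across fairlets contribute a factor $\costl$, yielding the $\costm\costl$ term; pairs cut by a Step 3 bisection are charged to the cost of that bisection, inflated by $\gamma_{\costm}$ and by the size of the enclosing subtree, producing the $n\costl\gamma_{\costm}/\costm$ and $n\costm\gamma_{\costm}/\costl^2$ terms; pairs between macro-clusters inherit the cost they already had in $T_{\unfair}$, contributing to the $\beta$ factor out front.

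\smallskip
\noindent\emph{Main obstacle.} The hardest step is Step 3, and in particular the cross-macro-cluster swap analysis. We must simultaneously (i) argue that only $O(\costm^2/\costl)$ swaps per macro-cluster are needed to fix color deficits --- this is exactly where the arithmetic condition $n>\costl+108\costm^2/\costl^2$ gets used --- and (ii) charge the extra similarity picked up by each swapped point against a bisection cut that is within $\gamma_{\costm}$ of the optimum. Threading these two charges together without double-counting across the fairlet scale $\costm$ and the macro-cluster scale $\costl$ is what ultimately produces the four separate terms in the approximation ratio.
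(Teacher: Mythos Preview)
Your plan diverges from the paper's proof in several structural ways, and at least two of them are genuine gaps rather than alternative routes.

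\textbf{Scales are reversed.} In the paper the \emph{first} cut of $T_{\unfair}$ is at scale $\costm$ (maximal clusters of size $\le \costm$), and then roughly $\costl$ of these size-$\costm$ clusters are \emph{merged} via a red--blue clustering graph into fair super-clusters of size $\le 6\costm\costl$. You instead cut first at scale $\costl$ and then carve each piece down to size-$\costm$ fairlets. With the intended parameterization ($\costm=\sqrt{n}\log^{3/4}n$, $\costl=n^{1/3}\sqrt{\log n}$, so $\costm>\costl$) your Step~3 is vacuous: you cannot carve a size-$\costl$ macro-cluster into size-$\costm$ fairlets. More importantly, the $n/\costm$ term in the bound comes from edges whose LCA in $T_{\unfair}$ already sat \emph{above} the size-$\costm$ layer (they paid $\ge \costm$ there and pay $\le n$ in $T'$), not from ``pairs inside a fairlet'' as you write; your accounting for that term is backwards.

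\textbf{Bisection is used for a different purpose.} The paper never uses bisection to partition a cluster into fairlets. Instead, once it has decided (via a matching argument) that a specific cluster must shed $r$ red and $b$ blue vertices, it builds an auxiliary graph on which minimum weighted bisection selects \emph{which} $r$ red vertices to remove so as to minimize the edge weight lost (Lemma~\ref{lem:weightloss}). That reduction is the only place $\gamma_{\costm}$ enters, and it is what produces the $O(\costl\gamma_{\costm}/\costm+\costm\gamma_{\costm}/\costl^2)$ fraction of lost intra-cluster weight. Your ``recursively bisect to get fairlets'' does not produce color-balanced pieces and gives no handle on the weight of moved edges.

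\textbf{Missing machinery.} The heart of the paper's construction is the red--blue clustering graph $H_M$ (Lemmas~\ref{lem:hgraph} and~\ref{lem:fixmatch}): match excess vertices across clusters, bound component sizes by $\costl$, and show that at most $\costl+108\costm^2/\costl^2$ vertices must leave any one cluster. This is exactly where the constant $108$ and the precondition $n>\costl+108\costm^2/\costl^2$ arise. Your ``swap monochromatic points with a neighboring macro-cluster'' and the claimed $O(\costm^2/\costl)$ swap bound do not substitute for this; without the matching structure you cannot bound how many vertices move, and hence cannot control the weight lost. Finally, the paper's fair tree $T'$ has only \emph{two} nontrivial levels (the $\mathcal{C}'$ layer and the root); it does not copy the top of $T_{\unfair}$, because the upper levels of $T_{\unfair}$ need not be fair.
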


With proper parameterization, we achieve an $O\left(n^{5/ 6}\log^{5/4} n \right)$-approximation. 
We defer our algorithm description, pseudocode, and proofs to the Supplementary Material. While our algorithm is not simple, it is an important (and non-obvious) step to show the existence of an approximation, which we hope will spur future work in this area.

\newcommand{\CensusGender}{\textsc{CensusGender}\xspace}
\newcommand{\CensusRace}{\textsc{CensusRace}\xspace}
\newcommand{\BankMarriage}{\textsc{BankMarriage}\xspace}
\newcommand{\BankAge}{\textsc{BankAge}\xspace}
\newcommand{\ratio}{\mathrm{ratio}}
\newcommand{\myvalue}{\mathrm{value}}
\newcommand{\fairness}{\mathrm{fairness}}

\section{Experiments}
\label{sec:exp}
\vspace{-.2cm}

This section validates our algorithms from Sections~\ref{sec:rev} and~\ref{sec:val} empirically. We adopt the disparate impact fairness constraint~\cite{chierichetti}; thus each point is either blue or red.  In particular, we would like to:
\begin{itemize}[nosep]
\item Show that running the standard average-linkage algorithm results in highly unfair solutions. 
\item Demonstrate that demanding fairness in hierarchical clustering incurs only a small loss in the hierarchical clustering objective.  
\item Show that our algorithms, including fairlet decomposition, are practical on real data. 

\end{itemize}
In \refappendix{sec:multicolorexp} we consider multiple colors and the same trends as the two color case occur.

\begin{table*}[t!] \vspace{-0.15in}
\centering 
\caption{ \vspace{-0.2in} Dataset description.  Here $(b, r)$ denotes the balance of the dataset.} \label{table:datasets}
\small
\begin{tabular}{r|lllll}
\hline
Name & Sample size & \# features & Protected feature & Color (blue, red) & $(b,r)$ \\
\hline
\CensusGender & 30162 & 6 & gender & (female, male) & $(1,3)$ \\
\CensusRace   & 30162 & 6 & race & (non-white, white) & $(1,7)$ \\
\BankMarriage & 45211 & 7 & marital status & (not married, married) & $(1,2)$ \\
\BankAge      & 45211 & 7 & age & ($<40$, $\geq 40$) & $(2,3)$ \\
\hline
\end{tabular}
\end{table*}
\vspace{-0.1in}
\begin{table*}[t!]\vspace{-0.2in}
\centering 
\caption{Impact of Algorithm~\ref{alg:eps_local_opt_decomp} on $\ratio_{\myvalue}$ in percentage (mean $\pm$ std. dev).}
\label{emp:ratiockmm}
\tiny
\begin{tabular}{r|llllllll}
\hline
Samples & 400 & 800 & 1600 & 3200 & 6400 & 12800\\
\hline
\CensusGender, initial & $88.17 \pm 0.76$ & $88.39 \pm 0.21$ & $88.27 \pm 0.40$ & $88.12 \pm 0.26$ & $88.00 \pm 0.10$ & $88.04 \pm 0.13$ \\
final  & $99.01 \pm 0.60$ & $99.09 \pm 0.58$ & $99.55 \pm 0.26 $ & $99.64 \pm 0.13 $ & $ 99.20 \pm 0.38 $ & $99.44 \pm 0.23$ \\
\hline
\CensusRace, initial  & $84.49 \pm 0.66$ & $85.01 \pm 0.31$ & $85.00 \pm 0.42$ & $84.88 \pm 0.43$ & $84.84 \pm 0.16$ & $84.89 \pm 0.20 $ \\
final & $99.50 \pm 0.20$ & $99.89 \pm 0.32$ & $100.0 \pm 0.21$ & $99.98 \pm 0.21$ & $99.98 \pm 0.11$ & $99.93 \pm 0.31$ \\
\hline
\BankMarriage, initial  & $92.47 \pm 0.54$ & $92.58 \pm 0.30$ & $92.42 \pm 0.30$ & $92.53 \pm 0.14$ & $92.59 \pm 0.14$ & $92.75 \pm 0.04$\\
final  & $ 99.18 \pm 0.22$ & $99.28 \pm 0.33$ & $99.59 \pm 0.14$ & $99.51 \pm 0.17$ & $99.46 \pm 0.10$ & $99.50 \pm 0.05$ \\
\hline
\BankAge, initial  & $93.70 \pm 0.56$ & $93.35 \pm 0.41$ & $92.95 \pm 0.25$ & $93.28 \pm 0.13$ & $93.36 \pm 0.12$ & $93.33 \pm 0.12$ \\
final  & $99.40 \pm 0.28$ & $99.40 \pm 0.51 $ & $99.61 \pm 0.13$ & $ 99.64 \pm 0.07 $ & $99.65 \pm 0.08$ & $99.59 \pm 0.06$\\
\hline
\end{tabular}
\end{table*}

\noindent \textbf{Datasets.} We use two datasets from the UCI data repository.%
\footnote{\url{archive.ics.uci.edu/ml/index.php}, Census: \url{archive.ics.uci.edu/ml/datasets/census+income}, Bank: \url{archive.ics.uci.edu/ml/datasets/Bank+Marketing} }
In each dataset, we use features with numerical values and leave out samples with empty entries. For for value, we use the Euclidean distance as the dissimilarity measure. For revenue, we set the similarity to be $\simi(i,j) = \frac{1}{1+\dist(i,j)}$ where $\dist(i,j)$ is the Euclidean distance. We pick two different protected features for both datasets, resulting in four datasets in total (See Table \ref{table:datasets} for details).
\begin{itemize}[nosep]
\item \emph{Census} %
dataset: We choose \emph{gender} and \emph{race} to be the protected feature and call the resulting datasets \CensusGender and \CensusRace. 
\item \emph{Bank} %
dataset: We choose \emph{marital status} and \emph{age} to be the protected features and call the resulting datasets \BankMarriage and \BankAge.
\end{itemize}

In this section, unless otherwise specified, we report results only for the value objective. Results for the revenue objective are qualitatively similar and are omitted here.  We do not evaluate our algorithm
for the cost objective since it is currently only of theoretical interest.

We sub-sample points of two colors from the original data set proportionally, while approximately retaining the original color balance. The sample sizes used are 
$100 \times 2^i, i = 0, \ldots, 8$.  On each, we do $5$ experiments and report the average results. We set $\epsilon$ in Algorithm~\ref{alg:eps_local_opt_decomp} to $0.1$ in all of the experiments.

\noindent \textbf{Implementation.} The code is available in the Supplementary Material. In the experiments, we use Algorithm~\ref{alg:eps_local_opt_decomp} for the fairlet decomposition phase, where the fairlet decomposition is initialized by randomly assigning red and blue points to each fairlet.  We apply the average-linkage algorithm to create a tree on the fairlets.  We further use average-linkage to create subtrees inside of each fairlet.   

  The algorithm selects a \emph{random} pair of blue or red points in different fairlets to swap, and checks if the swap sufficiently improves the objective. We do not run the algorithm until all the pairs are checked, rather the algorithm stops if it has made a $2n$ failed attempts to swap a random pair. As we obseve empirically, this does not have material effect on the quality of the overall solution.

\begin{table*}[t!]\vspace{-0.2in}
\centering
\caption{Impact of Algorithm~\ref{alg:eps_local_opt_decomp} on $\ratio_{{\text{fairlets}}}$.}
\label{table:ratio_fairness}
\tiny
\begin{tabular}{r|llllllll}
\hline
Samples & 100 & 200 & 400 & 800 & 1600 & 3200 & 6400 & 12800\\
\hline
\CensusGender, initial & 2.5e-2 & 1.2e-2 & 6.2e-3 & 3.0e-3 & 1.5e-3 & 7.5e-4 & 3.8e-4 & 1.9e-4 \\
final & 4.9e-3 & 1.4e-3 & 6.9e-4 & 2.5e-4 & 8.5e-5 & 3.6e-5 & 1.8e-5 & 8.0e-6
\\
\hline
\CensusRace, initial & 6.6e-2 & 3.4e-2 & 1.7e-2 & 8.4e-3 & 4.2e-3 & 2.1e-3 & 1.1e-3 & 5.3e-4 \\
final & 2.5e-2 & 1.2e-2 & 6.2e-3 & 3.0e-3 & 1.5e-3 & 7.5e-4 & 3.8e-4 & 1.9e-5 \\
\hline
\BankMarriage, initial & 1.7e-2 & 8.2e-3 & 4.0e-3 & 2.0e-3 & 1.0e-3 & 5.0e-4 & 2.5e-4 & 1.3e-4\\
final & 5.9e-3 & 2.1e-3 & 9.3e-4 & 4.1e-4 & 1.3e-4 & 7.1e-5 & 3.3e-5 & 1.4e-5 \\
\hline
\BankAge, initial & 1.3e-2 & 7.4e-3 & 3.5e-3 & 1.9e-3 & 9.3e-4 & 4.7e-4 & 2.3e-4 & 1.2e-4\\
final & 5.0e-3 & 2.2e-3 & 7.0e-4 & 3.7e-4 & 1.3e-4 & 5.7e-5 & 3.0e-5 & 1.4e-5 \\
\hline
\end{tabular}
\end{table*}
\begin{figure*}[t!]
\centering
\begin{tabular}{ccc} 
\includegraphics[width=.3\linewidth]{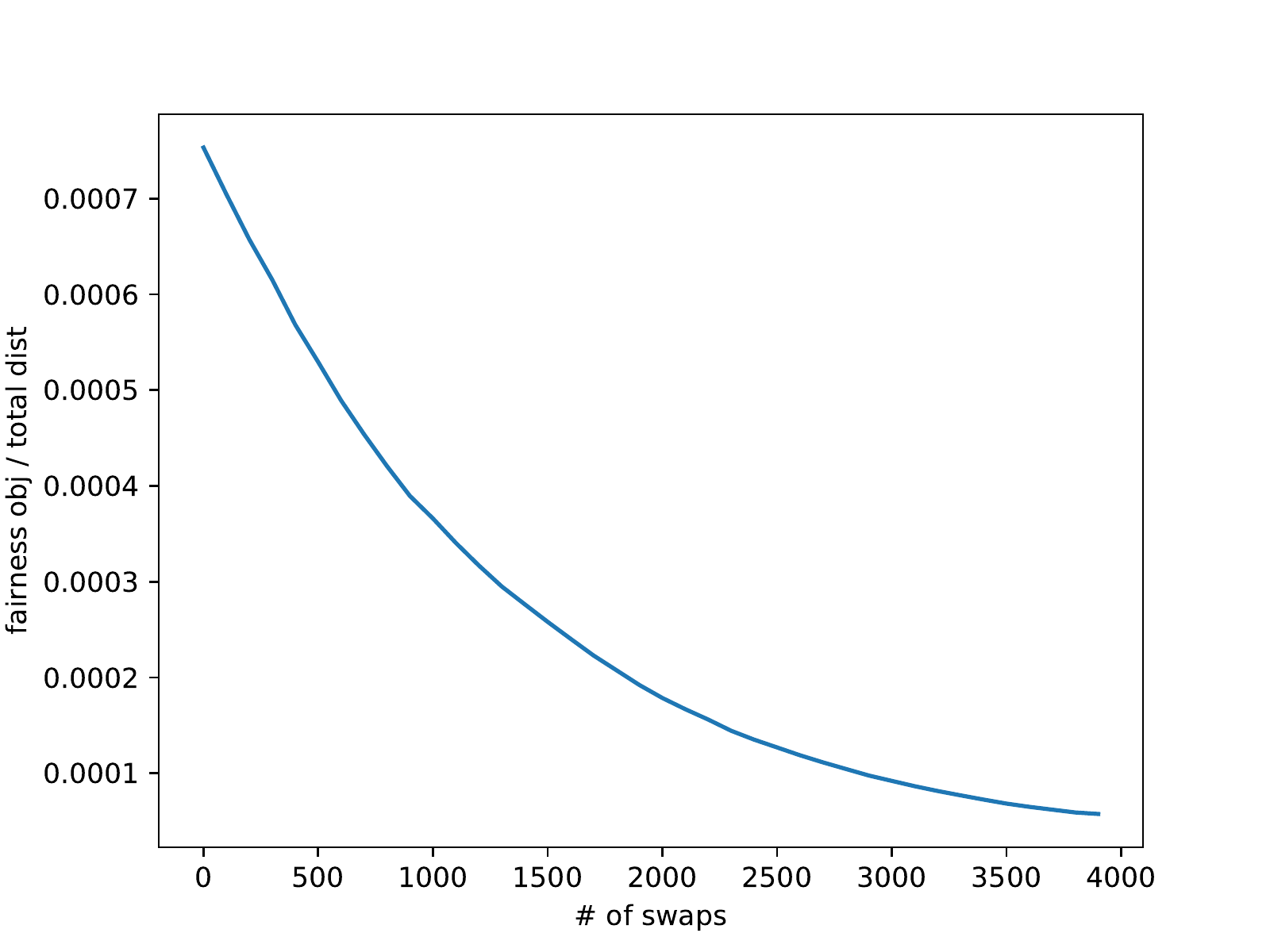}
&
\includegraphics[width=.3\linewidth]{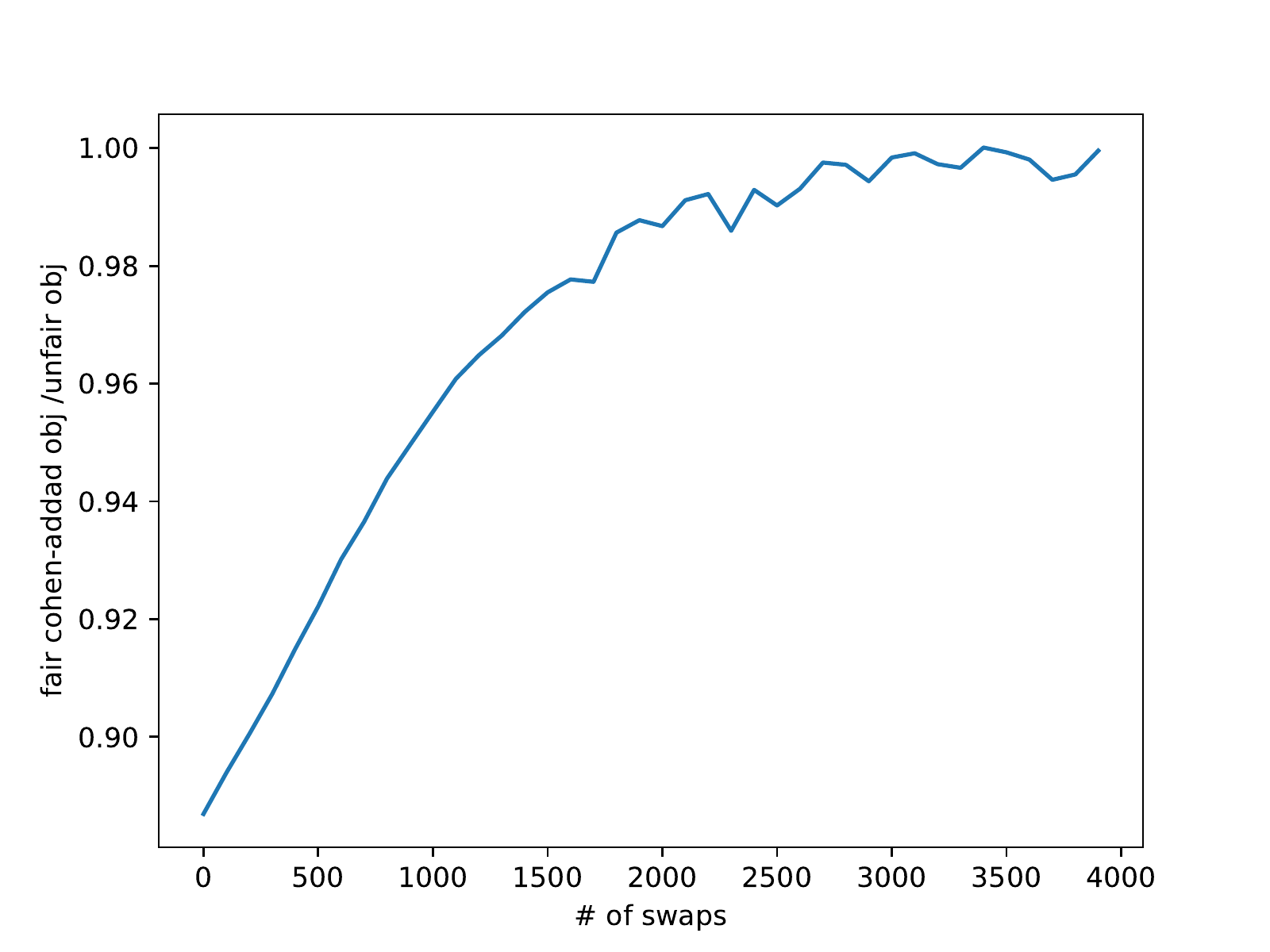}
&
\includegraphics[width=.3\linewidth]{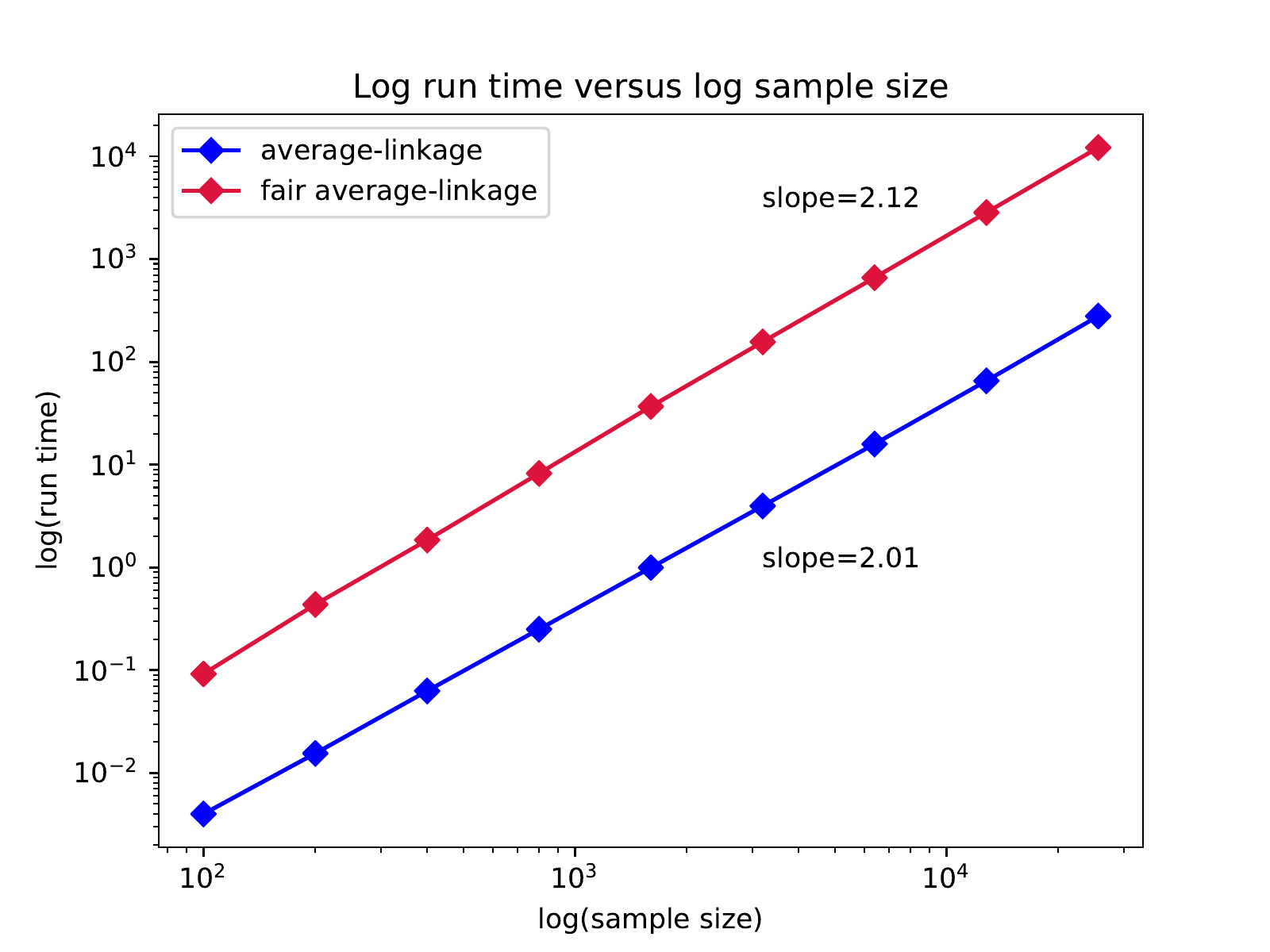}
\\
(i) & (ii) & (iii) \\
\end{tabular}
\caption{
(i) $\ratio_{\text{fairlets}}$, every $100$ swaps.  
(ii) $\ratio_{\myvalue}$, every $100$ swaps.
(iii) \CensusGender: running time vs sample size on a log-log scale.
}
\label{fig:combined}
\vspace{-0.15in}
\end{figure*}

\noindent \textbf{Metrics.}
 We present results for value here, the results for revenue are qualitatively similar.  In our experiments, we track the following quantities.  Let $G$ be the given input instance and let $T$ be the output of our fair hierarchical clustering algorithm.  We consider the following ratio $\ratio_\myvalue = \frac{\myvalue_G(T)}{ \myvalue_G(T')}$, where $T'$ is the tree obtained by the standard average-linkage algorithm. We  consider the fairlet objective function where $\mathcal{Y}$ is a fairlet decomposition. Let $\ratio_{{\text{fairlets}}} = \frac{\phi(\mathcal{Y})}{\dist(V)}$.

\noindent \textbf{Results.} 
Average-linkage algorithm always constructs unfair trees.  For each of the datasets, the algorithm results in monochromatic clusters at some level, strengthening the case for fair algorithms. 

In Table~\ref{emp:ratiockmm}, we show for each dataset the $\ratio_\myvalue$ both at the time of initialization (Initial) and after usint the local search algorithm (Final).   We see the change in the ratio as the local search algorithm performs swaps. Fairness leads to almost no degradation in the objective value as the swaps increase. 
Table~\ref{table:ratio_fairness} shows the 
$\ratio_{{\text{fairlets}}}$ between the initial initialization and the final output fairlets.  As we see, Algorithm~\ref{alg:eps_local_opt_decomp} significantly improves the fairness of the initial random fairlet decomposition. 
\if 0
\begin{figure}[t!]
\centering
  \centering
  \includegraphics[width=.7\linewidth]{fairness_obj_every_100_swaps_3200pts.pdf}
  \caption{$\ratio_{\text{fairlets}}$, every $100$ swaps.}
  \label{fig:test1}
  \vspace{-0.2in}
\end{figure}
\fi
\begin{table*}[t!] \vspace{-0.05in}
\centering 
\small
\caption{Clustering on fairlets found by local search vs. upper bound, at size 1600 (mean $\pm$ std. dev).}
\label{emp:size_1600_upper_bound}
\tiny
\begin{tabular}{r|llllllll}
\hline
Dataset & \CensusGender & \CensusRace & \BankMarriage & \BankAge \\
\hline
Revenue vs. upper bound& $81.89 \pm 0.40$ & $81.75 \pm 0.83$ & $61.53 \pm 0.37$ & $61.66 \pm 0.66$  \\
Value vs. upper bound & $84.31 \pm 0.15$ & $84.52 \pm 0.22$  & $89.17 \pm 0.29 $ & $88.81 \pm 0.18 $  \\
\hline
\end{tabular}
\vspace{-0.1in}
\end{table*}
The more the locally-optimal algorithm improves the objective value of \eqref{obj:fairlet_obj}, the better the tree's performance based on the fairlets.  Figures~\ref{fig:combined}(i) and \ref{fig:combined}(ii) show $\ratio_{\myvalue}$ and $\ratio_{{\text{fairlets}}}$ for every $100$ swaps in the execution of Algorithm~\ref{alg:eps_local_opt_decomp} on a subsample of size $3200$ from Census data set. The plots show that as the fairlet objective value decreases, the value\ objective of the resulting fair tree increases. Such correlation are found on subsamples of all sizes.

Now we compare the objective value of the algorithm with the upper bound on the optimum. We report the results for both the revenue and value objectives,  using fairlets obtained by local search, in Table~\ref{emp:size_1600_upper_bound}. On all datasets, we obtain ratios  significantly better than the theoretical worst case guarantee.   
\if 0
\begin{figure}[t!]
  \centering
  \includegraphics[width=.7\linewidth]{cohen_addad_every_100_swaps_3200pts.pdf}
  \caption{$\ratio_{\myvalue}$, every $100$ swaps.}
  \label{fig:test2}
  \vspace{-0.2in}
\end{figure}

\begin{figure}[t!]
\centering
  \includegraphics[width=.7\linewidth]{adult_runtime_b=1_r=3.pdf}
  \caption{\CensusGender: running time vs sample size on a log-log scale.}
  \label{fig:log_time_comparison}
  \vspace{-0.1in}
\end{figure}
\fi
In Figure~\ref{fig:combined}(iii), we show the average running time on Census data for both the original average-linkage and the fair average-linkage algorithms.  As the sample size grows, the running time scales almost as well as current implementations of average-linkage algorithm. Thus with a modest increase in time, we can obtain a fair hierarchical clustering under the value objective.

\vspace{-.1cm}
\section{Conclusions}
\vspace{-.1cm}
In this paper we extended the notion of fairness to the classical problem of hierarchical clustering under three different objectives (revenue, value, and cost).  Our results show that revenue and value are easy to optimize with fairness; while optimizing cost appears to be  more challenging. 

Our work raises several questions and research directions.  Can the approximations be improved? Can we find better upper and lower bounds for fair cost?  Are there other important fairness criteria?




\balance
\bibliography{references}

\begin{thebibliography}{10}

\bibitem{AhmadianBisect}
Sara Ahmadian, Vaggos Chatziafratis, Alessandro Epasto, Euiwoong Lee, Mohammad
  Mahdian, Konstantin Makarychev, and Grigory Yaroslavtsev.
\newblock Bisect and conquer: Hierarchical clustering via max-uncut bisection.
\newblock In {\em AISTATS}, 2020.

\bibitem{ahmadian}
Sara Ahmadian, Alessandro Epasto, Ravi Kumar, and Mohammad Mahdian.
\newblock Clustering without over-representation.
\newblock In {\em KDD}, pages 267--275, 2019.

\bibitem{ahmadian2020fair}
Sara Ahmadian, Alessandro Epasto, Ravi Kumar, and Mohammad Mahdian.
\newblock Fair correlation clustering.
\newblock In {\em AISTATS}, 2020.

\bibitem{YossiNogaHC}
Noga Alon, Yossi Azar, and Danny Vainstein.
\newblock Hierarchical clustering: a 0.585 revenue approximation.
\newblock In {\em COLT}, 2020.

\bibitem{backurs}
Arturs Backurs, Piotr Indyk, Krzysztof Onak, Baruch Schieber, Ali Vakilian, and
  Tal Wagner.
\newblock Scalable fair clustering.
\newblock In {\em ICML}, pages 405--413, 2019.

\bibitem{hardtbook}
Solon Barocas, Moritz Hardt, and Arvind Narayanan.
\newblock {\em Fairness and Machine Learning}.
\newblock \url{www.fairmlbook.org}, 2019.

\bibitem{bera2019fair}
Suman Bera, Deeparnab Chakrabarty, Nicolas Flores, and Maryam Negahbani.
\newblock Fair algorithms for clustering.
\newblock In {\em NeurIPS}, pages 4955--4966, 2019.

\bibitem{bercea2018cost}
Ioana~O Bercea, Martin Gro{\ss}, Samir Khuller, Aounon Kumar, Clemens
  R{\"o}sner, Daniel~R Schmidt, and Melanie Schmidt.
\newblock On the cost of essentially fair clusterings.
\newblock In {\em APPROX-RANDOM}, pages 18:1--18:22, 2019.

\bibitem{voting}
L.~Elisa Celis, Lingxiao Huang, and Nisheeth~K. Vishnoi.
\newblock Multiwinner voting with fairness constraints.
\newblock In {\em IJCAI}, pages 144--151, 2018.

\bibitem{Ranking}
L.~Elisa Celis, Damian Straszak, and Nisheeth~K. Vishnoi.
\newblock Ranking with fairness constraints.
\newblock In {\em ICALP}, pages 28:1--28:15, 2018.

\bibitem{charikar17}
Moses Charikar and Vaggos Chatziafratis.
\newblock Approximate hierarchical clustering via sparsest cut and spreading
  metrics.
\newblock In {\em SODA}, pages 841--854, 2017.

\bibitem{charikar19}
Moses Charikar, Vaggos Chatziafratis, and Rad Niazadeh.
\newblock Hierarchical clustering better than average-linkage.
\newblock In {\em SODA}, pages 2291--2304, 2019.

\bibitem{chen2019proportionally}
Xingyu Chen, Brandon Fain, Charles Lyu, and Kamesh Munagala.
\newblock Proportionally fair clustering.
\newblock In {\em ICML}, pages 1032--1041, 2019.

\bibitem{chhabra2020fair}
Anshuman Chhabra and Prasant Mohapatra.
\newblock Fair algorithms for hierarchical agglomerative clustering.
\newblock {\em arXiv:2005.03197}, 2020.

\bibitem{chierichetti}
Flavio Chierichetti, Ravi Kumar, Silvio Lattanzi, and Sergei Vassilvitskii.
\newblock Fair clustering through fairlets.
\newblock In {\em NIPS}, pages 5029--5037, 2017.

\bibitem{matroids}
Flavio Chierichetti, Ravi Kumar, Silvio Lattanzi, and Sergei Vassilvitskii.
\newblock Matroids, matchings, and fairness.
\newblock In {\em AISTATS}, pages 2212--2220, 2019.

\bibitem{chiplunkar2020fair}
Ashish Chiplunkar, Sagar Kale, and Sivaramakrishnan~Natarajan Ramamoorthy.
\newblock How to solve fair $k$-center in massive data models.
\newblock In {\em ICML}, 2020.

\bibitem{cohenaddad}
Vincent Cohen{-}Addad, Varun Kanade, Frederik Mallmann{-}Trenn, and Claire
  Mathieu.
\newblock Hierarchical clustering: Objective functions and algorithms.
\newblock In {\em SODA}, pages 378--397, 2018.

\bibitem{dasgupta}
Sanjoy Dasgupta.
\newblock A cost function for similarity-based hierarchical clustering.
\newblock In {\em STOC}, pages 118--127, 2016.

\bibitem{dubes1980clustering}
Richard Dubes and Anil~K. Jain.
\newblock Clustering methodologies in exploratory data analysis.
\newblock In {\em Advances in Computers}, volume~19, pages 113--228. Academic
  Press, 1980.

\bibitem{feige}
Uriel Feige and Robert Krauthgamer.
\newblock A polylogarithmic approximation of the minimum bisection.
\newblock In {\em FOCS}, pages 105--115, 2000.

\bibitem{garey2002computers}
Michael~R Garey and David~S Johnson.
\newblock {\em Computers and Intractability}.
\newblock WH Freeman New York, 2002.

\bibitem{huang2019coresets}
Lingxiao Huang, Shaofeng H.~C. Jiang, and Nisheeth~K. Vishnoi.
\newblock Coresets for clustering with fairness constraints.
\newblock In {\em NeurIPS}, pages 7587--7598, 2019.

\bibitem{jones2020fair}
Matthew Jones, Thy Nguyen, and Huy Nguyen.
\newblock Fair $k$-centers via maximum matching.
\newblock In {\em ICML}, 2020.

\bibitem{karypis2000comparison}
Michael Steinbach~George Karypis, Vipin Kumar, and Michael Steinbach.
\newblock A comparison of document clustering techniques.
\newblock In {\em TextMining Workshop at KDD}, 2000.

\bibitem{kleinberg2017human}
Jon Kleinberg, Himabindu Lakkaraju, Jure Leskovec, Jens Ludwig, and Sendhil
  Mullainathan.
\newblock Human decisions and machine predictions.
\newblock {\em The Quarterly Journal of Economics}, 133(1):237--293, 2017.

\bibitem{kleindessner2}
Matth{\"{a}}us Kleindessner, Pranjal Awasthi, and Jamie Morgenstern.
\newblock Fair $k$-center clustering for data summarization.
\newblock In {\em ICML}, pages 3448--3457, 2019.

\bibitem{spectralfair}
Matthäus Kleindessner, Samira Samadi, Pranjal Awasthi, and Jamie Morgenstern.
\newblock Guarantees for spectral clustering with fairness constraints.
\newblock In {\em ICML}, pages 3448--3457, 2019.

\bibitem{mann2008use}
Charles~F Mann, David~W Matula, and Eli~V Olinick.
\newblock The use of sparsest cuts to reveal the hierarchical community
  structure of social networks.
\newblock {\em Social Networks}, 30(3):223--234, 2008.

\bibitem{moseleywang}
Benjamin Moseley and Joshua Wang.
\newblock Approximation bounds for hierarchical clustering: Average linkage,
  bisecting $k$-means, and local search.
\newblock In {\em NIPS}, pages 3094--3103, 2017.

\bibitem{rajaraman2011mining}
Anand Rajaraman and Jeffrey~David Ullman.
\newblock {\em Mining of Massive Datasets}.
\newblock Cambridge University Press, 2011.

\bibitem{rosner}
Clemens R{\"{o}}sner and Melanie Schmidt.
\newblock Privacy preserving clustering with constraints.
\newblock In {\em ICALP}, pages 96:1--96:14, 2018.

\end{thebibliography}
\bibliographystyle{plain}

\ifappendix
\newpage
\begin{appendix}
\section*{Appendix}
\newcommand{\Avg}{\mathit{Avg}}
\newcommand{\RED}{\mathit{red}}
\newcommand{\BLUE}{\mathit{blue}}

\section{Approximation algorithms for weighted hierarchical clustering}
\label{app:weighted_algo}

In this section we first prove that running constant-approximation algorithms on fairlets gives good solutions for value objective, and then give constant approximation algorithms for both revenue and value in weighted \hc problem, as is mentioned in Corollary~\ref{cor:revenue} and~\ref{cor:main}. That is, a weighted version of average-linkage, for both weighted revenue and value objective, and weighted ($\eps/n$)-locally densest cut algorithm, which works for weighted value objective. Both proofs are easily adapted from previous proofs in~\citet{cohenaddad} and~\citet{moseleywang}.

\subsection{Running constant-approximation algorithms on fairlets}


In this section, we prove Theorem~\ref{thm:reduction}, which says if we run any $\beta$-approximation algorithm for the upper bound on weighted value on the fairlet decomposition, we get a fair tree with minimal loss in approximation ratio.  For the remainder of this section, fix any hierarchical clustering algorithm $A$ that is guaranteed on any \emph{weighted} input $(V,\dist,\weight)$ to construct a hierarchical clustering with objective value at least $\beta \weight(V) \dist(V)$ for the value objective on a weighted input.  Recall that we extended the value objective to a weighted variant in the Preliminaries Section and  $\weight(V) = \sum_{u\in V} \weight_u$. Our aim is to show that we can combine $A$ with the fairlet decomposition $\mathcal{Y}$ introduced in the prior section to get a fair hierarchical clustering that is a $\beta(1-\epsilon)$-approximation for the value objective, if $\phi(\mathcal{Y}) \leq \epsilon \dist(V)$. 

 In the following definition, we transform the point set to a new set of points that are weighted.  We will analyze $A$ on this new set of points. We then show how we can relate this to the objective value of the optimal tree on the original set of points.

\begin{definition}
Let  $\mathcal{Y} = \{Y_1, Y_2, \ldots \}$ be the fairlet decomposition for $V$ that is produced by the local search algorithm. Define $V(\mathcal{Y})$ as follows:
\begin{itemize}
    \item Each set $Y_i$  has a corresponding point $a_i$ in $V(\mathcal{Y})$.
    \item The weight $m_i$ of $a_i$ is set to be $|Y_i|$.
    \item For each partitions $Y_i,  Y_j$, where $i \neq j$ and $Y_i,Y_j \in \mathcal{Y}$, $\dist(a_i,a_j)= \dist(Y_i,Y_j)$.
\end{itemize}
\end{definition}

We begin by observing the objective value that $A$ receives on the instance $V(\mathcal{Y})$ is large compared to the weights in the original instance.

\begin{theorem}\label{claim:rev}
On the instance  $V(\mathcal{Y})$ the algorithm $A$ has a total weighted objective of $\beta(1-\epsilon)  \cdot n \dist(V)$.
\end{theorem}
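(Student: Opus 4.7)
The plan is to apply the $\beta$-approximation guarantee of $A$ directly to the contracted weighted instance $V(\mathcal{Y})$ and then relate the quantities appearing in the upper bound \eqref{eq:maxvalue} for $V(\mathcal{Y})$ back to the quantities on the original input $V$.

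First I would invoke the assumption on $A$. Since $A$ is a $\beta$-approximation for the upper bound \eqref{eq:maxvalue} in the weighted setting, running it on $V(\mathcal{Y})$ with weights $m_i = |Y_i|$ and distances $\dist(a_i,a_j)=\dist(Y_i,Y_j)$ returns a tree whose weighted value is at least $\beta \cdot \vweight(V(\mathcal{Y})) \cdot \dist(V(\mathcal{Y}))$. So I would compute each of these factors separately. The total weight telescopes nicely:
\[
\vweight(V(\mathcal{Y})) \;=\; \sum_i m_i \;=\; \sum_i |Y_i| \;=\; n,
\]
using that $\mathcal{Y}$ is a partition of $V$.

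Next I would relate the pairwise distance sum on $V(\mathcal{Y})$ to that on $V$. By definition of $\dist(Y_i,Y_j)$ as the sum of interpoint distances between $Y_i$ and $Y_j$, we get the clean decomposition
\[
\dist(V) \;=\; \sum_{i<j} \dist(Y_i,Y_j) \;+\; \sum_i \sum_{\{u,v\}\subseteq Y_i}\dist(u,v) \;=\; \dist(V(\mathcal{Y})) \;+\; \phi(\mathcal{Y}),
\]
so $\dist(V(\mathcal{Y})) = \dist(V) - \phi(\mathcal{Y})$. Using the hypothesis $\phi(\mathcal{Y}) \leq \epsilon \cdot \dist(V)$ carried over from Theorem~\ref{thm:reduction}, this gives $\dist(V(\mathcal{Y})) \geq (1-\epsilon)\dist(V)$.

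Combining the two bounds, the weighted value of $A$'s output on $V(\mathcal{Y})$ is at least $\beta \cdot n \cdot (1-\epsilon)\,\dist(V)$, which is exactly the claimed bound. There is no real obstacle here beyond making sure the contracted-instance distances really do telescope into $\dist(V)$ minus the intra-fairlet contribution $\phi(\mathcal{Y})$; once that identity is written down the rest is a one-line substitution into the $\beta$-approximation guarantee. The statement does not require any fairness reasoning at this stage — fairness is built into $\mathcal{Y}$ and is preserved automatically when a tree on $V(\mathcal{Y})$ is expanded into a tree on $V$, which is what the outer proof of Theorem~\ref{thm:reduction} will use.
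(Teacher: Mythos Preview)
Your proposal is correct and follows essentially the same argument as the paper: compute $\vweight(V(\mathcal{Y}))=n$, observe $\dist(V(\mathcal{Y}))=\dist(V)-\phi(\mathcal{Y})\geq(1-\epsilon)\dist(V)$, and apply the $\beta$-approximation guarantee to the resulting upper bound. The paper's proof is slightly terser but identical in substance.
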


\begin{proof}
Notice that $m(V(\cy)) = |V| = n$. Consider the total  sum of all the distances in  $V(\mathcal{Y})$.  This is $\sum_{a_i,a_j \in V(\mathcal{Y})} \dist(a_i, a_j) = \sum_{Y_i,Y_j \in \mathcal{Y}} \dist(Y_i,Y_j) = \dist(V) - \phi(\mathcal{Y})$.  The upper bound on the optimal solution is $(\sum_{Y_i \in \mathcal{Y}}m_i)   (\dist(V) - \phi(\mathcal{Y})= n  (\dist(V) - \phi(\mathcal{Y}))$. Since $\phi(\mathcal{Y}) \leq \epsilon \dist(V)$, this upper bound is at least $ (1-\epsilon)n \dist(V)$.  Theorem \ref{thm:reduction} follows from the fact that the algorithm $A$ archives a weighted revenue at least a $\beta$ factor of the total weighted distances. 
\end{proof}

\subsection{Weighted hierarchical clustering: Constant-factor approximation}

For weighted \hc with positive integral weights, we define the weighted average-linkage algorithm for input $(V, \dist, \weight)$ and $(V, \simi, \weight)$. Define the \emph{average distance} to be $\Avg(A,B) = \frac{\dist(A,B)}{\weight(A) \weight(B)}$ for dissimilarity-based input, and $\Avg(A,B) = \frac{\simi(A,B)}{\weight(A) \weight(B)}$ for similarity-based input. In each iteration, weighted average-linkage seeks to merge the clusters which minimizes this value, if dissimilarity-based, and maximizes this value, if similarity-based.

\begin{lem} \label{lem:avlk_approx_weighted}
Weighted average-linkage is a $\frac{2}{3} (resp., \frac{1}{3})$ approximation for the upper bound on weighted value (resp., revenue) objective with positive, integral weights.
\end{lem}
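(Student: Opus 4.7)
The plan is to adapt the original approximation proofs for average-linkage—Moseley-Wang~\cite{moseleywang} for revenue and Cohen-Addad et al.~\cite{cohenaddad} for value—to the vertex-weighted setting by systematically replacing cluster sizes $|\cdot|$ with cluster weights $m(\cdot)$ throughout. Weighted average-linkage chooses at each step the pair of clusters $(A, B)$ maximizing (for revenue) or minimizing (for value) $\Avg(A, B) = \simi(A, B)/(m(A)\,m(B))$. So at any merge step $t$ producing $A_t \cup B_t$, and for any other cluster $D$ still present at that step, the local optimality of the choice implies
\begin{equation*}
m(D)\, \simi(A_t, B_t) \;\geq\; m(B_t)\, \simi(A_t, D) \quad \text{and} \quad m(D)\, \simi(A_t, B_t) \;\geq\; m(A_t)\, \simi(B_t, D),
\end{equation*}
with the inequalities reversed (and $\dist$ in place of $\simi$) for value. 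These are exactly the weighted analogues of the local inequalities that drive both unweighted proofs.

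For revenue, I would first decompose the objective by merge time:
\begin{equation*}
\rev_G(T_{\mathrm{AL}}) \;=\; \sum_t \simi(A_t, B_t)\bigl(m(V) - m(A_t) - m(B_t)\bigr) \;=\; \sum_t \sum_{D} m(D)\, \simi(A_t, B_t),
\end{equation*}
where the inner sum ranges over clusters $D$ present at time $t$ other than $A_t, B_t$. I would then replay the triple-counting argument of~\cite{moseleywang}: for each such $(A_t, B_t, D)$, the two local inequalities above give $m(D)\,\simi(A_t, B_t) \geq \tfrac{1}{3}\bigl(m(D)\,\simi(A_t, B_t) + m(B_t)\,\simi(A_t, D) + m(A_t)\,\simi(B_t, D)\bigr)$. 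Summing over all merges and all remaining clusters, each unordered triple of original vertices $\{u, v, w\}$ gets charged exactly once with the full symmetric expression, yielding $\rev_G(T_{\mathrm{AL}}) \geq \tfrac{1}{3}\sum_{\{u,v,w\}}\bigl(\simi(u,v)m(w) + \simi(v,w)m(u) + \simi(u,w)m(v)\bigr)$, which is at least a $\tfrac{1}{3}$ fraction of~\eqref{eq:maxrevenue}. For value, the Cohen-Addad et al.\ analysis of each merge's value contribution proceeds symmetrically under the substitution $|\cdot| \mapsto m(\cdot)$, giving $\val_G(T_{\mathrm{AL}}) \geq \tfrac{2}{3}\, m(V)\, \dist(V)$, matching~\eqref{eq:maxvalue}.

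The main obstacle is the bookkeeping in the triple-counting step: one must verify that summing $m(D)\, \simi(A_t, B_t)$ over all merges $t$ and all remaining clusters $D$ indeed reproduces each triple of original vertices exactly once (so nothing is double-counted or missed) and that the boundary term $m_{\min}$ appearing in~\eqref{eq:maxrevenue} (from excluded diagonal pairs) is handled correctly. The positive integrality of the weights is convenient—one may view each weight-$m$ vertex as $m$ co-located unit-weight copies clustered together at a leaf of $T$, so that weighted averages literally match unweighted averages on the expanded instance once intra-copy merges are set aside—but integrality is not essential; arbitrary positive weights work identically. Beyond this bookkeeping, no new ideas are required, which is why the authors describe the adaptation as straightforward.
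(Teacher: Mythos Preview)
Your proposal is correct. For revenue, both you and the paper directly replay the Moseley--Wang argument with $|\cdot|$ replaced by $m(\cdot)$; your triple-counting formulation is equivalent to the paper's phrasing via the per-merge inequality $\mergecost(A,B) \leq 2\,\mergerev(A,B)$. The only notable difference is in the value case: you propose to substitute weights directly into the Cohen-Addad et al.\ analysis, whereas the paper instead reduces to the unweighted theorem by expanding each weight-$m(p)$ vertex into $m(p)$ co-located zero-distance duplicates (with inter-group distances rescaled by $1/(m(p)m(q))$), observes that unweighted average-linkage merges each duplicate set first and thereafter mimics weighted average-linkage, and concludes since both instances share the upper bound $m(V)\,\dist(V)$. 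Your direct approach is slightly more general in that it does not rely on integrality (which you correctly note is inessential); the paper's reduction is shorter since it invokes~\cite{cohenaddad} as a black box rather than reopening that proof, but it does use the integral-weight hypothesis. Amusingly, you mention the duplication trick in passing as a convenient interpretation---the paper makes it the actual argument for value.
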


\begin{proof}
We prove it for weighted value first. This is directly implied by the fact that average-linkage is $\frac{2}{3}$ approximation for unweighted value objective, as is proved in \citet{cohenaddad}. We have already seen in the last subsection that a unweighted input $V$ can be converted into weighted input $V(\cy)$. Vice versa, we can construct a weighted input $(V, \dist, \weight)$ into unweighted input with same upper bound for value objective. 

In weighted \hc we treat each point $p$ with integral weights as $\weight(p)$ duplicates of points with distance $0$ among themselves, let's call this set $S(p)$. For two weighted points $(p, \weight(p))$ and $(q, \weight(q))$, if $i \in S(p), j \in S(q)$, let $\dist(i,j) = \frac{\dist(p,q)}{\weight(p) \weight(q)}$. This \emph{unweighted} instance, composed of many duplicates, has the same upper bound as the weighted instance. Notice that running average-linkage on the unweighted instance will always choose to put all the duplicates $S(p)$ together first for each $p$, and then do \hc on top of the duplicates. Thus running average-linkage on the unweighted input gives a valid \hc tree for weighted input. Since unweighted value upper bound equals weighted value upper bound, the approximation ratio is the same.

Now we prove it for weighted revenue. In \citet{moseleywang}, average-linkage being $\frac{1}{3}$ approximation for unweighted revenue is proved by the following. Given any clustering $\mathcal{C}$, if average-linkage chooses to merge $A$ and $B$ in $\mathcal{C}$, we define a local revenue for this merge:
$$\text{merge-rev}(A,B) = \sum_{C \in \mathcal{C} \setminus \{A,B\}} |C||A||B| \Avg(A,B).$$
And correspondingly, a local cost:
\begin{align*}
\text{merge-cost}(A,B) =  \sum_{C \in \mathcal{C} \setminus \{A,B\}} (|B||A||C| \Avg(A,C) 
 + |A||B||C| \Avg(B,C)).
\end{align*}
Summing up the local revenue and cost over all merges gives the upper bound. \citet{moseleywang} used the property of average-linkage to prove that at every merge, $\text{merge-cost}(A,B) \leq 2\text{merge-rev}(A,B)$, which guarantees the total revenue, which is the summation of $\text{merge-rev}(A,B)$ over all merges, is at least $\frac{1}{3}$ of the upper bound. For the weighted case, we define 
\begin{align*}
\text{merge-rev}(A,B)
= \sum_{C \in \mathcal{C} \setminus \{A,B\}} \weight(C)\weight(A)\weight(B) \Avg(A,B).
\end{align*}
And 
\begin{align*}
\text{merge-cost}(A,B)
\sum_{C \in \mathcal{C} \setminus \{A,B\}} (\weight(B)\weight(A)\weight(C) \Avg(A,C)
+ \weight(A)\weight(B)\weight(C) \Avg(B,C)).
\end{align*}
And the rest of the proof works in the same way as in \citet{moseleywang}, proving weighted average-linkage to be $\frac{1}{3}$ for weighted revenue.
\end{proof}

Next we define the weighted $(\eps/n)$-locally-densest cut algorithm. The original algorithm, introduced in \citet{cohenaddad}, defines a cut to be $\frac{\dist(A,B)}{|A||B|}$. It starts with the original set as one cluster, at every step, it seeks the partition of the current set that locally maximizes this value, and thus constructing a tree from top to bottom. For the weighted input $(V, \dist, \weight)$, we define the cut to be  $\frac{\dist(A,B)}{\weight(A)\weight(B)}$, and let $n = \weight(V)$. For more description of the algorithm, see Algorithm 4 in Section 6.2 in \citet{cohenaddad}.

\begin{lem} \label{lem:densest_cut_approx_weighted}
Weighted $(\eps/n)$-locally-densest cut algorithm is a $\frac{2}{3}-\eps$ approximation for weighted value objective. 
\end{lem}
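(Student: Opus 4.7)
The plan is to lift Cohen-Addad et al.'s unweighted analysis of the $(\eps/n)$-locally-densest cut algorithm to the weighted setting by replacing every cardinality $|X|$ with the integer weight $\weight(X)$ throughout and setting $n := \weight(V)$. Under this dictionary, the algorithm's local guarantee becomes: at every internal node $N$ of the output tree with children $L, R$, $\dist(L,R)/(\weight(L)\weight(R)) \geq \max_{(A,B)} \dist(A,B)/(\weight(A)\weight(B))/(1+\eps/n)$, where the max is over bipartitions $(A,B)$ of the leaves under $N$; meanwhile $N$'s contribution to the weighted value is $\weight(N)\dist(L,R)$, and the target upper bound is $\weight(V)\dist(V)$ from \eqref{eq:maxvalue}.

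First I would reproduce Cohen-Addad's potential / amortization argument in this weighted form. For each internal node $N$ with children $L, R$, write its contribution as $\weight(N)\weight(L)\weight(R)$ times the local density, lower-bound that density using the local $(\eps/n)$-guarantee, and telescope over internal nodes; each pair $u,v \in V$ then contributes $\dist(u,v)\,\weight(u\lor v)$ exactly once to the sum, which is precisely $\val_G(T)$ in the weighted sense. The $(1+\eps/n)$ slack compounds over at most $n$ cuts, giving an overall $(1+\eps/n)^n \leq e^\eps$ multiplicative loss, which translates into an additive $O(\eps)$ degradation in the final approximation ratio.

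The main obstacle is to verify that every intermediate inequality in Cohen-Addad's proof is homogeneous in the cluster sizes, so that substituting $\weight(\cdot)$ for $|\cdot|$ preserves correctness; this holds for their convex-combination density bounds and for the split-merge charging lemma, but the verification must be carried out carefully for the inequalities involving a third "competitor" cluster. An alternative, conceptually cleaner route is to reduce to the unweighted case via the duplication construction already employed in Lemma~\ref{lem:avlk_approx_weighted}: replace each $p$ by $\weight(p)$ copies with zero intra-class distance and inter-class distance $\dist(p,q)/(\weight(p)\weight(q))$, observe that $|V'|\cdot\dist(V') = \weight(V)\dist(V)$, and argue that a locally-densest cut may always be chosen not to split a zero-distance duplicate class $S(p)$---since splitting $S(p)$ adds nothing to the numerator of the density while consuming mass in the denominator. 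Invoking Cohen-Addad's unweighted theorem on $V'$ and contracting the duplicate classes then yields the stated $(\tfrac23 - \eps)$-approximation for the weighted instance without redoing the potential argument, at the cost of carefully formalizing the non-splitting step when several duplicate classes interact in the same cluster.
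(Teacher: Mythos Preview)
Your two routes both aim at the right reduction, and the paper in fact uses the duplication construction of your Route~2. However, the bridging step is not the non-splitting argument you propose, and your justification for that argument is incorrect. Splitting $S(p)$ across a cut contributes zero from intra-$S(p)$ edges, but shuttling copies of $p$ between sides still changes the numerator via their distances to \emph{other} points; so ``adds nothing to the numerator while consuming mass in the denominator'' is false, and you cannot conclude that an $(\eps/n)$-locally-densest cut on the duplicated instance can always be chosen not to split duplicate classes. The weighted algorithm's output, viewed on the duplicated instance, is only guaranteed to be locally optimal with respect to moving \emph{entire} duplicate classes---a strictly weaker guarantee than the single-point local optimality that Cohen-Addad et~al.'s Theorem~6.5 assumes---so you cannot invoke their theorem as a black box.

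What the paper does instead is take the set-level local-optimality condition
\[
\Bigl(1+\tfrac{\eps}{n}\Bigr)\,\frac{\dist(A,B)}{|A|\,|B|}\ \ge\ \frac{\dist(A\setminus S,\,B\cup S)}{(|A|-|S|)(|B|+|S|)}
\]
and show, by a short but non-obvious chain of inequalities (using that all $q\in S$ have identical distances to everything outside $S$, and that $(|A||B|+|A||S|-|B||S|-|S|)\ge(|A|-|S|)(|B|+|S|)$), that it implies the \emph{single-point} inequality
\[
\Bigl(1+\tfrac{\eps}{n}\Bigr)\,\frac{\dist(A,B)}{|A|\,|B|}\ \ge\ \frac{\dist(A,B)+\dist(q,A)-\dist(q,B)}{(|A|-1)(|B|+1)}
\]
for every $q\in S$. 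From this point the remainder of the proof of Theorem~6.5 in~\citet{cohenaddad} applies verbatim on the duplicated instance. Your Route~1 runs into trouble precisely here: Cohen-Addad et~al.\ sum the point-level inequality over all points of $A$ using the common denominator $(|A|-1)(|B|+1)$; a direct weighted rewrite produces denominators $(\weight(A)-\weight(p))(\weight(B)+\weight(p))$ that vary with $p$, and the summation no longer simplifies. The paper's set-to-point algebraic reduction is exactly what restores the uniform denominator and makes the summation go through, and it is the step missing from your plan.
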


\begin{proof}
Just as in the average-linkage proof, we convert each weighted point $p$ into a set $S$ of $\weight(p)$ duplicates of $p$. Notice that the converted unweighted \hc input has the same upper bound as the weighted \hc input, and the $\eps/n$-locally-densest cut algorithm moves all the duplicate sets $S$ around in the unweighted input, instead of single points as in the original algorithm in \citet{cohenaddad}.

Focus on a split of cluster $A \cup B$ into $(A,B)$. Let $S$ be a duplicate set. $\forall S \subseteq A$, where $S$ is a set of duplicates, we must have
$$(1+\frac{\epsilon}{n})\frac{\dist(A,B)}{|A||B|} \geq \frac{\dist(A \setminus S, B \cup S)}{(|A| - |S|)(|B| + |S|)}.$$
Pick up a point $q \in S$, 
\begin{align*}
    & \qquad (1+\frac{\epsilon}{n})\dist(A,B)|S|(|A|-1)(|B|+1)\\
    & = (1+\frac{\epsilon}{n})\dist(A,B)(|A||B|+|A|-|B|-1)|S|\\
    & = (1+\frac{\epsilon}{n})\dist(A,B)(|A||B|+|A||S|-|B||S|-|S|) + (1+\frac{\epsilon}{n})\dist(A,B)(|A||B|)(|S|-1)\\
    & \geq (1+\frac{\epsilon}{n})\dist(A,B)(|A|-|S|)(|B|+|S|) + \dist(A,B)|A||B|(|S|-1)\\
    & \geq |A||B|\dist(A \setminus S, B \cup S) + \dist(A,B)|A||B|(|S|-1)\\
    & = |A||B| (\dist(A,B) + |S|\dist(q,A) - |S|\dist(q,B)) + |A||B|(|S|-1)\dist(A,B)\\
    & = |A||B||S|(\dist(A,B) + \dist(q,A) - \dist(q,B)).
\end{align*}
Rearrange the terms and we get the following inequality holds for any point $q \in A$:
$$\left( 1+\frac{\epsilon}{n} \right)\frac{\dist(A,B)}{|A||B|} \geq \frac{\dist(A,B) + \dist(q,A) - \dist(q,B)}{(|A|-1)(|B|+1)}.$$
The rest of the proof goes exactly the same as the proof in~\citet[Theorem 6.5]{cohenaddad}.
\end{proof}

\section{Proof of Theorem~\ref{thm:fair_moseley_wang}} \label{sec:thmproofsec}

\begin{proof}
Let $\mathcal{A}$ be the $\beta$-approximation algorithm to (\ref{eq:maxrevenue}).  For a given instance $G = (V, s)$, let $\cy = \{ Y_1, Y_2, \ldots \}$ be a fairlet decomposition of $V$; let $m_f = \max_{Y \in \mathcal{Y}} |Y|$.   Recall that $n = |V|$.  

We use $\cy$ to create a weighted instance $G_\cy = (\cy, \simi_\cy, \vweight_\cy)$.  
For $Y, Y' \in \cy$, we define
$\simi(Y, Y') = \sum_{i \in Y, j \in Y'} \simi(i, j)$ and we define $\vweight_\cy(Y) = |Y|$.

We run $\mathcal{A}$ on $G_\cy$ and let $T_{\cy}$ be the hierarchical clustering obtained by $\mathcal{A}$.  To extend this to a tree $T$ on $V$, we simply place all the points in each fairlet as leaves under the corresponding vertex in $T_\cy$. 

We argue that $\rev_G(T) \geq \beta\left(1-\frac{2\fsize}{n}\right)(n-2)\simi(V)$.

Since $\mathcal{A}$ obtains a $\beta$-approximation to hierarchical clustering on $G_\cy$, we have $\rev_{G_\cy}\big(T_\cy) \geq \beta \cdot \sum_{Y,  Y' \in \mathcal{Y}}\simi(Y, Y') (n- \vweight(Y) - \vweight(Y')). $

Notice the fact that, for any pair of points $u,v$ in the same fairlet $Y \in \cy$, the revenue they get in the tree $T$ is $(n-\vweight(Y))\simi(u,v)$. Then using $rev_G(T) = \sum_{Y \in \mathcal{Y}} (n-\vweight(Y)) \simi(Y) + \rev(T_\cy)$,
\begin{align*}
    \rev_G(T)   &\geq \sum_{Y \in \mathcal{Y}} \beta(n-\vweight(Y))\simi(Y)  + \beta \sum_{Y, Y' \in \mathcal{Y}} \simi(Y, Y')(n- \vweight(Y) - \vweight(Y')) \\
    &\geq \beta(n - 2\fsize)\left(\sum_{Y \in \cy}\simi(Y) + \sum_{Y, Y' \in \mathcal{Y}}\simi(Y, Y')\right)
    \geq \beta\left(1-\frac{2\fsize}{n}\right)(n-2)\simi(V).
\end{align*}
Thus the resulting tree $T$ is a $\beta\left(1-\frac{2\fsize}{n}\right)$-approximation of the upper bound.
\end{proof}

\section{Proofs for $(\eps/n)$-locally-optimal local search algorithm}
\label{app:algo_is_correct}
In this section, we prove that Algorithm \ref{alg:eps_local_opt_decomp} gives a good fairlet decomposition for the fairlet decomposition objective \ref{obj:fairlet_obj}, and that it has polynomial run time.

\subsection{Proof for a simplified version of Lemma~\ref{lem:algo_is_correct}}

In Subsection~\ref{app:multi_color}, we will prove Lemma~\ref{lem:algo_is_correct}. For now, we will consider a simpler version of Lemma~\ref{lem:algo_is_correct} in the  context of \citet{chierichetti}'s disparate impact problem, where we have red and blue points and strive to preserve their ratios in all clusters. Chierichetti et al.~\citet{chierichetti} provided a valid fairlet decomposition in this context, where each fairlet has at most $b$ blue points and $r$ red points. Before going deeper into the analysis,  we state the following useful proposition.

\begin{prop}\label{prop:rat}
 Let $\totr = |\RED(V)|$ be the total number of red points and $\totb=|\BLUE(V)|$ the number of blue points.  We have that, $\max \{\frac{r}{\totr}, \frac{b}{\totb}\} 
 \leq \frac{2(b+r)}{n}$.
\end{prop}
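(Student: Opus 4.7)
The plan is a short algebraic argument that ultimately rests on one observation: the ratio of the two color populations in $V$ inherits the balance of the Chierichetti et al.\ fairlets. By the symmetry of the claim under the simultaneous exchange $(b,\totb)\leftrightarrow (r,\totr)$, I would first assume without loss of generality that $b\le r$. In the Chierichetti et al.\ construction, each fairlet is fair in the disparate-impact sense with at most $b$ blue and at most $r$ red points; in particular each fairlet has counts $b_i\le b$ and $r_i\le r$ of the two colors with $b_i/r_i\ge b/r$, i.e.\ $b_i\, r \ge r_i\, b$.

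Summing this pointwise inequality over all fairlets yields $\totb\, r \ge \totr\, b$, i.e.\ $\totb/\totr\ge b/r$. Combined with $\totb+\totr=n$, this pins down both totals: I get $\totb\ge \tfrac{b}{b+r}n$, and since $b\le r$ forces $\totb\le\totr$ I also get $\totr\ge n/2$. Substituting gives the two parallel bounds
\[
\frac{b}{\totb}\le \frac{b+r}{n}\le \frac{2(b+r)}{n}, \qquad \frac{r}{\totr}\le \frac{2r}{n}\le \frac{2(b+r)}{n},
\]
and taking the maximum completes the proof.

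The only point that needs a little care is the passage from per-fairlet fairness to the dataset-level inequality $\totb/\totr\ge b/r$; once the pointwise $b_i r \ge r_i b$ is written down, summation over fairlets handles it immediately. Beyond bookkeeping the symmetric WLOG, I do not anticipate any real obstacle — the rest is two lines of algebra dictated by $\totb+\totr=n$.
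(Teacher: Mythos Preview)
Your approach mirrors the paper's almost exactly: both rely on $\totb/\totr \ge b/r$ together with $\totb+\totr=n$ to obtain $\totb \ge \tfrac{b}{b+r}n$, and then invoke $\totr \ge n/2$ to bound $r/\totr$. Your per-fairlet derivation of $\totb/\totr \ge b/r$ is a pleasant addition; the paper simply recalls this as the balance assumption on $V$.

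There is, however, a genuine gap at the step ``since $b\le r$ forces $\totb\le\totr$.'' This does not follow from the constraints you have written down. Take $(b,r)=(2,10)$: a fairlet with $(b_i,r_i)=(2,1)$ satisfies $b_i\le b$, $r_i\le r$, and has balance $1/2 \ge b/r = 1/5$, so your per-fairlet inequality $b_i r \ge r_i b$ holds. A dataset built entirely from such fairlets has $\totb=2\totr$, contradicting $\totb\le\totr$; moreover in that case $r/\totr = 10/\totr$ while $2(b+r)/n = 8/\totr$, so the proposition itself would fail. In other words, the symmetry you invoke lets you normalize $b\le r$, but it does \emph{not} simultaneously pin down which of $\totb,\totr$ is smaller.

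The paper sidesteps this by taking the WLOG on the totals directly: it assumes $\totb\le\totr$ (the Chierichetti et al.\ convention that blue is the minority color, with $b\le r$ chosen accordingly), which immediately yields $\totr\ge n/2$. If you replace your WLOG by $\totb\le\totr$, the gap closes and the remainder of your algebra is the paper's proof verbatim.
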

\begin{proof}
Recall that $\mathit{balance}(V) = \frac{\totb}{\totr} \geq \frac{b}{r}$, and wlog $\totb \leq \totr$. Since the fractions are positive and  $\frac{\totb}{\totr} \geq \frac{b}{r}$ we know that $\frac{\totb}{\totb+ \totr} \geq \frac{b}{b+r}$. Since $\totb+\totr=n$ we conclude that $\totb \geq \frac{b}{b+r}n$.  Similarly, we conclude that $\frac{\totr}{\totb+ \totr} \leq \frac{r}{b+r}$.  Therefore $\totr \leq \frac{r}{b+r}n$.  

Thus, $\frac{r}{\totr} \geq \frac{b+r}{n} \geq \frac{b}{\totb}$. However, since $\totb \leq \totr$ and $\totb + \totr = n$, $\totr \geq \frac{1}{2}n$, $\frac{r}{\totr} \leq \frac{2r}{n} \leq \frac{2(b+r)}{n}$.
\end{proof}

Using this, we can define and prove the following lemma, which is a simplified version of Lemma~\ref{lem:algo_is_correct}.

\begin{lem} 
\label{lem:algo_is_correct_prop}
The fairlet decomposition $\mathcal{Y}$ computed by Algorithm  \ref{alg:eps_local_opt_decomp} has an objective value for (\ref{obj:fairlet_obj}) of at most $(1+\epsilon)\frac{2(b+r)}{n}\dist(V)$.
\end{lem}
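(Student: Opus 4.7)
The plan is to case-split on which exit condition of Algorithm~\ref{alg:eps_local_opt_decomp} triggered at termination. If the loop exited because $\phi(\mathcal{Y})\leq\Delta$, then since $\Delta=\frac{m_f}{n}\dist_{\max}\leq\frac{m_f}{n}\dist(V)$ and the two-color fairlet size satisfies $m_f\leq b+r$, we immediately obtain $\phi(\mathcal{Y})\leq\frac{b+r}{n}\dist(V)\leq (1+\epsilon)\frac{2(b+r)}{n}\dist(V)$ as desired. The interesting case is when the algorithm halts because no monochromatic swap improves $\phi$ by the factor $1+\epsilon/n$; rewriting the multiplicative condition $\phi(\mathcal{Y}')\geq\phi(\mathcal{Y})/(1+\epsilon/n)$ additively and using $\frac{\epsilon/n}{1+\epsilon/n}\leq\epsilon/n$, local optimality implies that for every same-color pair $u\in Y_i$, $v\in Y_j$ with $i\neq j$,
\[
\dist(v,Y_i\setminus\{u\})+\dist(u,Y_j\setminus\{v\})-\dist(u,Y_i\setminus\{u\})-\dist(v,Y_j\setminus\{v\})\geq -\frac{\epsilon}{n}\phi(\mathcal{Y}).
\]

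For each color $c$ (with $n_c$ total points and $n_{c,i}$ points in fairlet $Y_i$, and $k(u)$ denoting the index of the fairlet containing $u$), I sum this inequality over all ordered pairs $(u,v)\in c\times c$ in distinct fairlets. The $(u\leftrightarrow v)$ symmetry of the four terms collapses the sum to $2(\Sigma_1^{(c)}-\Sigma_2^{(c)})\geq -\frac{\epsilon}{n}\phi(\mathcal{Y})\,N_c$, where $N_c$ counts such ordered pairs, $\Sigma_2^{(c)}=\sum_{u\in c}\dist(u,Y_{k(u)}\setminus\{u\})(n_c-n_{c,k(u)})$ is obtained by grouping the sum around the moving endpoint, and $\Sigma_1^{(c)}=\sum_w(n_{c,k(w)}-[w\in c])\,\dist(w,\, c\setminus Y_{k(w)})$ is obtained by pulling the inner sum over $w\in Y_{k(u)}\setminus\{u\}$ into the outer position and counting how many valid $u$'s exist for each stationary $w$.

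Proposition~\ref{prop:rat} is the crucial input: writing $\alpha:=\frac{2(b+r)}{n}$, it yields $n_{c,i}\leq m_c\leq\alpha n_c$ for every color $c$ and fairlet~$Y_i$. Plugging $n_c-n_{c,k(u)}\geq(1-\alpha)n_c$ and $n_{c,k(w)}\leq\alpha n_c$ into the two sums gives $\Sigma_2^{(c)}\geq(1-\alpha)n_c D_c^{\mathrm{total}}$ and $\Sigma_1^{(c)}\leq\alpha n_c\,g(c)$, where $D_c^{\mathrm{total}}:=\sum_{u\in c}\dist(u,Y_{k(u)}\setminus\{u\})$ and $g(c):=\sum_{v\in c}\dist(v,V\setminus Y_{k(v)})$. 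Dividing the rearranged inequality $(1-\alpha)n_c D_c^{\mathrm{total}}-\alpha n_c g(c)\leq\frac{\epsilon}{2n}\phi(\mathcal{Y})\,N_c$ by $n_c$ and using the trivial $N_c\leq n_c^2\leq n\cdot n_c$ reduces the estimate to $(1-\alpha)D_c^{\mathrm{total}}-\alpha g(c)\leq\frac{\epsilon}{2}\phi(\mathcal{Y})$.

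Summing over the two colors and applying the double-counting identities $\sum_c D_c^{\mathrm{total}}=2\phi(\mathcal{Y})$ and $\sum_c g(c)=2(\dist(V)-\phi(\mathcal{Y}))$ (each intra- or inter-fairlet pair is counted once from each of its two endpoints, regardless of whether it is monochromatic or bichromatic) gives $2(1-\alpha)\phi(\mathcal{Y})-2\alpha(\dist(V)-\phi(\mathcal{Y}))\leq\epsilon\phi(\mathcal{Y})$. Rearranging produces $\phi(\mathcal{Y})(2-\epsilon)\leq 2\alpha\dist(V)$, and the elementary inequality $1/(1-\epsilon/2)\leq 1+\epsilon$ valid for $\epsilon\in[0,1]$ yields the claimed bound $\phi(\mathcal{Y})\leq(1+\epsilon)\alpha\dist(V)$. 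The main obstacle I anticipate is the combinatorial re-indexing of $\Sigma_1^{(c)}$: one has to exchange the roles of the moving point $u$ and a stationary witness $w\in Y_{k(u)}\setminus\{u\}$, and carefully track the $[w\in c]$ correction that produces the coefficient $n_{c,k(w)}-[w\in c]$ (rather than simply $n_{c,k(w)}$), even though the correction is ultimately discarded when applying the $\leq\alpha n_c$ bound.
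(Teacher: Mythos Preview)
Your proposal is correct and follows essentially the same strategy as the paper's proof: both case-split on the termination condition, derive the additive swap inequality from $(\epsilon/n)$-local optimality, sum it over all monochromatic pairs, apply Proposition~\ref{prop:rat} to bound the per-fairlet color fraction by $\alpha=2(b+r)/n$, and then sum across colors using the identities $\sum_c D_c^{\mathrm{total}}=2\phi(\mathcal{Y})$ and $\sum_c g(c)=2(\dist(V)-\phi(\mathcal{Y}))$ to reach $(2-\epsilon)\phi(\mathcal{Y})\leq 2\alpha\,\dist(V)$. The only differences are cosmetic: the paper sums over \emph{all} same-color pairs (observing the inequality is trivial when $Y(u)=Y(v)$) and drops the $-2\dist(u,v)$ term early, whereas you restrict to distinct-fairlet pairs and keep the sharper $Y_i\setminus\{u\}$ bookkeeping, which forces the $[w\in c]$ correction in $\Sigma_1^{(c)}$ that you then discard anyway.
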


\begin{proof}

 Let $Y:V \mapsto \mathcal{Y}$ denote a mapping from a point in $V$ to the fairlet it belongs to. Let $d_R(X)=\sum_{u \in \RED(X)}\dist(u,X)$, and $d_B(X)=\sum_{v \in \BLUE(X)}d(v,X)$.  Naturally, $d_R(X)+d_B(X)=2d(X)$ for any set $X$. For a fairlet $Y_i \in \mathcal{Y}$, let $r_i$ and $b_i$ denote the number of red and blue points in $Y_i$.
 
We first bound the total number of intra-fairlet pairs. Let $x_i=|Y_i|$, we know that $0 \leq x_i \leq b+r$ and $\sum_{i}x_i=n$. The number of intra-fairlet pairs is at most $\sum_{i}x_i^2 \leq \sum_{i} (b+r) x_i = (b+r)n$.

The {\bf While} loop can end in two cases: 1) if $\mathcal{Y}$ is $(\epsilon/n)$-locally-optimal; 2) if $\sum_{Y_k \in \mathcal{Y}} d(Y_k) \leq \Delta$. Case 2 immediately implies the lemma, thus we focus on case 1. 
By definition of the algorithm, we know that for any pair $u \in Y(u)$ and $v \in Y(v)$ where $u,v$ have the same color and $Y(u) \neq Y(v)$ the swap does not increase objective value by a large amount. (The same trivially holds if the pair are in the same cluster.)  
 
\begin{align*}
\sum_{Y_k} \dist(Y_k)
& \leq (1+ \frac{\epsilon}{n})(\sum_{Y_k} \dist(Y_k) - \dist(u,Y(u))-\dist(v,Y(v)) 
 + \dist(u,Y(v)) +\dist(v,Y(u)) -2\dist(u,v))\\
&\leq (1+ \frac{\epsilon}{n})(\sum_{Y_k} \dist(Y_k) - \dist(u,Y(u))-\dist(v,Y(v))
+ \dist(u,Y(v)) +\dist(v,Y(u)) ).
\end{align*}
After moving terms and some simplification, we get the following inequality:
\begin{equation}     \label{neq:eps_local_swap1}
\begin{split}
    & \quad \dist(u,Y(u))+\dist(v,Y(v)) \\
    & \leq  \dist(u,Y(v)) + \dist(v,Y(u)) + \quad \frac{\epsilon/n}{1+\epsilon/n}\sum_{Y_k \in \mathcal{Y}}\dist(Y_k)\\
    & \leq \dist(u,Y(v)) + \dist(v,Y(u)) +  \frac{\epsilon}{n}\sum_{Y_k \in \mathcal{Y}}\dist(Y_k).
\end{split}
\end{equation}

Then we sum up (\ref{neq:eps_local_swap1}), $\dist(u,Y(u))+\dist(v,Y(v)) \leq \dist(u,Y(v)) + \dist(v,Y(u)) +  \frac{\epsilon}{n}\sum_{Y_k \in \mathcal{Y}}\dist(Y_k)
$, over every pair of points in $\RED(V)$ (even if they are in the same partition).  
\begin{align*}
&\totr\sum_{Y_i}\dist_R(Y_i) \leq \Bigg(\sum_{Y_i}r_i \dist_R(Y_i) \Bigg)
+ \Bigg(  \sum_{u \in \RED(V)} \sum_{Y_i \neq Y(u)}r_i \dist(u,Y_i)\Bigg) + \totr^2 \frac{\epsilon}{n}\sum_{Y_i}\dist(Y_i).
\end{align*}
Divide both sides by $\totr$ and use the fact that  $r_i \leq r$ for all $Y_i$: 
\begin{align}
\sum_{Y_i}\dist_R(Y_i) \leq \left(\sum_{Y_i}\frac{r}{\totr} \dist_R(Y_i) \right) + \left(  \sum_{u \in \RED(V)} \sum_{Y_i \neq Y(u)}\frac{r}{\totr} \dist(u,Y_i)\right) + \frac{\totr\epsilon}{n} \sum_{Y_i}\dist(Y_i). \label{neq:red_swap_simplified_with_eps1}
\end{align}

For pairs of points in $\BLUE(V)$ we sum  (\ref{neq:eps_local_swap1}) to similarly obtain:
\begin{align}
\sum_{Y_i}\dist_B(Y_i) \leq \left(\sum_{Y_i}\frac{b}{\totb} \dist_B(Y_i) \right) + \left(  \sum_{v \in \BLUE(V)} \sum_{Y_i \neq Y(v)}\frac{b}{\totb} \dist(v,Y_i)\right) + \frac{\totb\epsilon}{n} \sum_{Y_i}\dist(Y_i). \label{neq:blue_swap_simplified_with_eps1}
\end{align}
Now we sum up \eqref{neq:red_swap_simplified_with_eps1} and \eqref{neq:blue_swap_simplified_with_eps1}. The LHS becomes:
$$\sum_{Y_i}(\dist_R(Y_i)+\dist_B(Y_i))=\sum_{Y_i}\sum_{u \in Y_i}\dist(u,Y_i)=2\sum_{Y_i}\dist(Y_i.)$$
For the RHS, the last term in  \eqref{neq:red_swap_simplified_with_eps1} and \eqref{neq:blue_swap_simplified_with_eps1} is $\frac{\eps(\totb+\totr)}{n} \sum_{Y_i}\dist(Y_i)=  \eps \sum_{Y_i}\dist(Y_i)$. 

The other terms give:
\begin{align*}
    &\quad \frac{r}{\totr}\sum_{Y_i} \dist_R(Y_i) + \frac{r}{\totr}\sum_{u \in \RED(V)}\sum_{Y_i \neq Y(u)}\dist(u,Y_i)
 + \frac{b}{\totb}\sum_{Y_i}\dist_B(Y_i) + \frac{b}{\totb}\sum_{v \in \BLUE(V)} \sum_{Y_i \neq Y(v)}\dist(v,Y_i)\\
    &\leq \max\{\frac{r}{\totr},\frac{b}{\totb}\} \Bigg\{\sum_{Y_i}\left( \dist_R(Y_i) + \dist_B(Y_i)\right)
    + \sum_{u \in V} \sum_{Y_i \neq Y(u)}\dist(u,Y_i) \Bigg\}\\
    &= \max\{\frac{r}{\totr},\frac{b}{\totb}\} \Bigg\{ \sum_{Y_i}\sum_{u \in Y_i}\dist(u,Y_i)
     + \sum_{Y_i} \sum_{Y_j \neq Y_i} \dist(Y_i,Y_j) \Bigg\}\\
    &=2 \max\{\frac{r}{\totr},\frac{b}{\totb}\}\dist(V)\\
    &\leq \frac{4(b+r)}{n} \dist(V).
    \end{align*}
    The last inequality  follows from Proposition~\ref{prop:rat}.  All together, this proves that
   \begin{align*}
   2\sum_{Y_k}\dist(Y_k) \leq \frac{4(b+r)}{n} \dist(V) + \eps\sum_{Y_k} \dist(Y_k).
    \end{align*}
Then, $\frac{\sum_{Y_k}\dist(Y_k)}{\dist(V)} \leq \frac{2(b+r)}{n}\cdot \frac{1}{1-\epsilon/2} \leq (1+\epsilon)\frac{2(b+r)}{n}$. The final step follows from the fact that $(1+\epsilon)(1-\epsilon/2)=1+\frac{\epsilon}{2}(1-\epsilon) \geq 1$.  This proves the lemma.
\end{proof}

\subsection{Proof for the generalized Lemma \ref{lem:algo_is_correct}}
\label{app:multi_color}

Next, we prove Lemma~\ref{lem:algo_is_correct} for the more generalized definition of fairness, which is $\alpha$-capped fairness. 

\begin{proofof}[Lemma~\ref{lem:algo_is_correct}]
The proof follows the same logic as in the two-color case: we first use the $(\epsilon/n)$-local optimality of the solution, and sum up the inequality over all pairs of points with the same color.

Let $Y:V \mapsto \mathcal{Y}$ denote a mapping from a point in $V$ to the fairlet it belongs to. Let $R_i(X)$ be the set of $R_i$ colored points in a set $X$. Let $\dist_{R_i}(X)=\sum_{u \in R_i(X)}\dist(u,X)$.  Naturally, $\sum_{i}\dist_{R_i}(x)=2\dist(X)$ for any set $X$ since the weight for every pair of points is repeated twice.

The {\bf While} loop can end in two cases: 1) if $\mathcal{Y}$ is $(\epsilon/n)$-locally-optimal; 2) if $\sum_{Y_k \in \mathcal{Y}} \dist(Y_k) \leq \Delta$. Case 2 immediately implies the lemma, thus we focus on case 1.  

By definition of the algorithm, we know that for any pair $u \in Y(u)$ and $v \in Y(v)$ where $u,v$ have the same color and $Y(u) \neq Y(v)$ the swap does not increase objective value by a large amount. (The same trivially holds if the pair are in the same cluster.)  We get the following inequality as in the two color case:
\begin{equation}
\dist(u,Y(u))+\dist(v,Y(v)) 
\leq \dist(u,Y(v)) + \dist(v,Y(u)) +  \frac{\epsilon}{n}\sum_{Y_k \in \mathcal{Y}}\dist(Y_k).
\end{equation}

For any color $R_i$, we sum it over every pair of points in $R_i(V)$ (even if they are in the same partition).  
\begin{align*}
n_i\sum_{Y_k}\dist_{R_i}(Y_k) \leq \Bigg(\sum_{Y_k}r_{ik} \dist_{R_i}(Y_k) \Bigg)
+ \Bigg(  \sum_{u \in R_i(V)} \sum_{Y_k \neq Y(u)}r_{ik} \dist(u,Y_k)\Bigg) + n_i^2 \frac{\epsilon}{n}\sum_{Y_k}\dist(Y_k).
\end{align*}
Divide both sides by $n_i$ and we get: 
\begin{align}
\sum_{Y_k}\dist_{R_i}(Y_k) \leq \left(\sum_{Y_k}\frac{r_{ik}}{n_i} \dist_{R_i}(Y_k) \right) + \left(  \sum_{u \in R_i(V)} \sum_{Y_k \neq Y(u)}\frac{r_{ik}}{n_i} \dist(u,Y_k)\right) + \frac{n_i\epsilon}{n} \sum_{Y_k}\dist(Y_k).
\end{align}

Now we sum up this inequality over all colors $R_i$. The LHS becomes:
$$\sum_{Y_k} \sum_{i}\dist_{R_i}(Y_k)=\sum_{Y_k}\sum_{u \in Y_k}\dist(u,Y_k)=2\sum_{Y_k}\dist(Y_k).$$
For the RHS, the last term sums up to $ \frac{\eps(\sum_{i}n_i)}{n} \sum_{Y_k}\dist(Y_k)=  \eps \sum_{Y_k}\dist(Y_k)$.  Using the fact that $\frac{r_{ik}}{n_i} \leq \max_{i,k}\frac{r_{ik}}{n_i}$, the other terms sum up to :
\begin{align*}
    &\quad \sum_{i} \sum_{Y_k} \frac{r_{ik}}{n_i}\dist_{R_i}(Y_k) + \sum_{i} \sum_{u \in R_i(V)}\sum_{Y_k \neq Y(u)} \frac{r_{ik}}{n_i} \dist(u,Y_k) \\
    &\leq \max_{i,k}\frac{r_{ik}}{n_i} \Bigg\{\sum_{Y_k}\sum_{i} \dist_{R_i}(Y_i) + \sum_{u \in V} \sum_{Y_k \neq Y(u)}\dist(u,Y_k) \Bigg\}\\
    &= \max_{i,k}\frac{r_{ik}}{n_i} \Bigg\{ \sum_{Y_k}\sum_{u \in Y_k}\dist(u,Y_k) + \sum_{Y_k} \sum_{Y_j \neq Y_k} \dist(Y_j,Y_k) \Bigg\}\\
    &=2 \max_{i,k}\frac{r_{ik}}{n_i} \cdot \dist(V).
    \end{align*}
Therefore, putting LHS and RHS together, we get
   \begin{align*}
   2\sum_{Y_k}\dist(Y_k) \leq 2\max_{i,k}\frac{r_{ik}}{n_i} \dist(V) + \eps\sum_{Y_k} \dist(Y_k).
    \end{align*}
Then, $\frac{\sum_{Y_k}\dist(Y_k)}{\dist(V)} \leq \max_{i,k}\frac{r_{ik}}{n_i} \cdot \frac{1}{1-\epsilon/2} \leq (1+\epsilon)\cdot \max_{i,k}\frac{r_{ik}}{n_i}$. The final step follows from the fact that $(1+\epsilon)(1-\epsilon/2)=1+\frac{\epsilon}{2}(1-\epsilon) \geq 1$.

\end{proofof}
In the two-color case, the ratio $\max_{i,k}\frac{r_{ik}}{n_i}$ becomes $\max \{\frac{r}{r_t}, \frac{b}{b_t}\}$, which can be further bounded by $\frac{2(b+r)}{n}$ (see Proposition~\ref{prop:rat}). If there exists a caplet decomposition such that $\max_{i,k}\frac{r_{ik}}{n_i} = o(1)$, Lemma \ref{lem:algo_is_correct} implies we can build a fair \hc tree with $o(1)$ loss in approximation ratio for value objective.


Assuming for all color class $R_i$, $n_i \rightarrow +\infty$ as $n \rightarrow +\infty$, here we give a possible caplet decomposition for $\alpha = \frac{1}{t}(t <= c)$ with size $O(t)$ for positive integer $t$, thus guaranteeing $\max_{i,k}\frac{r_{ik}}{n_i} = o(1)$ for any $i$.

\begin{lem}\label{lem:fairletpartition}
For any set $P$ of size $p$ that satisfies fairness constraint with $\alpha = 1/t$, there exists a partition of $P$ into sets $(P_1, P_2, \ldots)$ where each $P_i$ satisfies the fairness constraint and $t \leq |P_i| < 2t$. 
\end{lem}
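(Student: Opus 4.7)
My plan is to construct the partition by a round-robin assignment after grouping elements by color. I would set $k = \lfloor p/t \rfloor$; the $\alpha = 1/t$ fairness of $P$ forces every color class to have size at most $p/t$, so if $P$ is nonempty we must have $p \geq t$ and thus $k \geq 1$ (the degenerate case $P = \emptyset$ is handled by the empty partition). Then I would order the points of $P$ as $v_1, \ldots, v_p$ so that points sharing a color form a contiguous block, and assign $v_i$ to part $P_j$ with $j = ((i-1) \bmod k) + 1$. This yields $k$ parts whose sizes differ by at most one.

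To check the size bound $t \leq |P_i| < 2t$, I would use $kt \leq p < (k+1)t$, which is equivalent to the choice of $k$. The smaller part size is $\lfloor p/k \rfloor \geq t$, and the larger is $\lceil p/k \rceil \leq \lceil ((k+1)t - 1)/k \rceil = t + \lceil (t-1)/k \rceil \leq 2t - 1$, where the final inequality uses $k \geq 1$.

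To check the fairness of each $P_i$, I would count the occurrences of an arbitrary color. A color class of size $r_c$ in $P$ satisfies $r_c \leq p/t$; since $r_c$ is an integer, $r_c \leq \lfloor p/t \rfloor = k$. The round-robin rule places the $r_c$ contiguous copies of that color into $r_c$ consecutive slot indices modulo $k$, and because $r_c \leq k$ those indices are distinct, so every part contains at most one point of that color. Combined with $|P_i| \geq t$, this gives fractional representation at most $1/t$ in every part. The argument is largely mechanical; the only subtle step is ensuring the upper bound on part size in every regime of $k$, which is handled by the identity $\lceil ((k+1)t - 1)/k \rceil = t + \lceil (t-1)/k \rceil$ together with $k \geq 1$.
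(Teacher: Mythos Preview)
Your proposal is correct and follows essentially the same approach as the paper: order the elements so that each color forms a contiguous block, then distribute them in round-robin fashion to $k=\lfloor p/t\rfloor$ parts (the paper writes $p=mt+r$ and uses $m$ parts, which is the same $k$). Your treatment of the size bounds is in fact more careful than the paper's, which simply states that each part has between $t$ and $t+r<2t$ elements.
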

\begin{proof}
Let $p = m \times t + r$  with $ 0 \leq r  < t$, then the fairness constraints ensures that there are at most $m$ elements of each color. Consider partitioning obtained through the following process: consider an ordering of elements where points of the same color are in consecutive places, assign points to sets $P_1, P_2, \ldots, P_m$ in a round robin fashion. So each set $P_i$ gets at least $t$ elements and at most $t + r < 2t $ elements assigned to it. Since there are at most $m$ elements of each color, each set gets at most one point of any color and hence all sets satisfy the fairness constraint as $1 \leq \frac{1}{t} \cdot |P_i|$. 
\end{proof}

\subsection{Proof for the running time of $(\eps/n)$-locally-optimal fairlet decomposition algorithm}

\begin{proofof}[Lemma \ref{lem:algo_is_fast1}]
Notice that finding the maximum pairwise distance takes $O(n^2)$ time. Thus, we focus on analyzing the time spent on the {\bf While} loop.

Let $t$ be the total number of swaps. We argue that $t=\tilde{O}(n/\epsilon)$. If $t=0$ the conclusion trivially holds. Otherwise, consider the decomposition $\mathcal{Y}_{t-1}$ before the last swap. Since the {\bf While} loop does not terminate here, $\sum_{Y_k \in \mathcal{Y}_{t-1}}\dist(Y_k)\geq \Delta = \frac{b+r}{n} \dist_{max}$. However, at the beginning, we have $\sum_{Y_k \in \mathcal{Y}}\dist(Y_k) \leq (b+r)n \cdot \dist_{max}=n^2\Delta \leq n^2 \sum_{Y_k \in \mathcal{Y}_{t-1}}\dist(Y_k)$. Therefore, it takes at most $\log_{1+\epsilon/n}(n^2)=\tilde{O}(n/\epsilon)$ iterations to finish the {\bf While} loop.

It remains to discuss the running time of each iteration. We argue that there is a way to finish each iteration in $O(n^2)$ time. Before the {\bf While} loop, keep a record of $\dist(u,Y_i)$ for each point $u$ and each fairlet $Y_i$. This takes $O(n^2)$ time. If we know $\dist(u,Y_i)$ and the objective value from the last iteration, in the current iteration, it takes $O(1)$ time to calculate the new objective value after each swap $(u,v)$, and there are at most $n^2$ such calculations, before the algorithm either finds a pair to swap, or determines that no such pair is left. After the swap, the update for all the $\dist(u,Y_i)$ data takes $O(n)$ time. In total, every iteration takes $O(n^2)$ time.

Therefore, Algorithm~\ref{alg:eps_local_opt_decomp} takes $\tilde{O}(n^3/\epsilon)$ time.
\end{proofof}

\section{Hardness of optimal fairlet decomposition}
\label{app:hardness}

Before proving Theorem~\ref{thm:np_hard_constant_approx}, we state that the PARTITION INTO TRIANGLES (PIT) problem is known to belong to the NP-complete class \citep{garey2002computers}, defined as follows. In the definition, we call a clique $k$-clique if it has $k$ nodes. A triangle is a $3$-clique.

\begin{definition}
\emph{PARTITION INTO TRIANGLES\\ (PIT).} Given graph $G=(V,E)$, where $V=3n$, determine if $V$ can be partitioned into $3$-element sets $S_1,S_2, \ldots,S_n$, such that each $S_i$ forms a triangle in $G$. 
\end{definition}

 The NP-hardness of PIT problem gives us a more general statement.
\begin{definition}
\emph{PARTITION INTO $k$-CLIQUES\\ (PIKC).}
For a fixed number $k$ treated as constant, given graph $G=(V,E)$, where $V=kn$, determine if $V$ can be partitioned into $k$-element sets $S_1,S_2,\ldots,S_n$, such that each $S_i$ forms a $k$-clique in $G$.
\end{definition}

\begin{lem} \label{lem:np_hard_pikc}
For a fixed constant $k \geq 3$, the PIKC problem is NP-hard.
\end{lem}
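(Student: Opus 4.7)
The plan is to prove this by induction on $k$, using the base case $k=3$ (which is exactly the PIT problem, known to be NP-complete), and reducing PI$(k-1)$C to PI$k$C for the inductive step.

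For the inductive step, given an instance $G = (V, E)$ of PI$(k-1)$C with $|V| = (k-1)n$, I would construct an instance $G' = (V', E')$ of PI$k$C as follows. Introduce a set $W$ of $n$ fresh vertices, so $V' = V \cup W$ and $|V'| = kn$. The edge set $E'$ consists of the original edges $E$, together with every edge between a vertex of $W$ and a vertex of $V$; crucially, $W$ itself remains an independent set (no edges within $W$). The construction is clearly polynomial time.

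The correctness argument has two directions. For the forward direction, if $V$ can be partitioned into $(k-1)$-cliques $C_1,\dots,C_n$, then labeling the vertices of $W$ as $w_1,\dots,w_n$ and forming $C_i \cup \{w_i\}$ yields a partition into $k$-cliques, since each $w_i$ is adjacent to every vertex of $V$ by construction. For the reverse direction, suppose $V'$ is partitioned into $n$ sets of size $k$, each a $k$-clique in $G'$. Because $W$ is an independent set, no clique can contain two vertices of $W$. Since there are exactly $n$ cliques and $|W|=n$, each clique contains exactly one vertex of $W$. Deleting that single $W$-vertex from each clique produces $n$ disjoint $(k-1)$-subsets of $V$, each of which is a $(k-1)$-clique in $G$ (because subsets of cliques are cliques, and the remaining edges lie in $E$).

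Combining the base case $k=3$ (PIT is NP-hard, as cited from Garey and Johnson) with this inductive reduction yields NP-hardness of PI$k$C for every fixed $k\geq 3$. I do not anticipate any real obstacle: the only subtlety to keep in mind is that $k$ is treated as a constant, so the reduction size $kn$ is polynomial in the input size, and the number of induction steps from $3$ to $k$ is constant, so the overall reduction is polynomial. This lemma will then be applied (in the sequel) to derive hardness of the fairlet decomposition problem with $\alpha = \tfrac{z}{z+1}$ by identifying $k = z+1$ and encoding clique-partition instances through suitably chosen point sets and distances.
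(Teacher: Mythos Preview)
Your proof is correct, and the underlying idea is the same as the paper's: add ``filler'' vertices that are adjacent to everything in the original graph but pairwise non-adjacent among themselves, so that any $k$-clique in the enlarged graph must use exactly one filler and hence restricts to a smaller clique in the original graph. The paper does this in a single shot, reducing directly from PIT by adjoining $n$ disjoint $(k-3)$-cliques $C_1,\dots,C_n$, each fully joined to $V$ but with no edges between different $C_i$'s; a triangle partition of $G$ then corresponds to a $k$-clique partition of $G'$ (each triangle absorbing one $C_i$). Your version instead steps from $k-1$ to $k$ by adding one filler vertex per clique and wraps this in an induction. This is slightly cleaner to verify (the pigeonhole argument for singletons is immediate), at the cost of the extra inductive layer; the paper's direct construction avoids the induction but requires checking that the $(k-3)$-cliques are forced to sit in distinct parts. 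Both routes are standard and equally valid.
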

\begin{proof}
We reduce the PIKC problem from the PIT problem. For any graph $G=(V,E)$ given to the PIT problem where $|V|=3n$, construct another graph $G'=(V',E')$. Let $V'=V \cup C_1 \cup C_2 \cup \cdots \cup C_n$, where all the $C_i$'s are $(k-3)$-cliques, and there is no edge between any two cliques $C_i$ and $C_j$ where $i \neq j$. For any $C_i$, let all points in $C_i$ to be connected to all nodes in $V$.

Now let $G'$ be the input to PIKC problem. We prove that $G$ can be partitioned into triangles if and only if $G'$ can be partitioned into $k$-cliques. If $V$ has a triangle partition $V=\{S_1,\ldots,S_n\}$, then $V'=\{S_1 \cup C_1,\ldots,S_n \cup C_n\}$ is a $k$-clique partition. On the other hand, if $V'$ has a $k$-clique partition $V'=\{S_1',\ldots,S_n'\}$ then $C_1,\ldots,C_n$ must each belong to different $k$-cliques since they are not connected to each other. Without loss of generality we assume $C_i \subseteq S_i$, then $V=\{S_1'\setminus C_1,\ldots,S_n' \setminus C_n\}$ is a triangle partition.
\end{proof}

We are ready to prove the theorem.

\begin{proofof}[Theorem~\ref{thm:np_hard_constant_approx}]
We prove Theorem~\ref{thm:np_hard_constant_approx} by proving that for given $z \geq 4$, if there exists a $c$-approximation polynomial algorithm $\mathcal{A}$ for (\ref{obj:fairlet_obj}), it can be used to solve the PIKC problem where $k=z-1$ for any instance as well. This holds for any finite $c$.

Given any graph $G=(V,E)$ that is input to the PIKC problem, where $|V|=kn=(z-1)n$, let a set $V'$ with distances be constructed in the following way:
\begin{enumerate}
    \item $V'=V \cup \{C_1,\ldots,C_n\}$, where each $C_i$ is a singleton.
    \item Color the points in $V$ red, and color all the $C_i$'s blue.
    \item For a $e=(u,v)$, let $\dist(u,v)=0$, if it satisfies one of the three conditions: 1) $e \in E$. 2) $u, v \in C_i$ for some $C_i$. 3) one of $u,v$ is in $V$, while the other belong to some $C_i$.
    \item All other edges have distance $1$.
\end{enumerate}
Obviously the blue points make up a $\nicefrac{1}{z}$ fraction of the input so each fairlet should have exactly $1$ blue point and $z-1$ red points.

We claim that $G$ has a $k$-clique partition if and only if algorithm $\mathcal{A}$ gives a solution of $0$ for (\ref{obj:fairlet_obj}). The same argument as in the proof of Lemma~\ref{lem:np_hard_pikc} will show that $G$ has a $k$-clique partition if and only if the optimal solution to (\ref{obj:fairlet_obj}) is $0$. This is equal to algorithm $\mathcal{A}$ giving a solution of $0$ since otherwise the approximate is not bounded. 
\end{proofof}

\section{Optimizing cost with fairness}\label{sec:cost_appendix}

In this section, we present our fair hierarchical clustering algorithm that approximates
Dasgupta's cost function and satisfies Theorem~\ref{thm:cost}. Most of the proofs can be found in Section~\ref{sec:cost_proofs}. We consider the problem of equal representation, where vertices are red or blue and $\alpha=1/2$. From now on, whenever we use the word ``fair'', we are referring to this fairness constraint. Our algorithm also uses parameters $\costm$ and $\costl$ such that $n\geq \costm\costl$ and $\costm>\costl+108\costm^2/\costl^2$ for $n=|V|$, and leverages a $\beta$-approximation for cost and $\gamma_\costm$-approximation for minimum weighted bisection. We will assume these are fixed and use them throughout the section.

We will ultimately show that we can find a fair solution that is a sublinear approximation for the unfair optimum $T^*_{\unfair}$, which is a lower bound of the fair optimum.  Our main result is Theorem~\ref{thm:cost}, which is stated in the body of the paper.

The current best approximations described in Theorem~\ref{thm:cost} are $\gamma_\costm = O(\log^{3/2} n)$ by~\citet{feige} and $\beta = \sqrt{\log n}$ by both~\citet{dasgupta} and~\citet{charikar17}. If we set $\costm =
\sqrt{n} (\log^{3/4} n)$ and $\costl = n^{1/3} \sqrt{\log n}$, then we get Corollary~\ref{cor:cost_main}.

\begin{cor}\label{cor:cost_main}
Consider the equal representation problem with two colors. There is an $O\left(n^{5/ 6}\log^{5/4} n \right)$-approximate fair clustering under the cost objective.
\end{cor}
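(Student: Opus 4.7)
The plan is to derive Corollary~\ref{cor:cost_main} as a direct instantiation of Theorem~\ref{thm:cost}, once one fixes the two oracle subroutines and chooses the parameters $\costm$ and $\costl$ so as to balance the four terms in the approximation factor. First I would invoke known black-box results: Dasgupta~\cite{dasgupta} (or, with a better constant, Charikar and Chatziafratis~\cite{charikar17}) supplies $\beta = O(\sqrt{\log n})$ for Dasgupta's cost objective, and Feige and Krauthgamer (as stated by~\citet{feige}) supplies $\gamma_\costm = O(\log^{3/2} \costm) = O(\log^{3/2} n)$ for minimum weighted bisection on instances of size $\costm \leq n$.

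Next I would choose the parameters $\costm = \sqrt{n}\log^{3/4} n$ and $\costl = n^{1/3}\sqrt{\log n}$ and verify the two technical preconditions of Theorem~\ref{thm:cost}. For the first, $\costm\costl = n^{5/6}(\log n)^{5/4}$, which is $o(n)$ for sufficiently large $n$, so $n \geq \costm\costl$. For the second, $108\costm^2/\costl^2 = O(n^{1/3}\sqrt{\log n})$, which together with $\costl = n^{1/3}\sqrt{\log n}$ is still $o(n)$, so $n > \costl + 108\costm^2/\costl^2$. Both conditions are therefore satisfied asymptotically.

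The main step is the direct algebraic substitution into the bound
$$\Big(\tfrac{n}{\costm} + \costm\costl + \tfrac{n\costl\gamma_\costm}{\costm} + \tfrac{n\costm\gamma_\costm}{\costl^2}\Big)\beta.$$
The parameter choices are engineered precisely so that the last three terms all evaluate to the same order $n^{5/6}(\log n)^{5/4}$, while the first term $n/\costm = \sqrt{n}\,(\log n)^{-3/4}$ is of strictly lower order. Multiplying the resulting sum by $\beta$ and simplifying the logarithmic factors yields the claimed $O(n^{5/6}\log^{5/4} n)$-approximation for $\cost(T^*_{\unfair})$.

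The only ``difficulty'' here is not technical but derivational: one must reverse-engineer $\costm$ and $\costl$ by setting the three dominant terms equal (given the $\log^{3/2} n$ bisection factor) and solving the resulting two-variable system. Once those two parameter choices are in hand, the corollary reduces to bookkeeping of exponents of $n$ and $\log n$, with no further clustering-specific reasoning required beyond Theorem~\ref{thm:cost} itself.
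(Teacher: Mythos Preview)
Your proposal is exactly the paper's own derivation: the paper invokes $\beta = O(\sqrt{\log n})$ from~\cite{dasgupta,charikar17} and $\gamma_\costm = O(\log^{3/2} n)$ from~\cite{feige}, sets $\costm = \sqrt{n}\,\log^{3/4} n$ and $\costl = n^{1/3}\sqrt{\log n}$, and reads off the corollary from Theorem~\ref{thm:cost}. Your verification of the preconditions and the term-by-term balancing is precisely the bookkeeping the paper leaves implicit.
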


The algorithm will be centered around a single clustering, which we call $\mathcal{C}$, that is extracted from an unfair hierarchy. We will then adapt this to become a similar, fair clustering $\mathcal{C}'$. To formalize what $\mathcal{C}'$ must satisfy to be sufficiently ``similar'' to $\mathcal{C}$, we introduce the notion of a $\mathcal{C}$-good clustering. Note that this is not an intuitive set of properties, it is simply what $\mathcal{C}'$ must satisfy in order 
\begin{definition}[Good clustering]
\label{def:good}
Fix a clustering $\mathcal{C}$ whose cluster sizes are at most $\costm$. A fair clustering $\mathcal{C}'$ is $\mathcal{C}$-\emph{good} if it satisfies
the following two properties:
\begin{enumerate}[topsep=0pt,itemsep=0pt,partopsep=0pt,parsep=0pt]
\item For any cluster $C\in \mathcal{C}$, there is a cluster $C'\in \mathcal{C}'$ such that all but (at most) an $O(\costl\gamma_\costm/\costm + \costm\gamma_\costm/\costl^2)$-fraction of the weight of edges in $C$ is also in $C'$.
\item Any $C'\in\mathcal{C}'$ is not too much bigger, so $|C'| \leq 6\costm\costl$.
\end{enumerate}
\end{definition}
The hierarchy will consist of a $\mathcal{C}$-good (for a specifically chosen $\mathcal{C}$) clustering $\mathcal{C}'$ as its only nontrivial layer. 
\begin{lem}\label{lem:finalcluster}
Let $T$ be a $\beta$-approximation for cost and $\mathcal{C}$ be a maximal clustering in $T$ under the condition that all cluster sizes are at most $\costm$.  Then, a fair two-tiered hierarchy $T'$ whose first level consists of a $\mathcal{C}$-good clustering achieves an $O\left(\frac{n}{\costm} + \costm\costl + \frac{n\costl\gamma_\costm}{\costm} + \frac{n\costm\gamma_\costm}{\costl^2}\right)\beta$-approximation for cost.
\end{lem}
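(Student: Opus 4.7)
The plan is to upper bound $\cost(T')$ by splitting pairs according to whether they lie inside the same cluster of $\mathcal{C}'$ or across different clusters, and then control each piece via the two goodness properties combined with standard lower bounds on $\cost(T^*_{\unfair})$. First I would introduce $\simi_{\text{in}}(\mathcal{Q})=\sum_{Q\in\mathcal{Q}}\simi(Q)$ and $\simi_{\text{out}}(\mathcal{Q})=\simi(V)-\simi_{\text{in}}(\mathcal{Q})$ for any clustering $\mathcal{Q}$. Pairs $\{i,j\}$ inside the same $C'\in\mathcal{C}'$ have $|\leaves(T'[i\vee j])|\leq |C'|\leq 6\costm\costl$ by Property~2 of $\mathcal{C}$-goodness, while cross-cluster pairs contribute at most $n\cdot\simi(i,j)$ each, giving
\begin{equation*}
    \cost(T')\leq 6\costm\costl\cdot\simi(V)+n\cdot\simi_{\text{out}}(\mathcal{C}').
\end{equation*}

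Next I would use Property~1 to pull $\simi_{\text{out}}(\mathcal{C}')$ back to quantities depending only on $\mathcal{C}$. Let $\delta=O(\costl\gamma_\costm/\costm+\costm\gamma_\costm/\costl^2)$. Any edge that is intra-$\mathcal{C}$ but inter-$\mathcal{C}'$ sits inside some $C\in\mathcal{C}$ while having its endpoints split across clusters of $\mathcal{C}'$, so in particular its endpoints are not both in the neighbour $C'(C)$ promised by Property~1; consequently, the total such weight inside $C$ is at most $\delta\cdot\simi(C)$. Summing over $C\in\mathcal{C}$ gives $\simi_{\text{out}}(\mathcal{C}')\leq\simi_{\text{out}}(\mathcal{C})+\delta\cdot\simi_{\text{in}}(\mathcal{C})$. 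Two standard inequalities finish the bounds on $\mathcal{C}$: since every LCA in $T^*_{\unfair}$ has at least two leaves, $\simi(V)\leq\cost(T^*_{\unfair})/2$ and therefore $\simi_{\text{in}}(\mathcal{C})\leq\cost(T^*_{\unfair})/2$; and because $\mathcal{C}$ is the \emph{maximal} size-$\costm$ cut of $T$, the LCA in $T$ of any cross-$\mathcal{C}$ pair lies strictly above the cut, hence has more than $\costm$ leaves, giving $\simi_{\text{out}}(\mathcal{C})\leq\cost(T)/\costm\leq \beta\cdot\cost(T^*_{\unfair})/\costm$.

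Assembling everything yields
\begin{equation*}
    \cost(T')\leq\Bigl[3\costm\costl+\tfrac{n\beta}{\costm}+\tfrac{n\delta}{2}\Bigr]\cost(T^*_{\unfair}),
\end{equation*}
which, using $\beta\geq 1$, matches the claimed $O\bigl(n/\costm+\costm\costl+n\costl\gamma_\costm/\costm+n\costm\gamma_\costm/\costl^2\bigr)\beta$ factor. The most delicate step is the inequality $\simi_{\text{out}}(\mathcal{C})\leq\cost(T)/\costm$: it is precisely where the choice of $\mathcal{C}$ as the maximal size-$\costm$ cut of $T$ is cashed in, and it is what forces the $n/\costm$ term in the final bound. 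Everything else — the cluster-size bound of $6\costm\costl$ from Property~2, the accounting of weight moved by Property~1, and the trivial lower bound $\cost(T^*_{\unfair})\geq 2\simi(V)$ — is routine once the cost decomposition is set up.
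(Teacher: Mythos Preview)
Your proof is correct and follows essentially the same three-way edge decomposition as the paper (cross-$\mathcal{C}$, intra-$\mathcal{C}$-but-moved, and intra-$\mathcal{C}'$), using the maximality of $\mathcal{C}$ in $T$, Property~1, and Property~2 in exactly the same places. The only cosmetic difference is that the paper routes \emph{all} three terms through $\cost(T)$ (bounding each edge's $T$-contribution below by $2\simi(e)$ or $\costm\cdot\simi(e)$) and applies the factor $\beta$ once at the end, whereas you bound the intra-$\mathcal{C}'$ and moved terms directly against $\cost(T^*_{\unfair})$ via $2\simi(V)\leq\cost(T^*_{\unfair})$ and only invoke $\cost(T)$ for the cross-$\mathcal{C}$ term; your route is slightly cleaner and shaves a $\beta$ off two of the summands, but inside the big-$O$ times $\beta$ this makes no difference.
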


\begin{proof}
    Since $T$ is a $\beta$-approximation, we know that:
    \[\cost(T) \leq \beta\cost(T^*_{\unfair})\]
    
    We will then utilize a scheme to account for the cost contributed by each edge relative to their cost in $T$ in the hopes of extending it to $T^*_{\unfair}$. There are three different types of edges:
    
    \begin{enumerate}[topsep=0pt,itemsep=0pt,partopsep=0pt,parsep=0pt]
        \item An edge $e$ that is merged into a cluster of size $\costm$ or greater in $T$, thus contributing $\costm \cdot \simi(e)$ to the cost. At worst, this edge is merged in the top cluster in $T'$ to contribute $n \cdot \simi(e)$. Thus, the factor increase in the cost contributed by $e$ is $O(n/\costm)$. Then since the total contribution of all such edges in $T$ is at most $\cost(T)$, the total contribution of all such edges in $T'$ is at most $O(n/\costm) \cdot \cost(T)$.
        \item An edge $e$ that started in some cluster $C\in\mathcal{C}$ that does not remain in the corresponding cluster $C'$. We are given that the total weight removed from any such $C$ is an $O(\costl\gamma_\costm/\costm + \costm\gamma_\costm/\costl^2)$-fraction of the weight contained in $C$. If we sum across the weight in all clusters in $\mathcal{C}$, that is at most $\cost(T)$. So the total amount of weight moved is at most $O(\costl\gamma_\costm/\costm + \costm\gamma_\costm/\costl^2) \cdot \cost(T)$. These edges contributed at least $2\simi(e)$ in $T$ as the smallest possible cluster size is two. In $T'$, these may have been merged at the top of the cluster, for a maximum cost contribution of $n \cdot \simi(e)$. Therefore, the total cost across all such edges is increased by at most a factor of $n/2$, which gives a total cost of at most $O(n\costl\gamma_\costm/\costm + n\costm\gamma_\costm/\costl^2) \cdot \cost(T)$.
        \item An edge $e$ that starts in some cluster $C\in\mathcal{C}$ and remains in the corresponding $C'\in\mathcal{C}'$. Similarly, this must have contributed at least $2\simi(e)$ in $T$, but now we know that this edge is merged within $C'$ in $T'$, and that the size of $C'$ is $|C'| \leq 6\costm\costl$. Thus its contribution increases at most by a factor of $3\costm\costl$. By the same reasoning from the first edge type we discussed, all these edges total contribute at most a factor of $O(\costm\costl) \cdot \cost(T)$.
    \end{enumerate}
    We can then put a conservative bound by putting this all together.
    \begin{align*}
    \cost(T') \leq& O\left(\frac n\costm + \costm\costl +\frac{n\costl\gamma_\costm}{\costm} + \frac{n\costm\gamma_\costm}{\costl^2}\right)\cost(T).
    \end{align*}
    Finally, we know $T$ is a $\beta$-approximation for $T^*_{\unfair}$.
    \begin{align*}
    \cost(T') & \leq O\left(\frac n\costm + \costm\costl  +\frac{n\costl \gamma_
    \costm}{\costm}+  \frac{n\costm\gamma_\costm}{\costl^2}\right) \\ 
    & \quad\qquad \cdot \beta \cdot \cost(T^*_{\unfair}).
    \qedhere
    \end{align*}
    
\end{proof}
With this proof, the only thing left to do is find a $\mathcal{C}$-good clustering $\mathcal{C}'$ (Definition~\ref{def:good}). Specifically, using the clustering $\mathcal{C}$ mentioned in Lemma~\ref{lem:finalcluster}, we would like to find a $\mathcal{C}$-good clustering $\mathcal{C}'$ using the following.
\begin{lem}\label{lem:buildcluster}
There is an algorithm that, given a clustering $\mathcal{C}$ with maximum cluster size $\costm$, creates a $\mathcal{C}$-good clustering.
\end{lem}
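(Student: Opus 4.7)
My plan is to construct the fair clustering $\mathcal{C}'$ by first grouping the clusters of $\mathcal{C}$ into super-groups of bounded size, and then rebalancing each super-group via the $\gamma_\costm$-approximate minimum weighted bisection so that each final super-cluster has equal numbers of red and blue points while losing as little internal edge weight as possible.

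First, I would greedily partition the clusters of $\mathcal{C}$ into groups $\mathcal{G}_1, \ldots, \mathcal{G}_t$, each containing at most $\costl$ clusters of $\mathcal{C}$, so every group has total size at most $\costm\costl$. Because the full point set is color-balanced, the signed imbalances $\delta_j = r(\mathcal{G}_j) - b(\mathcal{G}_j)$ sum to zero and each $|\delta_j| \le \costm\costl$. I would then pair up groups whose imbalances have opposite signs, matching them greedily (largest-magnitude to largest-magnitude) so that each pair approximately cancels, with a single small residual absorbed into one designated group at the end.

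For each matched pair $(\mathcal{G}_j, \mathcal{G}_{j'})$, rebalancing requires moving $|\delta_j|$ points of the over-represented color out of $\mathcal{G}_j$, and receiving the same number of the opposite color back. To bound the weight of edges destroyed, I would select a cluster $C \in \mathcal{G}_j$ containing enough points of the target color and apply the $\gamma_\costm$-approximate minimum weighted bisection to $C$, splitting it into two near-equal halves whose cut weight is at most $\gamma_\costm \cdot \mathrm{OPT\text{-}bisect}(C)$; if $|\delta_j| < |C|/2$, I would recurse on the half richer in the target color until the transferred piece has exactly the right size. Transferred pieces are then appended to $\mathcal{G}_{j'}$, producing balanced super-clusters. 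By construction each output $C'$ is fair, and its size is at most $\costm\costl$ from the initial grouping plus $O(\costm\costl)$ from incoming transfers, giving $|C'| \le 6\costm\costl$ with a suitable constant.

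The main obstacle is verifying condition (1) of Definition~\ref{def:good}: that each $C \in \mathcal{C}$ loses only an $O(\costl\gamma_\costm/\costm + \costm\gamma_\costm/\costl^2)$-fraction of its internal edge weight to the outside of its target super-cluster. The two-term form of the bound suggests a two-level accounting. The first term, $\costl\gamma_\costm/\costm$, would come from charging each primary bisection of $C$ a $\gamma_\costm/\costm$ fraction of its weight (since $\mathrm{OPT\text{-}bisect}(C)$ is at most the average internal weight per vertex times $|C|$), and noting that $C$ takes part in at most $O(\costl)$ rebalancing interactions because its group has only $\costl$ clusters. The second term, $\costm\gamma_\costm/\costl^2$, would come from secondary adjustments across the $O(n/(\costm\costl))$ matched pairs and the recursive bisection depth; this is where I expect the hypothesis $n > \costl + 108\costm^2/\costl^2$ to be used, to guarantee there are enough clusters to avoid repeatedly bisecting the same $C$. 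The delicate step is a charging argument that attributes each bisection cut to a distinct ``level'' of the balancing hierarchy and bounds the total number of levels; once that is in hand, summing the per-level $\gamma_\costm$-factor losses and invoking the bisection approximation cluster-by-cluster finishes the proof.
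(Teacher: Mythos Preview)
Your proposal has a genuine gap at the grouping step. You form groups $\mathcal{G}_j$ by packing an arbitrary $\costl$ clusters of $\mathcal{C}$ together, and then observe that $|\delta_j|\le \costm\costl$. But this bound is useless: a group of total size at most $\costm\costl$ can be monochromatic, so rebalancing a matched pair may require moving $\Theta(\costm\costl)$ vertices out of one group. Distributed over its constituent clusters, that forces some $C\in\mathcal{C}$ to lose a constant fraction of its points, and hence a constant fraction of its internal weight, regardless of how cleverly you choose which vertices to remove. No charging scheme on top of this can recover the $O(\costl\gamma_\costm/\costm+\costm\gamma_\costm/\costl^2)$ bound required by Definition~\ref{def:good}; the loss is already $\Omega(1)$ before the $\gamma_\costm$ factor even enters.

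The paper avoids this by building the groups \emph{adaptively} so that imbalance is small by construction. It first coarsens $\mathcal{C}$ to $\mathcal{C}_0$ with cluster sizes in $[\costm,3\costm]$, then grows components one edge at a time in a red--blue clustering graph $H_M$: at each step a cluster with red excess $\ge\costl$ is matched to one with blue excess $\ge\costl$, cancelling all of the smaller side's excess. A component is closed once its running excess drops below $\costl$ or it reaches $\costl$ clusters. This guarantees that, per cluster, only $O(\costl)$ vertices remain unmatched, except for one cluster in each $\costl$-sized component and a residual set of clusters; the latter are fixed by borrowing $O(\costm^2/\costl^2)$ vertices each from many clusters (this is where $n>\costl+108\costm^2/\costl^2$ is used). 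Thus every cluster loses at most $\costl+O(\costm^2/\costl^2)$ vertices, and only then is the bisection subroutine invoked, cluster by cluster, to choose \emph{which} few vertices to remove. A second point: to extract exactly $r$ vertices of a fixed color with a $\gamma_\costm$ guarantee, the paper reduces to a single bisection call via dummy vertices with infinite-weight edges, rather than recursive halving; your recursive scheme would accumulate an extra logarithmic factor and does not obviously produce a set of the correct size and color.
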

The proof is deferred to the Section~\ref{sec:cost_proofs}. With these two Lemmas, we can prove Theorem~\ref{thm:cost}.

\begin{proof}
Consider our graph $G$.  We first obtain a $\beta$-approximation for unfair cost, which yields a hierarchy tree $T$. Let $\mathcal{C}$ be the maximal clustering in $T$ under the constraint that the cluster sizes must not exceed $\costm$.  We then apply the algorithm from Lemma~\ref{lem:buildcluster} to get a $\mathcal{C}$-good clustering $\mathcal{C}'$. Construct $T'$ such that it has one layer that is $\mathcal{C}'$. Then we can apply the results from Lemma~\ref{lem:finalcluster} to get the desired approximation.
\end{proof}

From here, we will only provide a high-level description of the algorithm for Lemma~\ref{lem:buildcluster}. For precise details and proofs, see Section~\ref{sec:cost_proofs}. To start, we need to propose some terminology.
\begin{definition}[Red-blue matching]
A \textbf{red-blue matching} on a graph $G$ is a matching $M$ such that $M(u)=v$ implies $u$ and $v$ are different colors.
\end{definition}
Red-blue matchings are interesting because they help us ensure fairness. 
For instance, suppose $M$ is a red-blue matching that is also perfect (i.e., touches all nodes).  If the lowest level of a hierarchy consists of a clustering such that $v$ and $M(v)$ are in the same cluster for all $v$, then that level of the hierarchy is fair since there is a bijection between red and blue vertices within each cluster. When these clusters are merged up in the hierarchy, fairness is preserved.  

Our algorithm will modify an unfair clustering to be fair by combining clusters and moving a small number of vertices. To do this, we will use the following notion.  
\begin{definition}[Red-blue clustering graph]
Given a graph $G$ and a clustering $\mathcal{C}=\{C_1,\ldots,C_k\}$, we can construct a \textbf{red-blue clustering graph} $H_M = (V_M, E_M)$ that is associated with some red-blue matching $M$. Then $H_M$ is a graph where $V_M=\mathcal{C}$ and $(C_i,C_j)\in E_M$ if and only if there is a $v_i\in C_i$ and $M(v_i)=v_j\in C_j$.
\end{definition}
Basically, we create a graph of clusters, and there is an edge between two clusters if and only if there is at least one vertex in one cluster that is matched to some vertex in the other cluster. We now show that the red-blue clustering graph can be used to construct a fair clustering based on an unfair clustering.

\begin{prop}\label{prop:clusteringgraph}
Let $H_M$ be a red-blue clustering graph on a clustering $\mathcal{C}$ with a perfect red-blue matching $M$. Let $\mathcal{C}'$ be constructed by merging all the clusters in each component of $H_M$. Then $\mathcal{C}'$ is fair.
\end{prop}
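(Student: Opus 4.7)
The plan is to show that for each connected component of $H_M$, the merged cluster $C' \in \mathcal{C}'$ has an equal number of red and blue vertices, which is exactly the fairness requirement under equal representation with $\alpha = 1/2$. The key idea is to show that the perfect red-blue matching $M$, when restricted to $C'$, stays inside $C'$ and therefore provides a bijection between the red and blue vertices in $C'$.

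First, I would fix an arbitrary connected component $K$ of $H_M$ and let $C' = \bigcup_{C_i \in K} C_i$ be the corresponding merged cluster in $\mathcal{C}'$. Take any vertex $v \in C'$, so $v \in C_i$ for some $C_i \in K$, and let $v' = M(v)$. Since $M$ is a perfect red-blue matching, $v'$ exists and has the opposite color from $v$. Let $C_j$ be the cluster containing $v'$. By the definition of the red-blue clustering graph, the existence of a matched pair $v \in C_i$ with $M(v) = v' \in C_j$ forces $(C_i, C_j) \in E_M$ (or $i=j$, which is the trivial self-loop case). Hence $C_j$ is in the same connected component $K$ as $C_i$, so $v' \in C_j \subseteq C'$.

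This shows that $M$ restricts to an involution-free perfect matching on $C'$ between its red and blue vertices. Consequently, the number of red vertices in $C'$ equals the number of blue vertices in $C'$, so $C'$ has balance $1$ and satisfies the equal representation constraint with $\alpha = 1/2$. Since this holds for every connected component of $H_M$, every cluster of $\mathcal{C}'$ is fair, so $\mathcal{C}'$ itself is fair.

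I do not expect any step to present a serious obstacle; the heart of the argument is the single observation that a matched pair forces a clustering-graph edge, and edges keep matched partners inside the same component. The only minor subtlety to be careful about is the case where $v$ and $M(v)$ already lie in the same cluster $C_i$, which only strengthens the conclusion since then $M(v) \in C_i \subseteq C'$ trivially.
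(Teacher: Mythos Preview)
Your proposal is correct and follows essentially the same approach as the paper's proof: both argue that for any $v$ in a merged cluster $C'$, the partner $M(v)$ must also lie in $C'$ (because a matched pair across clusters forces an edge in $H_M$), so $M$ restricted to $C'$ is a bijection between its red and blue vertices. The paper's proof is more terse, but your added detail about the $i=j$ case and the explicit use of the edge definition is a faithful expansion of the same argument.
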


\begin{proof}
Consider some $C\in \mathcal{C}'$. By construction, this must correspond to a connected component in $H_M$.  By definition of $H_M$, for any vertex $v\in C$, $M(v)\in C$.  That means $M$, restricted to $C$, defines a bijection between the red and blue nodes in $C$.  Therefore, $C$ has an equal number of red and blue vertices and hence is fair.
\end{proof}
We will start by extracting a clustering $\mathcal{C}$ from an unfair hierarchy $T$ that approximates cost. Then, we will construct a red-blue clustering graph $H_M$ with a perfect red-blue matching $M$. Then we can use the components of $H_M$ to define our first version of the clustering $\mathcal{C}'$. However, this requires a non-trivial way of moving vertices between clusters in $\mathcal{C}$.

We now give an overview of our algorithm in Steps (A)--(G). For a full description, see our pseudocode in Section~\ref{sec:costpseudo}.

(A) \textbf{Get an unfair approximation $T$}. We start by running a $\beta$-approximation for cost in the unfair setting. This gives us a tree $T$ such that $\cost(T) \leq \beta \cdot \cost (T^*_{\unfair})$.

(B) \textbf{Extract a $\costm$-maximal clustering}.  Given $T$, we find the maximal clustering $\mathcal{C}$ such that (i) every cluster in the clustering is of size at most $\costm$, and (ii) any cluster above these clusters in $T$ is of size more than $\costm$.

(C) \textbf{Combine clusters to be size $\costm$ to $3\costm$.} We will now slowly change $\mathcal{C}$ into $\mathcal{C'}$ during a number of steps. In the first step, we simply define $\mathcal{C}_0$ by merging small clusters $|C|\leq \costm$ until the merged size is between $\costm$ and $3\costm$.  Thus clusters in $\mathcal{C}$ are contained within clusters in $\mathcal{C}_0$, and all clusters are between size $\costm$ and $3\costm$.

(D) \textbf{Find cluster excesses.}  Next, we strive to make our clustering more fair. We do this by trying to find an underlying matching between red and blue vertices that agrees with $\mathcal{C}_0$ (matches are in the same cluster). If the matching were perfect, then the clusters in $\mathcal{C}_0$ would have equal red and blue representation. However, this is not guaranteed initially.  We start by conceptually matching as many red and blue vertices within clusters as we can. Note we do not actually create this matching; 
we just want to reserve the space for this matching to ensure fairness, but really some of these vertices may be moved later on. Then the remaining unmatched vertices in each cluster is either entirely red or entirely blue. We call this amount the \textit{excess} and the color the \textit{excess color}. We label each cluster with both of these.

(E) \textbf{Construct red-blue clustering graph}. Next, we would like to construct $H_M = (V_M, E_M)$, our red-blue clustering graph on $\mathcal{C}_0$.  Let $V_M = \mathcal{C}_0$.  In addition, for the within-cluster matchings mentioned in Step (D), let those matches be contained in $M$. With this start, we will do a matching process to simultaneously construct $E_M$ and the rest of $M$. Note the unmatched vertices are specifically the excess vertices in each cluster. We will match these with an iterative process given our parameter $\costl$:
    \begin{enumerate}[topsep=0pt,itemsep=0pt,partopsep=0pt,parsep=0pt]
        \item Select a vertex $C_i\in V_M$ with excess at least $\costl$ to start a new connected component in $H_M$. Without loss of generality, say its excess color is red.
        \item Find a vertex $C_j\in V_M$ whose excess color is blue and whose excess is at least $\costl$. Add $(C_i,C_j)$ to $E_M$.
        \item Say without loss of generality that the excess of $C_i$ is less than that of $C_j$. Then match all the excess in $C_i$ to vertices in the excess of $C_j$. Now $C_j$ has a smaller excess.
        \item If $C_j$ has an excess less than $\costl$ or $C_j$ is the $\costl$th cluster in this component, end this component. Start over at (1) with a new cluster.
        \item Otherwise, use $C_j$ as our reference and continue constructing this component at (2).
        \item Complete when there are no more clusters with over $\costl$ excess that are not in a component (or all remaining such clusters have the same excess color).
    \end{enumerate}
    We would like to construct $\mathcal{C}'$ by merging all clusters in each component. This would be fair if $M$ were a perfect matching, however this is not true yet.  In the next step, we handle this.

(F) \textbf{Fix unmatched vertices.}  We now want to match excess vertices that are unmatched. We do this by bringing vertices from other clusters into the clusters that have unmatched excess, starting with all small unmatched excess. Note that some clusters were never used in Step (E) because they had small excess to start. This means they had many internal red-blue matches. Remove $\costm^2/\costl^2$ of these and put them into clusters in need. For other vertices, we will later describe a process where $\costm/\costl$ of the clusters can contribute $108\costm^2/\costl^2$ vertices to account for unmatched excess. Thus clusters lose at most $108\costm^2/\costl^2$ vertices, and we account for all unmatched vertices. Call the new clustering $\mathcal{C}_1$. Now $M$ is perfect and $H_M$ is unchanged.

(G) \textbf{Define $\mathcal{C}'$.}  Finally, we create the clustering $\mathcal{C}'$ by merging the clusters in each component of $H_M$.  Note that Proposition~\ref{prop:clusteringgraph} assures $C'$ is fair. In addition, we will show that cluster sizes in $\mathcal{C}_1$ are at most $6\costm$, so $\mathcal{C}'$ has the desired upper bound of $6\costm\costl$ on cluster size. Finally, we removed at most $\costl+ \costm^2/\costl^2$ vertices from each cluster. This is the desired $\mathcal{C}$-good clustering. 

Further details and the proofs that the above sequence of steps achieve the desired approximation can be found in the next section.  While the approximation factor obtained is not as strong as the ones for revenue or value objectives with fairness, we believe cost is a much harder objective with fairness constraints.

\subsection{Proof of Theorem ~\ref{thm:cost}}\label{sec:cost_proofs}

This algorithm contains a number of components. We will discuss the claims made by the description step by step. In Step (A), we simply utilize any $\beta$-approximation for the unfair approximation. Step (B) is also quite simple. At this point, all that is left is to show how to find $\mathcal{C}'$, ie, prove Lemma~\ref{lem:buildcluster} (introduced in Section~\ref{sec:cost}). This occurs in the steps following Step (B). In Step (C), we apply our first changes to the starting clustering from $T$. We now prove that the cluster sizes can be enforced to be between $\costm$ and $3\costm$.

\begin{lem}\label{lem:maximalcluster}
Given a clustering $\mathcal{C}$, we can construct a clustering $\mathcal{C}_0$, where each $C\in \mathcal{C}_0$ is a union of clusters in $\mathcal{C}$ and $\costm\leq |C|< 3\costm$.
\end{lem}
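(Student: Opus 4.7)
The plan is to use a straightforward greedy packing procedure, followed by a small correction step to handle any leftover. I view each cluster in $\mathcal{C}$ as a ``unit'' of size at most $\costm$ and pack these units into bins whose total size lies in $[\costm, 3\costm)$.

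First I would initialize an empty bin $B$ and iterate through the clusters of $\mathcal{C}$ in any order. I add clusters one at a time to $B$, and as soon as $|B| \geq \costm$, I close $B$ (declaring it a cluster of $\mathcal{C}_0$) and start a new bin. The key invariant to observe is that just before the cluster that caused $B$ to close was added, we had $|B| < \costm$; since the newly added cluster has size at most $\costm$, the closed bin satisfies $\costm \leq |B| < 2\costm$. Thus every bin closed during this loop already lies strictly inside the desired range.

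The only thing that could go wrong is that when we run out of clusters, the current bin $B$ might still have $|B| < \costm$. In this case I would merge $B$ into any previously closed bin $B'$; since $|B'| < 2\costm$ and $|B| < \costm$, the resulting merged bin has size $< 3\costm$, and of course still $\geq \costm$. (This is possible because $n \geq \costm \costl \geq \costm$ guarantees that at least one complete bin was formed in the greedy phase, so there exists some $B'$ to merge into.) Each resulting cluster is, by construction, a union of original clusters from $\mathcal{C}$, and the resulting $\mathcal{C}_0$ clearly partitions $V$.

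There is no real obstacle here; the lemma is essentially a bin-packing exercise where the slack $[2\costm, 3\costm)$ is exactly what absorbs the leftover. The only subtle point worth stating explicitly is why a complete bin is guaranteed to exist before we reach the leftover case, which follows from the hypothesis $n \geq \costm\costl$ stated at the start of Section~\ref{sec:cost_appendix}.
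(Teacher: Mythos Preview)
Your proposal is correct and follows essentially the same approach as the paper: greedily accumulate clusters into a bin until its size reaches $\costm$ (yielding a bin in $[\costm,2\costm)$), then absorb any leftover into a previously closed bin to stay under $3\costm$. You are slightly more careful than the paper in explicitly justifying, via $n\geq \costm\costl$, that a closed bin exists to receive the leftover.
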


\begin{proof}
We iterate over all clusters in $\mathcal{C}$ whose size are less than $\costm$ and continually merge them until we create a cluster of size $\geq \costm$. Note that since the last two clusters we merged were of size $<\costm$, this cluster is of size $\costm\leq|C|< 2\costm$. We then stop this cluster and continue merging the rest of the clusters. At the end, if we are left with a single cluster of size $<\costm$, we simply merge this with any other cluster, which will then be of size $\costm \leq |C| < 3\costm$. 
\end{proof}

Step (D) describes a rather simple process. All we have to do in each cluster is count the amount of each color in each cluster, find which is more, and also compute the difference. No claims are made here.

Step (E) defines a more careful process. We describe this process and its results here.

\begin{lem}\label{lem:hgraph}
There is an algorithm that, given a clustering $\mathcal{C}_0$ with  $\costm\leq |C| \leq 3\costm$ for $C\in\mathcal{C}_0$, can construct a red-blue clustering graph $H_M = (V_M, E_M)$ on $\mathcal{C}_0$ with underlying matching $M$ such that:
\begin{enumerate}
\item $H_M$ is a forest, and its max component size is $\costl$.
\item For every $(C_i, C_j)\in E_M$, there are at least $\costl$ matches between $C_i$ and $C_j$ in $M$. In other words, $|M(C_i)\cap C_j| \geq \costl$.
\item For most $C_i\in V_M$, at most $\costl$ vertices in $C_i$ are unmatched in $M$. The only exceptions to this rule are (1) exactly one cluster in every $\costl$-sized component in $H_M$, and (2) at most $n/2$ additional clusters.
\end{enumerate}
\end{lem}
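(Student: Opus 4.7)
The plan is to implement the procedure sketched in Step~(E) of the informal description and verify the three properties in order, leaving the counting in Property~3 as the main technical step. The setup is: inside each $C \in \mathcal{C}_0$, greedily pair up red and blue vertices and commit these pairs to $M$; what remains in each cluster is monochromatic and I call it the \emph{excess} of $C$ with its \emph{excess color}. Declare every $C$ \emph{active}. Then build components of $H_M$ one at a time. To start a component, pick any active cluster $R$ with excess $\geq \costl$ and call it the \emph{reference}. To extend, find any active cluster $N$ whose excess color is opposite to that of the current reference and whose excess is $\geq \costl$; add the edge (reference, $N$) to $E_M$, pair up $\min$ of the two excesses across the clusters, deactivate $N$ from the candidate pool, and promote whichever of the two still carries a positive residual to be the new reference. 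End the component as soon as the reference's residual falls below $\costl$, or the component contains $\costl$ clusters, or no valid $N$ can be found. Terminate globally once no active cluster with excess $\geq \costl$ admits an opposite-colored partner.

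\textbf{Properties 1 and 2.} Property~1 would be immediate from the construction: each newly added edge connects the current reference (already in the component) to a brand-new active cluster, so the component accumulates one fresh leaf per edge, remains acyclic, and is capped at $\costl$ clusters by the explicit stopping rule. Property~2 would follow from the admission rule: an edge is added only when both endpoints carry at least $\costl$ unmatched excess, so the matching step pairs at least $\min(\text{excess of reference},\text{excess of }N) \geq \costl$ vertices across them, which by definition contribute to $|M(C_i) \cap C_j|$.

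\textbf{Property 3 (main obstacle).} The hard part will be bounding the number of clusters that end with more than $\costl$ unmatched vertices, as this requires a careful case analysis of how each cluster exits the procedure. I would classify each $C \in V_M$ accordingly. If $C$ was never activated into any component, its unmatched count equals its initial excess, which is $<\costl$ by the definition of activation. If $C$ was added as the non-reference endpoint of an edge, then all of $C$'s (necessarily smaller or equal) excess was absorbed at that step, so $C$ ends with $0$ unmatched residual; subsequent edges out of $C$, if $C$ was later promoted to reference, only further match vertices but never undo the prior matches. The clusters that can retain more than $\costl$ unmatched vertices are therefore only those holding the reference role at the moment their component closes, and this happens in exactly three sub-cases: (i) the reference's residual dropped below $\costl$ (then $C$ has $<\costl$ unmatched, so no exception); (ii) the component reached the $\costl$-cluster cap (at most one exceptional cluster per $\costl$-sized component, matching exception~(1)); (iii) no opposite-colored active cluster with $\geq \costl$ excess remained. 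The sub-case~(iii) clusters, together with any never-activated clusters whose initial excess was $\geq \costl$ (necessarily all of a single color at global termination), constitute exception~(2); their count is bounded trivially by $|V_M| \leq n/\costm \leq n/2$, since every $C \in \mathcal{C}_0$ has $|C| \geq \costm \geq 2$.
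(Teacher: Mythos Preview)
Your proposal follows the same Step-(E) construction as the paper and verifies the three properties in essentially the same way; the only substantive deviation is in the bound for exception~(2), where you use the trivial count $|V_M|\leq n/\costm\leq n/2$ (implicitly assuming $\costm\geq 2$) while the paper argues via the global red--blue balance of unmatched excess. One expository glitch to fix: your early claim that a never-activated cluster has initial excess $<\costl$ ``by the definition of activation'' is false as stated---a cluster with large excess may simply lack an opposite-colored partner at termination---but you correctly fold exactly these leftover same-color clusters into exception~(2) at the end, so the argument survives.
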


\begin{proof}
We use precisely the process from Step 5. Let $V_M = \mathcal{C}_0$. $H_M$ will look like a bipartite graph with some entirely isolated nodes. We then try to construct components of $H_M$ one-by-one such that (1) the max component size is $\costl$, and (2) edges represent at least $\costl$ matches in $M$. 

Let us show it satisfies the three conditions of the lemma. For condition 1, note that we will always halt component construction once it reaches size $\costl$. Thus no component can exceed size $\costl$. In addition, for every edge added to the graph, at least one of its endpoints now has small excess and will not be considered later in the program. Thus no cycles can be created, so it is a forest.

For condition 2, consider the construction of any edge $(C_i,C_j)\in E_M$. At this point, we only consider $C_i$ and $C_j$ to be clusters with different-color excess of at least $\costl$ each. In the next part of the algorithm, we match as much excess as we can between the two clusters. Therefore, there must be at least $\costl$ underlying matches.

Finally, condition 3 will be achieved by the completion condition. By the completion condition, there are no isolated vertices (besides possibly those leftover of the same excess color) that have over $\costl$ excess. Whenever we add a cluster to a component, either that cluster matches all of its excess, or the cluster it becomes adjacent to matches all of its excess. Therefore at any time, any component has at most one cluster with any excess at all. If the component is smaller than $\costl$ (and is not the final component), then that can only happen when in the final addition, both clusters end up with less than $\costl$ excess. Therefore, no cluster in this component can have less than $\costl$ excess. For an $\costl$-sized component, by the rule mentioned before, only one cluster can remain with $\costl$ excess. When the algorithm completes, we are left with a number of large-excess clusters with the same excess color, say red without loss of generality. Assume for contradiction there are more than $n/2$ such clusters, and so there is at least $n\ell/2$ . Since we started with half red and half blue vertices, the remaining excess in the rest of the clusters must match up with the large red excess. Thus the remaining at most $n/2$ clusters must have at least $n\ell/2$ blue excess, but this is only achievable if they have large excess left. This is a contradiction. Thus we satisfy condition 3.
\end{proof}

This concludes Step (E). In Step (F), we will transform the underlying clustering $\mathcal{C}_0$ such that we can achieve a perfect matching $M$. This will require removing a small number of vertices from some clusters in $\mathcal{C}_0$ and putting them in clusters that have unmatched vertices. This process will at most double cluster size.

\begin{lem}\label{lem:fixmatch}
There is an algorithm that, given a clustering $\mathcal{C}_0$ with $\costm\leq |C| \leq 3\costm$ for $C\in\mathcal{C}_0$, finds a clustering $\mathcal{C}_1$ and an underlying matching $M'$ such that:

\begin{enumerate}
\item There is a bijection between $\mathcal{C}_0$ and $\mathcal{C}_1$.
\item For any cluster $C_0\in\mathcal{C}_0$ and its corresponding $C_1\in\mathcal{C}_1$, $|C_0|-|C_1|\leq \costl + 108\costm^2/\costl^2$. This means that at most $\costl$ vertices are removed from $C_0$ in the construction of $C_1$.
\item For all $C_1\in \mathcal{C}_1$, $\costm-\costl-108\costm^2/\costl^2 \leq |C_1|\leq 6\costm$.
\item $M'$ is a perfect red-blue matching.
\item $H_M$ is a red-blue clustering graph of $\mathcal{C}_1$ with matching $M'$, perhaps with additional edges.
\end{enumerate}
\end{lem}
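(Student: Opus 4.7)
The plan is to construct $\mathcal{C}_1$ and $M'$ by local modifications to $\mathcal{C}_0$ that transfer vertices between clusters, so that the partial matching $M$ from Lemma~\ref{lem:hgraph} can be extended to a perfect red-blue matching $M'$, while $H_M$ remains a clustering graph of $\mathcal{C}_1$ up to the addition of some edges inside existing components.

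First I would classify the unmatched vertices under $M$. Condition~(3) of Lemma~\ref{lem:hgraph} says that every cluster is ``good'' (at most $\costl$ unmatched vertices of a single excess color) except for three types: (a) at most one terminal cluster per $\costl$-sized component, (b) isolated exception clusters, which by the proof of Lemma~\ref{lem:hgraph} all share a common excess color, and (c) the remaining terminal clusters of smaller components, each of which already has $< \costl$ unmatched. I would also single out a set of \emph{reserve clusters}: those whose initial per-color excess was $< \costl$ and which were therefore never selected to start or extend a component in Step~(E). Each such reserve cluster contains $\Omega(\costm)$ internal red--blue matches and can afford to donate a small number of them.

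The repair then proceeds in two phases. In Phase~1, for every ``small-excess'' needy cluster $C$ with at most $\costl$ unmatched vertices of one color, I would locate a reserve cluster $C_0$ and transfer vertices of the opposite color from $C_0$ into $C$, together with the edges needed to keep $C_0 \cup C$ in a single component of the updated $H_M$ (condition~(5) allows such edges). Each individual needy cluster therefore costs its donor at most $\Theta(\costm^2/\costl^2)$ vertices when the load is spread over $\Theta(\costm/\costl)$ donors. In Phase~2, I handle the isolated exception clusters of type~(b). Since the whole graph is color-balanced, the total excess of these clusters is matched by opposite-color excess elsewhere; so I route each isolated cluster's unmatched vertices to the reserves using the same donate-and-add-edge primitive, making sure that at most $\costm/\costl$ donors are used per target cluster and that each loses no more than $108\costm^2/\costl^2$ vertices. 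After both phases, $M'$ is obtained by (i) keeping all internal matches that are still intact and (ii) pairing each previously-unmatched vertex with the opposite-color vertex brought into its cluster. Checking the five conditions is now bookkeeping: the bijection is immediate since no cluster is deleted; the removal bound is $\costl + 108\costm^2/\costl^2$ since a cluster contributes at most $\costl$ of its own excess vertices outward and at most $108\costm^2/\costl^2$ reserve matches; the initial range $[\costm,3\costm]$ together with the hypothesis $\costm > \costl + 108\costm^2/\costl^2$ keeps the lower size bound; and the upper bound $6\costm$ holds because each cluster absorbs at most $\costl$ small-excess inflows plus the few reserve contributions routed into it, whose total is easily $\leq 3\costm$.

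The main obstacle is Phase~2: when many isolated exception clusters have the same excess color and large excess, one must show that the opposite-color matched pairs available in reserves are numerous enough and can be distributed without violating either the per-cluster removal cap $\ell + 108\costm^2/\costl^2$ or the upper size bound $6\costm$. This is precisely where the parameter conditions $n \geq \costm\costl$ and $\costm > \costl + 108\costm^2/\costl^2$ are used, guaranteeing both a plentiful supply of reserve clusters and a sufficient slack in each donor. Everything else reduces to a flow-style accounting argument over the forest structure of $H_M$.
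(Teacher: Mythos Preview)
Your proposal has the right high-level shape---move a bounded number of vertices between clusters so that the partial matching from Lemma~\ref{lem:hgraph} becomes perfect---but there are two genuine gaps.

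First, you misread condition~(5). The phrase ``$H_M$ is a red-blue clustering graph of $\mathcal{C}_1$ with matching $M'$, perhaps with additional edges'' means that the clustering graph induced by $(\mathcal{C}_1,M')$ is a \emph{subgraph} of the fixed $H_M$ produced in Lemma~\ref{lem:hgraph}; $H_M$ may have extra edges, not the other way around. Your ``donate-and-add-edge'' primitive, which joins a reserve $C_0$ to a needy cluster $C$ by a new edge of $H_M$, is therefore not permitted. This is not a technicality: the downstream step (the proof of Lemma~\ref{lem:buildcluster}) builds $\mathcal{C}'$ by merging connected components of $H_M$, and the bound $|C'|\le 6\costm\costl$ rests on those components having size at most $\costl$. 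Adding edges can merge components and destroy that bound. The paper avoids this by always moving freed vertices \emph{into} a large-excess cluster and matching them there internally, so $M'$ never contains a new inter-cluster match.

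Second, and more seriously, both of your phases rely on a supply of ``reserve'' clusters (those whose initial excess was $<\costl$), and nothing in the hypotheses guarantees any such clusters exist. The conditions $n\ge \costm\costl$ and $\costm>\costl+108\costm^2/\costl^2$ control total size and per-cluster slack, not the number of small-excess clusters; it is entirely possible that every cluster begins with excess $\ge\costl$, in which case your reserve set is empty and the scheme stalls. The paper identifies exactly this as the worst case and handles it with a mechanism you do not mention: a BFS over each tree of the forest $H_M$ that, at every internal node, breaks $\costl$ of the inter-cluster matches along one child edge (condition~(2) of Lemma~\ref{lem:hgraph} guarantees each edge carries at least $\costl$ matches). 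Each break frees $2\costl$ vertices; since a node has at most $3\costm/\costl$ children, an $\Omega(\costl/\costm)$ fraction of the $\ge n/(3\costm)$ clusters contribute per pass, yielding $\Omega(n\costl^2/\costm^2)$ freed vertices, and $108\costm^2/\costl^3$ passes suffice to cover the $O(n/\costl)$ vertices of remaining large excess while removing at most $108\costm^2/\costl^2$ from any one cluster. Your closing remark about ``a flow-style accounting argument over the forest structure of $H_M$'' does not capture this: the forest is not used to route flow but to \emph{harvest} matched pairs sitting on its edges.
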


\begin{proof}
Use Lemma~\ref{lem:hgraph} to find the red-blue clustering graph $H_M$ and its corresponding graph $M$. Then we know that only one cluster in every $\costl$-sized component plus one other cluster can have a larger than $\costl$ excess. Since cluster sizes are at least $\costm$, $|V_M|\geq n/\costm$. This means that at most $n/(\costm\costl) + 1 = (n+\costm\costl)/(\costm\costl) \leq 2n/(\costm\costl)$ clusters need more than $\costl$ vertices. Since the excess is upper bounded by cluster size which is upper bounded by $3\costm$, this is at most $6n/\costl$ vertices in large excess that need matches.

We will start by removing all small excess vertices from clusters. This removes at most $\costl$ from any cluster. These vertices will then be placed in clusters with large excess of the right color. If we run out of large excess of the right color that needs matches, since the total amount of red and blue vertices is balanced, that means we can instead transfer the unmatched small excess red vertices to clusters with a small amount of unmatched blue vertices. In either case, this accounts for all the small unmatched excess. Now all we need to account for is at most $6n/\costl$ unmatched vertices in large excess clusters. At this point, note that the large excess should be balanced between red and blue. From now on, we will remove matches from within and between clusters to contribute to this excess. Since this always contributes the same amount of red and blue vertices by breaking matches, we do not have to worry about the balance of colors. We will describe how to distribute these contributions across a large number of clusters.

Consider vertices that correspond to clusters that (ignoring the matching $M$) started out with at most $\costl$ excess. So the non-excess portion, which is at least size $\costm-\costl$, is entirely matched with itself. We will simply remove $\costm^2/\costl^2$ of these matches to contribute.

Otherwise, we will consider vertices that started out with large excess. We must devise a clever way to break matches without breaking too many incident upon a single cluster. For every tree in $H_M$ (since $H_M$ is a forest by Lemma~\ref{lem:hgraph}), start at the root, and do a breadth-first search over all internal vertices. At any vertex we visit, break $\costl$ matches between it and its child (recall by by Lemma~\ref{lem:hgraph} that each edge in $H_M$ represents at least $\costl$ inter-cluster matches). Thus, each break contributes $2\costl$ vertices. We do this for every internal vertex. Since an edge represents at least $\costl$ matches and the max cluster size is at most $3\costm$, any vertex can have at most $3\costm/\costl$ children. Thus the fraction of vertices in $H_M$ that correspond to a contribution of $2\costl$ vertices is at least $\costl/(3\costm)$.

Clearly, the worst case is when all vertices in $H_M$ have large excess, as this means that fewer clusters are ensured to be able to contribute. By Lemma~\ref{lem:hgraph}, at least $n/2$ of these are a part of completed connected components (ie, of size $\ell$ or with each cluster having small remaining excess). So consider this case. Since $|V_M| \geq n/(3\costm)$, then this process yields $n\costl^2/(18\costm^2)$ vertices. To achieve $6n/\costl$ vertices, we must then run $108\costm^2/\costl^3$ iterations. If an edge no longer represents $\costl$ matches because of an earlier iteration, consider it a non-edge for the rest of the process. The only thing left to consider is if a cluster $C$ becomes isolated in $H_M$ during the process. We know $C$ began with at least $\costm$ vertices, and at most $\costl$ were removed by removing small excess. So as long as $\costm>\costl+108\costm^2/\costl^2$, we can remove the rest of the $108\costm^2/\costl^2$ vertices from the non-excess in $C$ (the rest must be non-excess) in the same way as vertices that were isolated in $H_M$ to start. Thus, we can account for the entire set of unmatched vertices without removing more than $108\costm^2/\costl^2$ vertices from any given cluster.

Now we consider the conditions. Condition 1 is obviously satisfied because we are just modifying clusters in $\mathcal{C}_0$, not removing them. The second condition is true because of our careful accounting scheme where we only remove $\costl+108\costm^2/\costl^2$ vertices per cluster. The same is true for the lower bound in condition 3. When we add them to new clusters, since we only add a vertex to match an unmatched vertex, we at most double cluster size. So the max cluster size is $6\costm$.

For the fourth condition, note that we explicitly executed this process until all unmatched vertices became matched, and any endpoint in a match we broke was used to create a new match. Thus the new matching, which we call $M'$, is perfect. It is still red-blue. Finally, note we did not create any matches between clusters. Therefore, no match in $M'$ can violate $H_M$. Thus condition 5 is met.
\end{proof}

Finally, we construct our final clustering in Step (G). However, to satisfy the qualities of Lemma \ref{lem:finalcluster}, we must first argue about the weight loss from each cluster.

\begin{lem}\label{lem:weightloss}
Consider any clustering $\mathcal{C}$ with cluster sizes between $\costm$ and $6\costm$. Say each cluster has a specified $r$ number of red vertices to remove and $b$ number of blue vertices to remove such that $r+b\leq x$ for some $x$, and $r$ (resp. $b$) is nonzero only if the number of red (resp. blue) vertices in the cluster is $O(n)$. Then we can remove the desired number of each color while removing at most an $O((x/\costm)\gamma_\costm)$ of the weight originally contained within the cluster.
\end{lem}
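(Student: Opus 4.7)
My plan is to repeatedly apply the $\gamma_\costm$-approximation for minimum weighted bisection on nested subsets of the cluster $C$ in order to extract a target set $S \subseteq C$ of size $r+b \leq x$ whose cut weight to $C \setminus S$ is small. Starting from $S_0 = C$, at each step $i$ I run the bisection approximation on the current subset $S_i$ (which has size at most $6\costm$, so the $\gamma_\costm$ guarantee applies) to split $S_i$ into two equal-sized halves $A_i, B_i$, and set $S_{i+1}$ to be one of these halves, chosen to preserve feasibility of the remaining color demand and to control a weight potential. After $t = \lceil \log(|C|/x) \rceil$ iterations, $|S_t| = \Theta(x)$, and I finally extract exactly $r$ red and $b$ blue vertices from $S_t$, using $O(x)$ swaps with $C \setminus S_t$ to correct the color counts if needed.

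For the weight analysis I use the standard averaging argument: a uniformly random bisection of $S_i$ has expected cut weight $W(S_i)/2$, so the bisection algorithm returns a cut of weight $\mathrm{cut}(S_i) \leq \gamma_\costm W(S_i)/2$. Choosing $S_{i+1}$ to be the half of smaller internal weight gives $W(S_{i+1}) \leq W(S_i)/2$, and hence $W(S_i) \leq W(C)/2^i$ by induction. Choosing further so that the boundary satisfies $\partial S_{i+1} \leq \partial S_i/2 + \mathrm{cut}(S_i)$ (achievable up to constants by picking the side that minimizes a combined potential of internal weight plus external boundary) and unrolling yields
\[
\partial S_t \;\leq\; \sum_{i=0}^{t-1} \frac{\mathrm{cut}(S_i)}{2^{t-1-i}} \;\leq\; \sum_{i=0}^{t-1} \frac{\gamma_\costm W(C)}{2^{i+1}\cdot 2^{t-1-i}} \;=\; \frac{t\,\gamma_\costm W(C)}{2^t},
\]
which, using $2^t \geq |C|/x \geq \costm/x$, gives weight loss at most $O((x/\costm)\gamma_\costm)\cdot W(C)$, absorbing the $\log(\costm/x)$ factor into $\gamma_\costm$ (which is polylogarithmic in $n$).

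The color constraint is handled by treating the demand $(r,b)$ as a hard feasibility filter when selecting between $A_i$ and $B_i$: whenever both halves remain feasible (each retains enough red and blue to eventually meet the remaining demand), pick the one minimizing the potential; otherwise pick the unique feasible half. The precondition that $r$ (resp.\ $b$) is nonzero only when the cluster has a substantial number of red (resp.\ blue) vertices is exactly what guarantees that at least one half is always feasible throughout the recursion. Any residual color mismatch at $S_t$ is corrected by swapping $O(x)$ pairs between $S_t$ and $C \setminus S_t$; each swap perturbs the boundary by at most the average edge weight incident to a vertex of $C$, so the cumulative correction is absorbed into the overall bound.

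The main obstacle is the joint potential that simultaneously enforces geometric shrinkage of $W(S_i)$ (needed to bound $\mathrm{cut}(S_i)$ via the bisection approximation) and the boundary recurrence $\partial S_{i+1} \leq \partial S_i/2 + \mathrm{cut}(S_i)$ (needed to avoid a $\gamma_\costm W(C)$-sized boundary at the end). Using either rule in isolation fails: picking smaller internal weight does not control $\partial S_t$, and picking smaller boundary does not shrink $W(S_i)$. Combining the two objectives by minimizing a weighted sum $\partial S + \lambda W(S)$ for a suitable $\lambda$, and verifying that both halves still lie within constant factors of the desired bounds, is the technical heart of the argument; the color-matching step is a secondary complication resolved by the structural assumption on vertex counts.
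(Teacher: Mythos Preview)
Your approach is genuinely different from the paper's and has a real gap in handling the color constraint. The paper does \emph{not} recursively bisect. To remove $r$ red vertices from $C$ it builds an auxiliary graph $K$: one super-node $b_0$ representing all blue vertices of $C$, one node per red vertex, a complement node $b_0'$, and enough infinite-weight dummies attached to $b_0$ and $b_0'$ so that any finite-cost bisection of $K$ is forced to place $b_0$ together with exactly $|C_r|-r$ red nodes and $b_0'$ with the other $r$. The edge weights in $K$ are set so that the bisection cost equals exactly the similarity lost when those $r$ reds are deleted from $C$. A \emph{single} call to the $\gamma_\costm$-approximate bisection oracle on $K$ then directly yields the set to remove, and a simple averaging over all $\binom{\rho}{r}$ possible removals (where $\rho=|C_r|$) shows the optimum is at most $(2r/\rho)\,\simi(C)$, giving the $O((r/\costm)\gamma_\costm)\,\simi(C)$ loss. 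The color constraint is thus encoded \emph{into the bisection instance itself}, so the oracle's output already has the correct color profile; blue is handled symmetrically.

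Your recursive scheme lacks this property. The bisection oracle on $S_i$ gives no control over how colors split between $A_i$ and $B_i$, so your feasibility filter may force you to take the half with the larger potential, destroying the geometric decay $W(S_{i+1})\le W(S_i)/2$ and the boundary recurrence on which your whole weight bound rests. Worse, the claim that at least one half is always feasible is itself unjustified: if at some stage $S_i$ has $r_i$ reds and $b_i$ blues with $r\le r_i<2r$ and $b\le b_i<2b$, the oracle can return a split with most reds in $A_i$ and most blues in $B_i$, so \emph{neither} half contains both $r$ reds and $b$ blues. The final swap step does not repair this: a single swap changes the boundary by the full similarity weight incident to the swapped vertex, not by any average, so $O(x)$ swaps can cost $\Omega(\simi(C))$. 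The joint-potential issue you flag as the technical heart is in fact secondary; the real obstacle is the color constraint, and it is precisely what the paper's auxiliary-graph reduction is built to handle.
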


\begin{proof}
    Consider some cluster $C$ with parameters $r$ and $b$. We will focus first on removing red vertices. Let $C_r$ be the red vertex set in $C$. We create a graph $K$ corresponding to this cluster as follows. Let $b_0$ be a vertex representing all blue vertices from $C$, $b_0'$ be the ``complement'' vertex to $b_0$, and $R$ be a set of vertices $r_i$ corresponding to all red vertices in $C$. We also add a set of $2r - |C_r| + 2X$ dummy vertices (where $X$ is just some large value that makes it so $2r - |C_r| + X > 0$). $2r - |C_r| + X$ of the dummy vertices will be connected to $b_0$ with infinite edge weight (denote these $\delta_i$), the other $X$ will be connected to $b_0'$ with infinite edge weight (denote these $\delta_i'$). This will ensure that $b_0$ and $b_0'$ are in the same partitions as their corresponding dummies. Let $\simi_G$ and $\simi_K$ be the similarity function in the original graph and new graph respectively.
    
    \begin{align*}
    \simi_K(b_0, \delta_i) =& \infty
    \\\simi_K(b_0', \delta_i') =& \infty
    \end{align*}
    
    The blue vertex $b_0$ is also connected to all $r_i$ with the following weight (where $C_b$ is the set of blue vertices in $C$):
    
    \[\simi_K(b_0, r_i) = \sum_{b_j\in C_b} \simi_G(r_i, b_j) + \frac12\sum_{r_j\in R\setminus \{r_j\}} \simi_G(r_i, r_j)\]
    
    This edge represents the cumulative edge weight between $r_i$ and all blue vertices. The additional summation term, which contains the edge weights between $r_i$ and all other red vertices, is necessary to ensure our bisection cut will also contain the edge weights between two of the removed red vertices.
    
    Next, the edge weights between red vertices must contain the other portion of the corresponding edge weight in the original graph.
    
    \[\simi_K(r_i, r_j) = \frac12 \simi_G(r_i, r_j)\]
    
    Now, we note that there are a total of $2 - |C_r| + 2X + |C_r| = 2r + 2X$ vertices. So a bisection will partition the graph into vertex sets of size $r +X$. Obviously, in any approximation, $b_0$ must be grouped with all $\delta_i$ and $b_0'$ must be grouped with all $\delta_i'$. This means the $b_0$ partition must contain $|C_r|-r$ of the $R$ vertices, and the $b_0'$ partition must contain the other $r$. These $r$ vertices in the latter partition are the ones we select to move.
    
    Consider any set $S$ of $r$ red vertices in $K$. Then it is a valid bisection. We now show that the edge weight in the cut for this bisection is exactly the edge weight lost by removing $S$ from $K$. We can do this algebraically. We start by breaking down the weight of the cut into the weight between the red vertices in $S$ and $b_0$, and also the red vertices in $S$ and the red vertices not in $S$.
    
    \begin{align*}
        \simi_K&(S, V(K)\setminus S) \\=& \sum_{r_i\in S} \simi_K(b_0, r_i) + \sum_{r_i\in S, r_j\in R\setminus S} \simi_K(r_i, r_j)
        \\=&\sum_{r_i\in S} \left(\sum_{b_j\in B} \simi_G(r_i, b_j) + \frac12\sum_{r_j\in R\setminus \{r_j\}} \simi_G(r_i, r_j)\right)\\&+ \sum_{r_i\in S, r_j\in R\setminus S} \frac12 \simi_G(r_i, r_j)
        \\=&\sum_{r_i\in S} \left(\sum_{b_j\in B} \simi_G(r_i, b_j) + \frac12\sum_{r_j\in R\setminus \{r_j\}} \simi_G(r_i, r_j) \right.\\&\left.\quad\quad\quad+ \frac12\sum_{r_j\in R\setminus S}  \simi_G(r_i, r_j)\right)
    \end{align*}
    
    Notice that the two last summations have an overlap. They both contribute half the edge weight between $r_i$ and vertices in $R\setminus S$. Thus, these edges contribute their entire edge weight. All remaining vertices in $S\setminus \{r_i\}$ only contribute half their edge weight. We can then distribute the summation.
    
    \begin{align*}
        \simi_K&(S, V(K)\setminus S) \\=&\sum_{r_i\in S} \left(\sum_{b_j\in B} \simi_G(r_i, b_j) + \frac12\sum_{r_j\in S\setminus \{r_j\}} \simi_G(r_i, r_j)\right. \\&\left.\quad\quad\quad+ \sum_{r_j\in R\setminus S}  \simi_G(r_i, r_j)\right)
        \\=&\sum_{r_i\in S,b_j\in B} \simi_G(r_i, b_j) + \frac12\sum_{r_i\in S,r_j\in S\setminus \{r_j\}} \simi_G(r_i, r_j) 
        \\&+ \sum_{r_i\in S,r_j\in R\setminus S}  \simi_G(r_i, r_j)
    \end{align*}
    
    In the middle summation, note that every edge $e=(u,v)$ is counted twice when $r_i=u$ and $r_j=v$, and when $r_i=v$ and $r_j=u$. We can then rewrite this as:
    
    \begin{align*}
        \simi_K(S, V(K)\setminus S) =&\sum_{r_i\in S,b_j\in B} \simi_G(r_i, b_j) \\&+ \sum_{r_i,r_j\in S} \simi_G(r_i, r_j) \\&+ \sum_{r_i\in S,r_j\in R\setminus S}  \simi_G(r_i, r_j)
    \end{align*}
    
    When we remove $S$, we remove the connections between $S$ and blue vertices, the connections within $S$, and the connections between $S$ and red vertices not in $S$. This is precisely what this accounts for. Therefore, any bisection on $K$ directly corresponds to removing a vertex set $S$ of $r$ red vertices from $C$. If we have a $\gamma_\costm$-approximation for minimum weighted bisection, then, this yields a $\gamma_\costm$-approximation for the smallest loss we can achieve from removing $r$ red vertices.
    
    Now we must compare the optimal way to remove $r$ vertices to the total weight in a cluster. Let $\rho = |C_r|$ be the number of red vertices in a cluster. Then the total number of possible cuts to isolate $r$ red vertices is $\binom{\rho}{r}$. Let $\mathcal{S}$ be the set of all possible cuts to isolate $r$ red vertices. Then if we sum over the weight of all possible cuts (where weight here is the weight between the $r$ removed vertices and all vertices, including each other), that will sum over each red-red edge and blue-red edge multiple times. A red-red edge is counted if either of its endpoints is in $S\in\mathcal{S}$, and this happens $2\binom{\rho}{r-1} - \binom{R-1}{r-2} \leq 2\binom{\rho}{r-1}$ of the time. A blue-red edge is counted if its red endpoint is in $S$, which happens $\binom{\rho}{r-1} \leq 2\binom{\rho}{r-1}$. And of course, since no blue-blue edge is covered, each is covered under $2\binom{\rho}{r-1}$ times. Therefore, if we sum over all these cuts, we get at most $2\binom{\rho}{r-1}$ times the weight of all edges in $C$.

\[\sum_{S\in\mathcal{S}} \simi(S) \leq 2\binom{\rho}{r-1} \simi(C)\]

Let $OPT$ be the minimum possible cut. Now since there are $\binom{\rho}{r}$ cuts, we know the lefthand side here is bounded above by $\binom{\rho}{r}\simi(OPT)$.

\[\binom{\rho}{r} \simi(OPT)\leq 2\binom{\rho}{r-1} \simi(C)\]

We can now simplify.

\[\simi(OPT)\leq \frac{2r}{\rho} \simi(C)\]

But note we are given $\rho = O(\costm)$. So if we have a $\gamma_\costm$ approximation for the minimum bisection problem, this means we can find a way to remove $r$ vertices such that the removed weight is at most $O(r/\costm)\gamma_\costm$. We can do this again to get a bound on the removal of the blue vertices. This yields a total weight removal of $O(x/\costm)\gamma_\costm$.
\end{proof}

Finally, we can prove Lemma~\ref{lem:buildcluster}, which satisfies the conditions of Lemma \ref{lem:finalcluster}.

\begin{proof}
Start by running Lemma~\ref{lem:maximalcluster} on $\mathcal{C}$ to yield $\mathcal{C}_0$. Then we can apply Lemma~\ref{lem:fixmatch} to yield $\mathcal{C}_1$ with red-blue clustering graph $H_M$ and underlying perfect red-blue matching $M'$. We create $\mathcal{C}'$ by merging components in $H_M$ into clusters. Since the max component size is $\costl$ and the max cluster size in $\mathcal{C}_1$ is $6\costm$, then the max cluster size in $\mathcal{C}'$ is $6\costm\costl$. This satisfies condition 2 of being $\mathcal{C}$-good. In addition, it is fair by Proposition~\ref{prop:clusteringgraph}.

Finally, we utilize the fact that we only moved at most $\costl + 108\costm^2\costl^2$ vertices from any cluster, and note that we only move vertices of a certain color if we have $O(n)$ of that color in that cluster. Then by Lemma~\ref{lem:weightloss}, we know we lost at most $O(\costl\gamma_\costm/\costm + \costm\gamma_\costm/\costl^2)$ fraction of the weight from any cluster. This satisfies the second condition and therefore $\mathcal{C}'$ is $\mathcal{C}$-good.
\end{proof}

\section{Additional experimental results for revenue}

We have conducted experiments on the four datasets for revenue as well. The 
Table~\ref{emp:ratiorevenue} shows the ratio of fair tree built by using average-linkage on different fairlet decompositions. We run Algorithm \ref{alg:eps_local_opt_decomp} on the subsamples with Euclidean distances. Then we convert distances into similarity scores using transformation $\simi(i,j) = \frac{1}{1+\dist(i,j)}$. We test the performance of the initial random fairlet decomposition and final fairlet decomposition found by Algorithm~\ref{alg:eps_local_opt_decomp} for revenue objective using the converted similarity scores.

\begin{table*}[t!]
\centering 
\caption{Impact of different fairlet decomposition on ratio over original average-linkage in percentage (mean $\pm$ std. dev).}
\label{emp:ratiorevenue}
\tiny
\begin{tabular}{r|llllllll}
\hline
Samples & 100 & 200 & 400 & 800 & 1600 \\
\hline
\CensusGender, initial & $74.12 \pm 2.52$ & $76.16 \pm 3.42$ & $74.15 \pm 1.44$ & $70.17 \pm 1.01$ & $65.02 \pm 0.79$  \\
final  & $92.32 \pm 2.70$ & $95.75 \pm 0.74$ & $95.68 \pm 0.96 $ & $96.61 \pm 0.60 $ & $ 97.45 \pm 0.19 $ \\
\hline
\CensusRace, initial  & $65.67 \pm 7.53$ & $65.31 \pm 3.74$ & $61.97 \pm 2.50$ & $59.59 \pm 1.89$ & $56.91 \pm 0.82$  \\
final & $85.38 \pm 1.68$ & $92.98 \pm 1.89$ & $94.99 \pm 0.52$ & $96.86 \pm 0.85$ & $97.24 \pm 0.63$  \\
\hline
\BankMarriage, initial  & $75.19 \pm 2.53$ & $73.58 \pm 1.05$ & $74.03 \pm 1.33$ & $73.68 \pm 0.59$ & $72.94 \pm 0.63$ \\
final  & $ 93.88 \pm 2.16$ & $96.91 \pm 0.99$ & $96.82 \pm 0.36$ & $97.05 \pm 0.71$ & $97.81 \pm 0.49$  \\
\hline
\BankAge, initial & $77.48 \pm 1.45$ & $78.28 \pm 1.75 $ & $76.40 \pm 1.65$ & $ 75.95 \pm 0.77 $ & $75.33 \pm 0.28$   \\
final  & $91.26 \pm 2.66$ & $95.74 \pm 2.17 $ & $96.45 \pm 1.56$ & $ 97.31 \pm 1.94 $ & $97.84 \pm 0.92$ \\
\hline
\end{tabular}
\end{table*}

\section{Additional experimental results for multiple colors}\label{sec:multicolorexp}

We ran experiments with multiple colors and the results are analogous to those in the paper. We tested both Census and Bank datasets, with age as the protected feature. For both datasets we set 4 ranges of age to get 4 colors and used $\alpha=1/3$. We ran the fairlet decomposition in~\cite{ahmadian} and compare the fair hierarchical clustering's performance to that of average-linkage. The age ranges and the number of data points belonging to each color are reported in Table~\ref{table:datasets_multi_colors}. Colors are named $\{1,2,3,4\}$ descending with regard to the number of points of the color. The vanilla average-linkage has been found to be unfair: if we take the layer of clusters in the tree that is only one layer higher than the leaves, there is always one cluster with $\alpha > \frac{1}{3}$ for the definition of $\alpha$-capped fairness, showing the tree to be unfair.

\begin{table*}[ht] 
\centering 
\caption{Age ranges for all four colors for Census and Bank.} \label{table:datasets_multi_colors}
\tiny
\begin{tabular}{r|llll}
\hline
Dataset & Color 1 & Color 2 & Color 3 & Color 4 \\
\hline
\textsc{CensusMultiColor} & $(26, 38]:9796$ & $(38, 48]: 7131$ & $(48, +\infty): 6822$ & $(0, 26]: 6413$  \\
\textsc{BankMultiColor} & $(30, 38]:14845$ & $(38, 48]: 12148$ & $(48, +\infty): 11188$ & $(0, 30]: 7030$ \\
\hline
\end{tabular}
\end{table*}

As in the main body, in Table~\ref{emp:ratiockmm_multi_color}, we show for each dataset the $\ratio_\myvalue$ both at the time of initialization (Initial) and after using the local search algorithm (Final), where $\ratio_\myvalue$ is the ratio between the performance of the tree built on top of the fairlets and that of the tree directly built by average-linkage. 

\begin{table*}[ht]\vspace{-0.2in}
\centering 
\caption{Impact of Algorithm~\ref{alg:eps_local_opt_decomp} on $\ratio_{\myvalue}$ in percentage (mean $\pm$ std. dev).}
\label{emp:ratiockmm_multi_color}
\tiny
\begin{tabular}{r|llllllll}
\hline
Samples & 200 & 400 & 800 & 1600 & 3200 & 6400 \\
\hline
\textsc{CensusMultiColor}, initial  & $88.55 \pm 0.87$ & $88.74 \pm 0.46$ & $88.45 \pm 0.53$ & $88.68 \pm 0.22$ & $88.56 \pm 0.20$ & $88.46 \pm 0.30$\\
final   & $99.01 \pm 0.09$ & $99.41 \pm 0.57$ & $99.87 \pm 0.28$ & $99.80 \pm 0.27$ & $100.00 \pm 0.14$ & $99.88 \pm 0.30$\\
\hline
\textsc{BankMultiColor}, initial  & $90.98 \pm 1.17$ & $91.22 \pm 0.84$ & $91.87 \pm 0.32$ & $91.70 \pm 0.30$ & $91.70 \pm 0.18$ & $91.69 \pm 0.14$\\
final  & $98.78 \pm 0.22$ & $99.34 \pm 0.32$ & $99.48 \pm 0.16$ & $99.71 \pm 0.16$ & $99.80 \pm 0.08$ & $99.84 \pm 0.05$ \\
\hline
\end{tabular}
\end{table*}

Table~\ref{emp:ratiorevenue_multi_color} shows the performance of trees built by average-linkage based on different fairlets, for Revenue objective. As in the main body, the similarity score between any two points $i,j$ is $\simi(i,j) = \frac{1}{1+\dist(i,j)}$. The entries in the table are mean and standard deviation of ratios between the fair tree performance and the vanilla average-linkage tree performance. This ratio was calculated both at time of initialization (Initial) when the fairlets were randomly found, and after Algorithm \ref{alg:eps_local_opt_decomp} terminated (Final). 

\begin{table*}[ht]\vspace{-0.2in}
\centering 
\caption{Impact of Algorithm~\ref{alg:eps_local_opt_decomp} on revenue, in percentage (mean $\pm$ std. dev).}
\label{emp:ratiorevenue_multi_color}
\tiny
\begin{tabular}{r|llllllll}
\hline
Samples & 200 & 400 & 800 & 1600 & 3200 \\
\hline
\textsc{CensusMultiColor}, initial  & $75.76 \pm 2.86$ & $73.60 \pm 1.77$ & $69.77 \pm 0.56$ & $66.02 \pm 0.95$ & $61.94 \pm 0.61$\\
final   & $92.68 \pm 0.97$ & $94.66 \pm 1.66$ & $96.40 \pm 0.61$ & $97.09 \pm 0.60$ & $97.43 \pm 0.77$\\
\hline
\textsc{BankMultiColor}, initial  & $72.08 \pm 0.98$ & $70.96 \pm 0.69$ & $70.79 \pm 0.72$ & $70.77 \pm 0.49$ & $69.88 \pm 0.53$\\
final  & $94.99 \pm 0.79$ & $95.87 \pm 2.07$ & $97.19 \pm 0.81$ & $97.93 \pm 0.59$ & $98.43 \pm 0.14$\\
\hline
\end{tabular}
\end{table*}

Table \ref{emp:runtime_multi_color} shows the run time of Algorithm \ref{alg:eps_local_opt_decomp} with multiple colors.
\begin{table*}[ht]
\centering 
\caption{Average running time of Algorithm~\ref{alg:eps_local_opt_decomp} in seconds.}
\label{emp:runtime_multi_color}
\begin{tabular}{r|llllllll}
\hline
Samples & 200 & 400 & 800 & 1600 & 3200 & 6400 \\
\hline
\textsc{CensusMultiColor} & 0.43 & 1.76 & 7.34 & 35.22 & 152.71 & 803.59\\
\hline
\textsc{BankMultiColor} & 0.43 & 1.45 & 6.77 & 29.64 & 127.29 & 586.08 \\
\hline
\end{tabular}
\end{table*}

\newcommand{\freeVerts}{\mathit{fV}}
\newcommand{\component}{\mathit{component}}

\section{Pseudocode for the cost objective}\label{sec:costpseudo}

\begin{algorithm}[tb]
   \caption{Fair hierarchical clustering for cost objective.}
   \label{alg:cost}
\begin{algorithmic}
   \STATE {\bfseries Input:} Graph $G $, edge weight $w:E\to\mathbb{R}$, color $c:V\to\{\text{red, blue}\}$, parameters $t$ and $\ell$
   \STATE
   \STATE \COMMENT{Step (A)}
   \STATE $T \gets \textsc{UnfairHC}(G,w)$
   \COMMENT{Blackbox unfair clustering that minimizes cost}
   \STATE
   \STATE \COMMENT{Step (B)}
   \STATE Let $\mathcal{C} \gets \emptyset$
   \STATE Do a \textsf{BFS} of $T$, placing visited cluster $C$ in $\mathcal{C}$ if $|C|\leq t$, and not proceeding to $C$'s children
   \STATE
   \STATE \COMMENT{Step (C)}
   \STATE $\mathcal{C}_0, C' \gets \emptyset$
   \FOR{$C$ {\bfseries in} $\mathcal{C}$}
   \STATE $C' \leftarrow C' \cup C$
   \IF{$|C'| \geq t$}
   \STATE Add $C'$ to $\mathcal{C}_0$
   \STATE Let $C' \gets \emptyset$
   \ENDIF
   \ENDFOR
   \STATE If $|C'| > 0$, merge $C'$ into some cluster in $\mathcal{C}_0$
   \STATE
   \STATE \COMMENT{Step (D)}
   \FOR{$C$ {\bfseries in} $\mathcal{C}_0$}
   \STATE Let $exc(C) \gets$ majority color in $C$
   \STATE Let $ex(C) \gets$ difference between majority and minority colors in $C$
   \ENDFOR
   \STATE
   \STATE \COMMENT{Step (E}
   \STATE $H_M \gets$ BuildClusteringGraph$(\mathcal{C}_0, ex, exc)$
   \STATE
   \STATE \COMMENT{Step (F)}
   \STATE $\freeVerts \gets$ FixUnmatchedVertices$(\mathcal{C}_0, H_M, ex, exc)$
   \STATE
      \STATE \COMMENT{Step (G)}
   \STATE $\mathcal{C}' \gets$ ConstructClustering$(\mathcal{C}_0, ex , exc, \freeVerts)$
   \RETURN $\mathcal{C}'$
\end{algorithmic}
\end{algorithm}

\begin{algorithm}[tb]
   \caption{BuildClusteringGraph $(\mathcal{C}_0, ex, exc)$}
   \label{alg:stepe}
\begin{algorithmic}
   \STATE $H_M \gets (V_M=\mathcal{C}_0, E_M=\emptyset)$
   \STATE Let $C_i \in V_M$ be any vertex
   \STATE Let $\ell \gets n^{1/3}\sqrt{\log n}$
   \WHILE{$\exists$ an unvisited $C_j\in V_M$ such that $exc(C_j)\neq exc(C_i)$} 
   \STATE Add $(C_i,C_j)$ to $E_M$
   \STATE Swap labels $C_i$ and $C_j$ if $ex(C_j) > ex(C_i)$
   \STATE Let $ex(C_i) \leftarrow ex(C_i) - ex(C_j)$
   \IF{$ex(C_i) < \ell$ or $|\component(C_i)| \geq \ell$}
   \STATE Reassign starting point $C_i$ to an unvisited vertex in $V_M$
   \ENDIF
   \ENDWHILE
   \RETURN $H_M$
\end{algorithmic}
\end{algorithm}

\begin{algorithm}[tb]
   \caption{FixUnmatchedVertices$(\mathcal{C}_0, H_M, ex, exc)$}
   \label{alg:stepf}
\begin{algorithmic}
   \STATE Let $\ell \gets n^{1/3}\sqrt{\log n}$
   \FOR{$C\in\mathcal{C}_0\setminus V_M$}
   \STATE Let $\freeVerts(C,\text{red}), \freeVerts(C,\text{blue}) 
   \gets m^2/\ell^2$
   \ENDFOR
   \FOR{$i$ {\bfseries from} $1$ {\bfseries to} $108t^2/\ell^3$}
   \FOR{each $k$ component in $H_M$}
   \FOR{$p$ {\bfseries in} a \textsf{BFS} of $k$}
   \STATE Let $ch \gets$ some child of $p$
   \STATE $\freeVerts(p, exc(p)) \gets \freeVerts(p, exc(p)) + \ell$ 
   \STATE $ex(p) \gets ex(p) - \ell$
   \STATE $\freeVerts(ch, exc(ch)) \gets \freeVerts(ch, exc(ch)) + \ell$ 
   \STATE $ex(ch) \gets ex(ch) - \ell$
   \IF{\# matches between $p$ and $ch$ $< \ell$}
   \STATE Remove $(p,ch)$ from $E_M$
   \COMMENT {This creates a new component}
   \ENDIF
   \ENDFOR
   \ENDFOR
   \ENDFOR
   \RETURN $\freeVerts$
\end{algorithmic}
\end{algorithm}

\begin{algorithm}[tb]
   \caption{ConstructClustering$(\mathcal{C}_0, ex , exc, \freeVerts)$}
   \label{alg:stepg}
\begin{algorithmic}
   \STATE Let $\mathcal{C}', R \gets \emptyset$
   \FOR{$C$ in $\mathcal{C}_0$}
   \FOR{$c$ {\bfseries in} $\{\text{red, blue}\}$}
   \STATE Let $f = \freeVerts(C, c)$
   \STATE Let $C_f = \{v\in C:c(v)=c\}$
   \STATE Create the transformed graph $L$ from $C_f$ \COMMENT{Described in the proof of Lemma~\ref{lem:weightloss}}
   \STATE $C' \gets \textsc{MinWeightBisection}(L)$ \COMMENT{Blackbox, returns isolated $C_f$ vertices}
   \STATE $C \gets C\setminus C'$
   \STATE $R \gets R \cup C'$
   \STATE $ex(C) \gets ex(C) - |C'|$
   \ENDFOR
   \ENDFOR
   \FOR{$C\in \mathcal{C}_0$}
   \STATE Let $S\subset R$ such that $|S| = ex(C)$ with no vertices of color $exc(C)$
   \STATE $C = C\cup S$
   \STATE $R \gets R \setminus S$
   \STATE Add $C$ to $\mathcal{C}'$
   \ENDFOR
   \RETURN $\mathcal{C}'$
\end{algorithmic}
\end{algorithm}
\end{appendix}
\fi

\end{document}